\tikzset{
    line/.style={draw, very thick, color=black!50},
    block/.style={rectangle, draw, text centered, node distance=4em},
    onslide/.code args={<#1>#2}{\only<#1>{\pgfkeysalso{#2}}}, 
  }
\pgfplotsset{compat=1.17}
\theoremstyle{thmstyleone}%
\newtheorem{theorem}{Theorem}
\newtheorem{observation}[theorem]{Observation}
\newtheorem{proposition}[theorem]{Proposition}%
\newtheorem{corollary}[theorem]{Corollary}
\newtheorem{lemma}[theorem]{Lemma}
\newtheorem{example}[theorem]{Example}%
\newtheorem{definition}[theorem]{Definition}%
\newtheorem{problem}[theorem]{Problem}
\newtheorem{alg}[theorem]{Algorithm}
\newcommand{\N}{\mathbb{N}}
\newcommand{\R}{\mathbb{R}}
\newcommand{\Z}{\mathbb{Z}}
\let\cline\cmidrule
\begin{document}

\title[Article Title]{NP-Completeness of the Combinatorial Distance Matrix Realisation Problem}


\author*[1,3]{David L. Fairbairn}\email{david.fairbairn@tharsus.co.uk}

\author[2]{George B. Mertzios\footnote[2]{Supported by the EPSRC grant EP/P020372/1.} \ }\email{george.mertzios@durham.ac.uk}

\author[1]{Norbert Peyerimhoff}\email{norbert.peyerimhoff@durham.ac.uk}

\affil[1]{\orgdiv{Department of Mathematical Sciences}, \orgname{Durham University},  \country{UK}}

\affil[2]{\orgdiv{Department of Computer Science}, \orgname{Durham University}, \country{UK}}

\affil[3]{\orgname{Tharsus Limited}, \city{Blyth}, \state{Northumberland}, \country{UK}}


\abstract{
    The \(k\)-\textsc{CombDMR} problem is that of determining whether an \(n \times n\) distance matrix can be realised by \(n\) vertices in some undirected graph with \(n + k\) vertices.
    This problem has a simple solution in the case \(k=0\).
    In this paper we show that this problem is polynomial time solvable for \(k=1\) and \(k=2\).
    Moreover, we provide algorithms to construct such graph realisations by solving appropriate 2-SAT instances.
    In the case where \(k \geq 3\), this problem is NP-complete.
    We show this by a reduction of the \(k\)-colourability problem to the \(k\)-\textsc{CombDMR} problem.
    Finally, we discuss the simpler polynomial time solvable problem of tree realisability for a given distance matrix.
}

\keywords{Distance matrix, graph realisation, NP-completeness, polynomial time algorithm, graph colourability.\vspace{-0.5cm}}



\maketitle

\section{Introduction}\label{sec:intro}

This paper is concerned with the problem of computing combinatorial graph realisations with \(n + k\) vertices, for an \(n \times n\) integer valued distance matrix with a prescribed number \(k\) of additional vertices.
Our natural minimality criterion is to find a graph realisation with the smallest number of vertices.

Graph realisation problems have many practical applications, for example, they appear prominently within the fields of Phylogenetics and Evolutionary Trees \cite{Semple2003} and Network Tomography \cite{Chung2001,herman2012discrete}. For a survey we refer the reader to See Bar-Noy et al. 2021 \cite{Barnoy2021} and references therein.
Various kinds of graph realisation problems have been studied in the literature, most of them are concerned with weighted graphs with a different optimisation criterion, namely, \emph{minimising the sum of the edge weights}.
This problem was first introduced by Hakimi and Yau 1965 \cite{Hakimi1965DistanceMO}.
Amongst their results are a set of necessary and sufficient conditions for realisability of a given matrix and a proof of uniqueness of shortest length tree realisations.
The existence of a weighted graph realisation with minimum total edge weight for any given distance matrix was shown by Dress 1984 \cite{Dress1984}.
Moreover, he proved the existence of an optimum solution with at most \(n^4\) vertices for any \(n \times n\) distance matrix. In his result, the vertices are only the branch points (vertices with \(\ge 3\) incident edges) or leaves (vertices with just one incident edges), since all vertices with precisely \(2\) incident edges can be condensed.
Finding weighted graph realisations having the smallest sum of edge weights is NP-hard. More specifically, Alth\"{o}fer 1988 \cite{Althofer1988} proved that this problem remains NP-hard even in the case where the input distance matrix has integer values (while the edge weights are still real valued). 
Alth\"{o}fer also showed that in the case of integer valued distance matrices, there is always an optimum realisation with rational edge weights \cite{Althofer1988}.
Chung, Garrett and Graham 2001 \cite{Chung2001} considered a weak version of the weighted graph realisation problem, namely, finding optimum graph realisations for which the distance matrix provides a lower bound on the distances of the corresponding \(n\) vertices.
They showed that even this weak version of the problem is NP-hard \cite{Chung2001}.

This paper's focus of finding combinatorial graph realisations for a prescribed integer valued distance matrix with a prescribed number of additional vertices is naturally motivated by any domain where vertices are of limited supply (or expensive) and edges are of inconsequential cost.
As we show in this paper, the problem of finding combinatorial graph realisations is NP-complete for a prescribed number of~3 or more additional vertices (see Theorem \ref{thm:maink}), while it is polynomial time solvable for \(0\), \(1\) or \(2\) additional vertices (see Theorem \ref{thm:nequalk}, Theorem \ref{thm:kp1} and Theorem \ref{thm:kp2} respectively).

In this paper we use the notation \([n] = \{1,2,\dots,n\}\) for any \(n \in \N\) and denote the set of all non-negative integers by \(\N_0\) (that is \(\N_0 = \N \cup \{0\}\)).
First we introduce the following problem for every integer $k\in \mathbb{N}_0$.
\medskip

\begin{problem}
    \label{prob:drp}
    \textsc{$k$-Combinatorial Distance Matrix Realisation Problem \\($k$-CombDMR)}\\
    \emph{Input:} An $n \times n$ matrix $D$ with non-negative integer values. \\    
    \emph{Question:} Does there exist a simple (unweighted) graph $G=(V,E)$ with $|V| \leq n+k$ and an injective mapping $\Phi: [n] \rightarrow V$ such that the shortest-path distance function $d$ in $G$ satisfies
    \begin{equation*}
        \label{eq:drp-new}
        d(\Phi(i), \Phi(j)) = D_{ij}
    \end{equation*}
        for all \(i,j \in [n]\)?
\end{problem}

\medskip

We call such a pair \((G, \Phi)\) a {\emph{graph realisation}} of \(D\).
Given an $n\times n$ matrix $D$, any graph realisation \((G,\Phi)\) which has the smallest number of vertices is called a {\emph{minimum graph realisation}} of \(D\).

\begin{example}
    \label{ex:drp}
    Two possible graph realisations of the following matrix \(D\), are given in Figure \ref{fig:graphrealisations} with $n=3$, while $k=3$ and $k=1$, respectively.

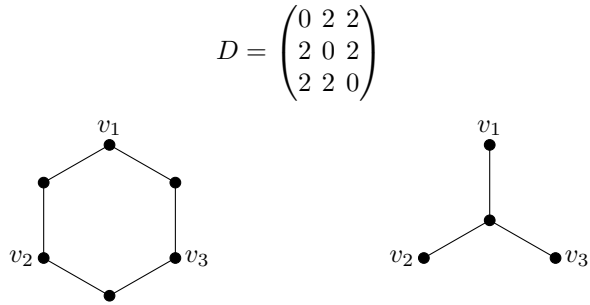
\begin{figure}[h!]
    \begin{equation*}
        \label{eq:drpex}
        D = \begin{pmatrix}
           0 & 2 & 2 \\
           2 & 0 & 2 \\
           2 & 2 & 0 
        \end{pmatrix}
    \end{equation*}
    \begin{center}
\begin{tikzpicture}[scale=1]

    \draw (30:1) -- (90:1) -- (150:1) -- (210:1) -- (270:1) -- (330:1) -- cycle;

    \filldraw[black] (30:1) circle (2pt) ;
    \filldraw[black] (90:1) circle (2pt) node[anchor=south] {$v_1$};
    \filldraw[black] (150:1) circle (2pt) ;
    \filldraw[black] (210:1) circle (2pt) node[anchor=east] {$v_2$};
    \filldraw[black] (270:1) circle (2pt) ;
    \filldraw[black] (330:1) circle (2pt) node[anchor=west] {$v_3$};

    \begin{scope}[xshift=5cm]
    \draw (0,0) -- (90:1);
    \draw (0,0) -- (210:1);
    \draw (0,0) -- (330:1);
    \filldraw[black] (0,0) circle (2pt) node[anchor=north] {};
    \filldraw[black] (90:1) circle (2pt) node[anchor=south] {$v_1$};
    \filldraw[black] (210:1) circle (2pt) node[anchor=east] {$v_2$};
    \filldraw[black] (330:1) circle (2pt) node[anchor=west] {$v_3$};
    \end{scope}
    \end{tikzpicture}
    \caption{Two graph realisations of the above matrix \(D\), where \(\Phi(i) = v_i\) for \(i \in [3]\), while $k=3$ and $k=1$ in the left and the right realisation, respectively. The right realisation is a \emph{minimum}.}
    \label{fig:graphrealisations}
\end{center}
\end{figure}

\end{example}

\medskip

It is important to note that \(k\)-\textsc{CombDMR} is distinct from the weighted graph realisation problem.
Within the weighted graph realisation problem the edges are equipped with positive real valued weights (their lengths) with the aim to minimise the sum of the edge weights of the graph realisation, whereas in \(k\)-\textsc{CombDMR} we are only concerned with minimising the number of vertices in the graph realisation.
Take for instance the distance matrix \(D\) in \eqref{eq:drpcomp1} and optimum solutions within these two problems as shown in Figure \ref{fig:graphrealisationscomp1}.
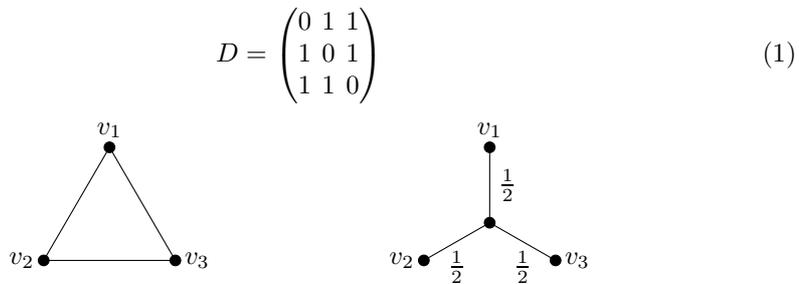
\begin{figure}[h!]
    \begin{equation}
        \label{eq:drpcomp1}
        D = \begin{pmatrix}
           0 & 1 & 1 \\
           1 & 0 & 1 \\
           1 & 1 & 0 
        \end{pmatrix}
    \end{equation}
    \begin{center}
\begin{tikzpicture}[scale=1]

    \draw (90:1) -- (210:1) -- (330:1) -- cycle;

    \filldraw[black] (90:1) circle (2pt) node[anchor=south] {$v_1$};
    \filldraw[black] (210:1) circle (2pt) node[anchor=east] {$v_2$};
    \filldraw[black] (330:1) circle (2pt) node[anchor=west] {$v_3$};

    \begin{scope}[xshift=5cm]
    \draw (0,0) -- node[right] {$\frac{1}{2}$} (90:1);
    \draw (0,0) -- node[below] {$\frac{1}{2}$} (210:1);    
    \draw (0,0) -- node[below] {$\frac{1}{2}$} (330:1);
    \filldraw[black] (0,0) circle (2pt) node[anchor=north] {};
    \filldraw[black] (90:1) circle (2pt) node[anchor=south] {$v_1$};
    \filldraw[black] (210:1) circle (2pt) node[anchor=east] {$v_2$};
    \filldraw[black] (330:1) circle (2pt) node[anchor=west] {$v_3$};
    \end{scope}
    \end{tikzpicture}
    \caption{Minimum graph realisations of \(D\) as in \eqref{eq:drpcomp1}, for \(k\)-\textsc{CombDMR} (Left) and for the weighted graph realisation problem (Right).}
    \label{fig:graphrealisationscomp1}
\end{center}
\end{figure}

Figure \ref{fig:graphrealisationscomp1} shows that optimum solutions for \(k\)-\textsc{CombDMR} and for the weighted graph realisation problem can differ significantly. 
An optimum solution for \(k\)-\textsc{CombDMR} may not be trivially transformed into an optimum solution for the weighted graph realisation problem, and vice versa.
The reader may ask what happens if we consider the weighted graph realisation problem with the additional constraint that the weights must be integers -- could we transform any optimum solution for this problem into an optimum solution for \(k\)-\textsc{CombDMR} by replacing weighted edges with paths of length equal to the weight?
Figure \ref{fig:graphrealisationscomp2} shows that this is not always the case, as the weighted graph realisation problem with integer weights may have multiple solutions, some of which do not have a corresponding optimum solution for \(k\)-\textsc{CombDMR} under this transformation.
Therefore, \(k\)-\textsc{CombDMR} is also a distinct problem from the weighted graph realisation problem with integer weights.

\begin{figure}[h!]
    \begin{equation}
        \label{eq:drpcomp2}
        D = \begin{pmatrix}
        0 & 2 & 2 & 2 & 1 & 3 & 3 & 1 \\
        2 & 0 & 2 & 2 & 3 & 1 & 3 & 1 \\
        2 & 2 & 0 & 2 & 3 & 1 & 1 & 3 \\
        2 & 2 & 2 & 0 & 1 & 3 & 1 & 3 \\
        1 & 3 & 3 & 1 & 0 & 4 & 2 & 2 \\
        3 & 1 & 1 & 3 & 4 & 0 & 2 & 2 \\    
        3 & 3 & 1 & 1 & 2 & 2 & 0 & 4 \\
        1 & 1 & 3 & 3 & 2 & 2 & 4 & 0
        \end{pmatrix}
    \end{equation}
    \begin{center}
\begin{tikzpicture}[scale=1]
    
        \draw (1,3) -- (2,3) -- (3,3) -- (3,2) -- (3,1) -- (2,1) -- (1,1) -- (1,2) -- cycle;
        \draw (1,1) -- (0,4) -- (3,3);
        \draw (1,3) -- (0,0) -- (3,1);
    
        \filldraw[black] (1,3) circle (2pt) node[anchor=south] {$v_1$};
        \filldraw[black] (2,3) circle (2pt) node[anchor=south] {};
        \filldraw[black] (3,3) circle (2pt) node[anchor=south] {$v_3$};

        \filldraw[black] (1,1) circle (2pt) node[anchor=north] {$v_2$};
        \filldraw[black] (2,1) circle (2pt) node[anchor=north] {};
        \filldraw[black] (3,1) circle (2pt) node[anchor=north] {$v_4$};
    
        \filldraw[black] (1,2) circle (2pt) node[anchor=west] {$v_8$};
        \filldraw[black] (3,2) circle (2pt) node[anchor=east] {$v_7$};

        \filldraw[black] (0,4) circle (2pt) node[anchor=south] {$v_6$};
        \filldraw[black] (0,0) circle (2pt) node[anchor=north] {$v_5$};

    \begin{scope}[xshift=4cm]
        \draw (1,3) -- (2,2) -- (3,3) -- (3,2) -- (3,1) -- (2,2) -- (1,1) -- (1,2) -- cycle;
        \draw (1,1) -- (0,4) -- (3,3);
        \draw (1,3) -- (0,0) -- (3,1);
    
        \filldraw[black] (1,3) circle (2pt) node[anchor=south] {$v_1$};
        \filldraw[black] (2,2) circle (2pt) node[anchor=south] {};
        \filldraw[black] (3,3) circle (2pt) node[anchor=south] {$v_3$};

        \filldraw[black] (1,1) circle (2pt) node[anchor=north] {$v_2$};
        \filldraw[black] (3,1) circle (2pt) node[anchor=north] {$v_4$};
    
        \filldraw[black] (1,2) circle (2pt) node[anchor=west] {$v_8$};
        \filldraw[black] (3,2) circle (2pt) node[anchor=east] {$v_7$};

        \filldraw[black] (0,4) circle (2pt) node[anchor=south] {$v_6$};
        \filldraw[black] (0,0) circle (2pt) node[anchor=north] {$v_5$};

    \end{scope}
        \end{tikzpicture}
        \caption{Optimum graph realisations of \(D\) as in \eqref{eq:drpcomp2}, with \(\Phi(i)=v_i, i \in [8]\) for the weighted graph realisation problem with integer weights \(w(e) = 1\) (Left and Right). Only the right graph realisation is optimum for the combinatorial distance realisation problem.}
        \label{fig:graphrealisationscomp2}
    \end{center}
\end{figure}
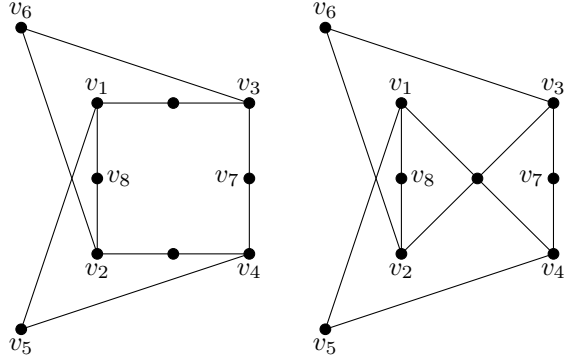
As it is not immediately apparent that the graph realisations in Figure \ref{fig:graphrealisationscomp2} are minimum with respect to their respective optimality criteria, we will now provide a proof of this statement.
\medskip
\begin{lemma}
    For the distance matrix \(D\) in \eqref{eq:drpcomp2}, the minimum sum of integer edge weights for a weighted graph realisation of \(D\) is 12 and the minimum number of vertices for a combinatorial graph realisation of \(D\) is 9.
\end{lemma}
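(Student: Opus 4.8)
The plan is to treat the two assertions separately, since in each case the realisation drawn in Figure~\ref{fig:graphrealisationscomp2} already supplies the matching upper bound: each picture has exactly $12$ unit-weight edges, and the right-hand picture has exactly $9$ vertices. The work therefore lies in the two lower bounds, and both rest on the same preliminary observation. If $D_{ij}=1$, then in any realisation the points $\Phi(i),\Phi(j)$ must be joined by a single edge (of weight $1$ in the integer-weighted case, since an integer-weighted path realising distance $1$ can only be one edge of weight $1$; an ordinary edge in the combinatorial case). Reading the unit entries off $D$ shows that these forced edges are exactly $\{1\text{-}5,5\text{-}4,4\text{-}7,7\text{-}3,3\text{-}6,6\text{-}2,2\text{-}8,8\text{-}1\}$, i.e.\ they form the $8$-cycle $C:\ 1\,5\,4\,7\,3\,6\,2\,8$. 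A direct check shows that in $C$ alone one has $d_C(1,3)=d_C(2,4)=4$, whereas $D_{13}=D_{24}=2$; this discrepancy drives both lower bounds.

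For the weighted bound I would argue as follows. The cycle $C$ contributes weight $8$, so it suffices to show that the remaining edges $E'=E(G)\setminus E(C)$ carry total weight at least $4$. Pick a shortest $1$--$3$ path $P$ and a shortest $2$--$4$ path $Q$; each has weight $2$ and hence at most two edges. First I would show $P\subseteq E'$: if $P$ is a single edge it joins $1$ and $3$ and so is not in $C$; if $P=1\,a\,3$ with two unit edges, then $a$ cannot be a terminal, because each candidate would force a distance $1$ contradicting an entry equal to $3$ (e.g.\ $D_{53}=D_{83}=D_{16}=D_{17}=3$), while $a\in\{2,4\}$ would force $d(1,a)=1\neq 2$; hence $a$ is an auxiliary vertex and both edges lie in $E'$. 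The same reasoning gives $Q\subseteq E'$. The key remaining step is that $P$ and $Q$ are edge-disjoint: a shared edge would force one of the forbidden terminal identifications above, while a common interior vertex $a$ merely contributes the four \emph{distinct} edges $1a,a3,2a,a4$. Edge-disjointness then yields $w(E')\ge w(P)+w(Q)=4$, and therefore $w(G)\ge 12$.

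For the combinatorial bound I would invoke the $k=0$ solution. A realisation on exactly $n=8$ vertices must be a graph with vertex set $\Phi([8])$, and there adjacency is forced to coincide with the unit entries of $D$, so the graph can only be $C$ itself. Since $d_C(1,3)=4\neq 2=D_{13}$, no $8$-vertex realisation exists, and the $9$-vertex graph of Figure~\ref{fig:graphrealisationscomp2} is therefore a minimum combinatorial realisation.

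The main obstacle, and the only place needing genuine care, is the containment/edge-disjointness step in the weighted bound: one must rule out every way in which the two length-$2$ shortcuts could reuse a cycle edge or share an edge with each other, including the degenerate case where a single weight-$2$ edge realises a distance-$2$ pair and the case where $P$ and $Q$ pass through a common auxiliary vertex. Once the short case analysis on the at most two edges of $P$ and $Q$ is in place---driven entirely by the entries of $D$ equal to $3$---the inequality $w(E')\ge 4$ is immediate and the proof closes.
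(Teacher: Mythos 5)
Your proof is correct, and it is actually more careful than the paper's own. For the weighted lower bound you follow the same overall strategy as the paper: the eight unit entries force the $8$-cycle $C$ of weight $8$, and the two distance-$2$ requirements between $v_1,v_3$ and $v_2,v_4$ force at least $4$ more weight. The paper, however, simply asserts that these distances ``must be realised by paths of length 2 utilising none of these necessary edges'' and that this ``requires a total additional sum of edge weights of 4''; your explicit case analysis (a single weight-$2$ edge versus two unit edges through a vertex $a$, the exclusion of $a\in\{2,4,5,6,7,8\}$ via the entries $D_{53}=D_{83}=D_{16}=D_{17}=3$ and $D_{12}=D_{14}=2$, and the edge-disjointness of $P$ and $Q$) is precisely what is needed to make that assertion rigorous, including the degenerate configurations the paper passes over silently. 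For the vertex lower bound you take a genuinely different route: the paper continues its local argument (the four additional edges must be incident to at least one new vertex), whereas you invoke the $0$-\textsc{CombDMR} characterisation --- any realisation on exactly $8$ vertices must coincide with $G_D=C$ (Proposition \ref{prop:g_d_subgraph}, Theorem \ref{thm:nequalk}), and $d_C(1,3)=4\neq 2=D_{13}$ rules this out. This is cleaner and avoids repeating the path analysis in the unweighted setting; its only cost is that it leans on machinery introduced in Section \ref{sec:foundationalresults}, after the lemma appears in the paper, though that machinery is classical (Hakimi--Yau) and logically independent of the lemma, so there is no circularity.
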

\begin{proof}
    The entries of \(D\) which are equal to 1 necessarily correspond to edges in any graph realisation of \(D\).
    This places a lower bound on the sum of edge weights of 8 and the number of vertices are trivially lower bounded by 8.
    These necessary edges do not realise the distance of 2 between \(v_1, v_3\) and \(v_2, v_4\) and therefore additional edges are required.
    Furthermore, it is clear that these distances of 2 must be realised by paths of length 2 utilising none of these necessary edges (for otherwise certain required distances would be violated).
    Therefore, we need to attach an additional edge to each of the vertices \(v_1, v_2, v_3, v_4\) in order to realise the distance of 2 between \(v_1, v_3\) and \(v_2, v_4\).
    This requires a total additional sum of edge weights of 4 meaning we need at least 12 edge weight in total as shown in Figure \ref{fig:graphrealisationscomp2}.
    In the case of the combinatorial distance realisation problem, we require these 4 additional edges to be adjacent to at least 1 additional vertex, hence the minimum number of vertices is 9 as shown in Figure \ref{fig:graphrealisationscomp2} (right).
\end{proof}
\medskip

\noindent\textbf{Our results.}
We introduce notions and foundational results in Section \ref{sec:foundationalresults}.
Moreover, we discuss the straightforward polynomial time solution of \(0\)-\textsc{CombDMR}, due to Hakimi and Yau 1965 \cite{Hakimi1965DistanceMO}. 
In Section \ref{sec:kplus1} we provide a polynomial time algorithm to solve \(1\)-\textsc{CombDMR}, by solving an appropriate 2-SAT instance (see Algorithm \ref{alg:kp1} and Theorem \ref{thm:kp1}).
We then apply a similar construction of two appropriate 2-SAT instances in Section \ref{sec:kplus2} to provide a polynomial time algorithm for \(2\)-\textsc{CombDMR} (see Algorithm \ref{alg:kp2} and Theorem \ref{thm:kp2}).
Our main result of the paper is presented in Section \ref{sec:kplus3} and states that \(k\)-\textsc{CombDMR} is NP-complete for all fixed \(k \geq 3\) (see Theorem \ref{thm:maink}).
This is achieved by a reduction from the \(k\)-colourability problem.
Finally, in Section \ref{sec:tree}, on tree realisations of distance matrices, namely, the \textsc{TreeCombDMR} problem.
We show that the polynomial time algorithm of Culberson and Rudnicki 1989 \cite{Culberson1989} for weighted tree realisations of distance matrices can also produce solutions for \textsc{TreeCombDMR} through a simple modification (see Algorithm \ref{alg:cr1989Ours}).
Note that tree realisation problems are of particular interest in the field of Phylogenetics (see Semple and Steel 2003 \cite{Semple2003}).
Finally, we conclude and provide directions for future research in Section \ref{sec:conclusions}.

\section{Notions and Foundational Results} \label{sec:foundationalresults}

We begin by identifying the necessary and sufficient conditions for an input matrix \(D\) to admit at least one graph realisation.
\medskip
\begin{definition}[Distance matrix]
    \label{def:distancematrix}
    Let \(D\) be an \(n \times n\) matrix with non-negative integer valued entries.
    We call \(D\) a {\emph{distance matrix}} if it satisfies the following properties:
    \begin{itemize}
        \item[(i)] All diagonal entries of \(D\) are zero and all non-diagonal entries are strictly positive.
        \item[(ii)] \(D\) is a symmetric matrix.
        \item[(iii)] For all \(i,j,w \in [n]\), we have \[ D_{iw} + D_{wj} \ge D_{ij}. \]
    \end{itemize}
\end{definition}
\medskip

This definition gives rise to the following result.
\medskip
\begin{proposition}
    \label{prop:graphrealisationmetric}
    Let \(D\) be an \(n \times n\) matrix with non-negative integer valued entries. 
    \(D\) admits at least one graph realisation \((G, \Phi)\) if and only if \(D\) is a distance matrix.
\end{proposition}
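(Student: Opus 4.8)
The plan is to prove both directions of the biconditional. The forward direction (graph realisation implies distance matrix) is the easy one: I would take a realisation $(G,\Phi)$ and verify that the shortest-path metric, pulled back via $\Phi$, automatically satisfies the three axioms of Definition \ref{def:distancematrix}. Concretely, since $\Phi$ is injective and all distinct vertices in a connected graph have positive finite distance, property (i) follows; symmetry of shortest-path distance in an undirected graph gives (ii); and the triangle inequality $D_{iw}+D_{wj}\ge D_{ij}$ is inherited directly from the triangle inequality for shortest-path distances, which holds because concatenating a shortest path from $\Phi(i)$ to $\Phi(w)$ with one from $\Phi(w)$ to $\Phi(j)$ yields a walk from $\Phi(i)$ to $\Phi(j)$ of length $D_{iw}+D_{wj}$. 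I should also note that $G$ must be connected on the image of $\Phi$ for the distances to be finite, which is implicit in the problem statement.

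The reverse direction (distance matrix implies existence of a realisation) is the substantive part and the main obstacle. The cleanest approach is to give an explicit construction. I would build a graph $G$ whose vertex set is exactly $[n]$ (so in fact $k=0$ suffices, which also foreshadows the $0$-\textsc{CombDMR} discussion), and whose edges are precisely the pairs $\{i,j\}$ with $D_{ij}=1$. Setting $\Phi$ to be the identity, the task reduces to showing that the shortest-path distance $d_G(i,j)$ in this graph equals $D_{ij}$ for every pair $i,j$.

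To establish $d_G = D$ I would prove the two inequalities separately. For $d_G(i,j)\le D_{ij}$, I would argue by induction on $D_{ij}$ that there is always a walk of length $D_{ij}$: the key combinatorial lemma is that whenever $D_{ij}=m\ge 2$, there exists an intermediate index $w$ with $D_{iw}=1$ and $D_{wj}=m-1$, i.e. a neighbour of $i$ that lies one step closer to $j$. This "existence of a next step" is exactly where the integrality of the entries together with the triangle inequality must be combined carefully, and proving it is the crux of the argument. For the opposite inequality $d_G(i,j)\ge D_{ij}$, I would take any shortest path $i=u_0,u_1,\dots,u_\ell=j$ in $G$, note that each edge means $D_{u_{t}u_{t+1}}=1$, and apply the triangle inequality (iii) repeatedly along the path to get $D_{ij}\le \sum_t D_{u_t u_{t+1}} = \ell = d_G(i,j)$.

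The hard part, as flagged, is the "next-step" lemma needed for $d_G\le D$: given $D_{ij}=m\ge 2$, I must exhibit a witness $w$ with $D_{iw}=1$ and $D_{wj}=m-1$, and it is not obvious a priori that such a unit-distance neighbour exists purely from the metric axioms. I expect to resolve this by a minimal-counterexample or extremal argument — for instance, among all indices $w$ with $D_{wj}=m-1$ choosing one minimising $D_{iw}$, and then deriving a contradiction from the triangle inequality if $D_{iw}>1$ — and this is the single place where all three properties of a distance matrix, especially integrality, are genuinely used. Once this lemma is in hand, both inequalities close and the construction yields the desired realisation, completing the proof.
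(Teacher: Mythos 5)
Your forward direction is fine, but the reverse direction contains a fatal gap: the ``next-step'' lemma you flag as the crux is simply false, and no extremal or minimal-counterexample argument can rescue it. Consider the $3\times 3$ matrix with all off-diagonal entries equal to $2$ (this is exactly the matrix of Example \ref{ex:drp} in the paper). It satisfies all three axioms of Definition \ref{def:distancematrix}, yet it has no entry equal to $1$ at all, so for $D_{ij}=2$ there is no witness $w$ with $D_{iw}=1$ and $D_{wj}=1$. Worse, your constructed graph $G$ (edges precisely the pairs with $D_{ij}=1$) is edgeless for this matrix, hence disconnected, and $d_G$ is identically $\infty$ off the diagonal. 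The deeper issue is that your plan tries to realise $D$ on exactly $n$ vertices, i.e.\ with $k=0$; but realisability with $|V|=n$ is a genuinely stronger condition than being a distance matrix --- it is exactly the condition $D^{(1)}=D$ of Theorem \ref{thm:nequalk} (Hakimi--Yau), and the matrix above violates it (its minimum realisation needs one extra vertex, as Figure \ref{fig:graphrealisations} shows).

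The paper's proof avoids this by using additional vertices freely, which the proposition permits since it only asks for \emph{some} graph realisation. One connects each pair $v_i,v_j$ by a fresh path of length $D_{ij}$ through new interior vertices, with no two such paths sharing interior vertices. The lower bound $d(v_i,v_j)\ge D_{ij}$ then follows because any $v_i$--$v_j$ path in this graph is a concatenation of these elementary paths, whose total length is at least $D_{ij}$ by repeated application of the triangle inequality (iii) --- this is the same telescoping argument you use for your lower bound, and it is the only place the metric axioms are needed. The upper bound is immediate from the elementary path joining $v_i$ to $v_j$ directly. Note that integrality plays no structural role here beyond making path lengths well defined; your intuition that integrality must be ``genuinely used'' to produce unit steps was a symptom of attempting the too-strong $k=0$ statement.
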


\begin{proof}
    It is clear that (i), (ii) and (iii) need to be satisfied by any graph realisation \((G, \Phi)\) of \(D\).
    Conversely, if (i), (ii) and (iii) are satisfied by \(D\) then we can construct a graph realisation \((G, \Phi)\) of \(D\) as follows.
    Begin with with \(n\) isolated vertices \(v_1,\dots,v_n\).
    Let \(\Phi(i) = v_i\) for all \(i \in [n]\).
    The \(D\)-realising graph $G$ is then constructed by connecting any pair $v_i,v_j$, $1 \le i < j \le n$ by a path of length $D_{ij}$ of new interior vertices, such that, no two such paths have common interior vertices (with \(D_{ij}=1\) meaning that the vertices are adjacent).
    We call such paths {\emph{elementary paths}} of the graph \(G\).
    Since any path \(\pi\) from \(v_i\) to \(v_j\) in \(G\) must be a concatenation of elementary paths, condition (iii) garantees that this concatenation is of length greater than or equal to \(D_{ij}\).
    This shows that \(d(v_i, v_j) \geq D_{ij}\).
    Furthermore, we know that there exists an elementary path \(\pi\) from \(v_i\) to \(v_j\) in \(G\) of length \(l(\pi) = D_{ij}\).
    Therefore, \(d(v_i, v_j) = D_{ij}\) for all \(i,j \in [n]\) and so \((G, \Phi)\) is a graph realisation of \(D\).
\end{proof}
\medskip

As a consequence of Proposition \ref{prop:graphrealisationmetric}, we will assume that \(D\) is a distance matrix with integer valued entries for all instances of \(k\)-\textsc{CombDMR}.
As the proof of Proposition \ref{prop:graphrealisationmetric} shows, we can always find a graph realisation of a distance matrix \(D\) with some number of additional vertices.
Another immediate consequence of the above construction is the following upper bound on the number of vertices for the existence of a graph realisation of \(D\).

\medskip
\begin{proposition}
    \label{lem:drealisinggraphbasic}
    Let \(D\) be an $n \times n$ distance matrix. 
    Then there exists a graph realisation $(G=(V,E),\Phi)$ of $D$ with 
    \[ |V| \leq n + \sum_{1 \le i < j \le n} (D_{ij}-1).\]
\end{proposition}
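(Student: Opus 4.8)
The plan is to read off the bound directly from the explicit construction used in the proof of Proposition~\ref{prop:graphrealisationmetric}. There we built a graph realisation $(G,\Phi)$ of $D$ by starting from $n$ isolated vertices $v_1,\dots,v_n$ with $\Phi(i)=v_i$, and then joining each pair $v_i,v_j$ with $1\le i<j\le n$ by an elementary path of length $D_{ij}$, arranged so that distinct elementary paths share no interior vertices. Since that construction already yields a valid graph realisation, it suffices to count the vertices it produces.

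First I would count the number of interior vertices contributed by a single elementary path. A path of length $D_{ij}$ joining two already-present endpoints $v_i$ and $v_j$ consists of $D_{ij}+1$ vertices in total, of which $2$ are the endpoints $v_i,v_j$; hence it introduces exactly $D_{ij}-1$ new interior vertices. When $D_{ij}=1$ the endpoints are simply made adjacent and no new vertex is added, consistently with $D_{ij}-1=0$.

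Next I would sum over all pairs. Because the elementary paths are chosen to have pairwise disjoint interiors, the interior vertices of different paths are distinct, so no vertex is counted twice and the total number of vertices added to the original $n$ is $\sum_{1\le i<j\le n}(D_{ij}-1)$. Adding back the $n$ vertices $v_1,\dots,v_n$ gives
\[
|V| = n + \sum_{1\le i<j\le n}(D_{ij}-1),
\]
which in particular establishes the claimed upper bound.

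I do not anticipate any genuine obstacle here, as the statement is an immediate consequence of the construction in Proposition~\ref{prop:graphrealisationmetric}. The only point requiring a little care is the bookkeeping that each elementary path of length $D_{ij}$ contributes $D_{ij}-1$ interior vertices rather than $D_{ij}+1$, together with the observation that the pairwise disjointness of the path interiors rules out any double-counting.
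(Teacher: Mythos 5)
Your proposal is correct and matches the paper exactly: the paper presents this proposition as an immediate consequence of the construction in Proposition~\ref{prop:graphrealisationmetric} (elementary paths with pairwise disjoint interiors, each contributing $D_{ij}-1$ new vertices), which is precisely the counting argument you give. Nothing is missing.
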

\medskip

We now seek to improve the result of Proposition \ref{lem:drealisinggraphbasic}, and in doing so, we introduce the following weighted graph.
\medskip
\begin{definition}[\(q\)-skeleton]
    \label{def:Gk}
    Let \(D\) be an \(n \times n\) distance matrix and \(q \in \N\).
    The {\emph{q-skeleton of D}} is the weighted graph \(G^q=(V^q, E^q, w)\) with vertices \(V^q = [n]\) and edges 
    \[E^q = \{\{i,j\} \in [n] \times [n] \mid (i < j) \land (D_{ij} \leq q)\},\] 
    that is, \(G^q\) has an edge between \(i\) and \(j\) if and only if \(D_{ij} \leq q\).
    Additionally, let the edge weights \(w:E^q \to \N\) be given by \[w(i,j) = D_{ij}, \quad \{i,j\} \in E^q.\]
    These edge weights are understood to be the lengths of the corresponding edges.
    Let \(d_{G^q}:V \times V \to \N_0 \cup \{\infty\}\) be the associated distance function of \(G^q\), that is, \(d_{G^q}(i,j)\) is the length of the shortest path between \(i\) and \(j\) in \(G^q\) and equal to \(\infty\) if no such path exists.
    The \(n \times n\) matrix \(D^{(q)}\), given by \(D^{(q)}_{ij} = d_{G^q}(i,j)\), is called the {\emph{distance matrix of the q-skeleton of D}}.
\end{definition}
\medskip
Of particular importance is the following fact:
\[D^{(q)}_{ij} \begin{cases} = D_{ij} & \, \text{if } D_{ij} \leq q, \\ \geq D_{ij} & \text{if } D_{ij} > q. \end{cases}\]
Notice, when we have \(D^{(q)} = D\) for some \(q\), then we can replace each edge \(\{i,j\} \in E^q\) with an elementary path of length \(D_{ij}\) in \(G^q\) to obtain a graph realisation of \(D\). 
Furthermore, \(D^{(q)}\) can be computed in polynomial time by any weighted all-pairs shortest-paths (APSP) algorithm (e.g. the Floyd-Warshall algorithm \cite[pp. 570–576]{Cormen1990}).
In fact, we have the following ordering of the matrices \(D^{(q)}\).

\begin{lemma}
    \label{lem:skeletonordering}
    Let \(D\) be an \(n \times n\) distance matrix and \(m = \max\{ D_{ij} : 1 \leq i < j \leq n \}\).
    Let \(D^{(q)}\) be the distance matrix of the q-skeleton of \(D\).
    Then we have,
    \[D^{(1)}_{ij} \geq D^{(2)}_{ij} \geq \cdots \geq D^{(m)}_{ij} = D_{ij} \quad \text{for all } i,j \in [n].\]
\end{lemma}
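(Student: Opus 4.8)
The plan is to prove the chain of inequalities in Lemma~\ref{lem:skeletonordering} by exploiting the fact that the $q$-skeletons form a nested sequence of graphs as $q$ increases, and then to verify that the final skeleton $G^m$ already recovers $D$ exactly. The key structural observation is monotonicity of the edge sets: since $E^q = \{\{i,j\} : D_{ij} \le q\}$, we have $E^1 \subseteq E^2 \subseteq \cdots \subseteq E^m$, and whenever an edge $\{i,j\}$ belongs to both $E^q$ and $E^{q+1}$ its weight $w(i,j) = D_{ij}$ is the same in both graphs. Thus $G^{q+1}$ is obtained from $G^q$ by adding edges (never deleting or reweighting), so $G^q$ is a subgraph of $G^{q+1}$ with consistent weights.

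First I would establish the single-step inequality $D^{(q)}_{ij} \ge D^{(q+1)}_{ij}$ for each $q$ and all $i,j \in [n]$. This follows directly from the subgraph relation: any path realising $d_{G^q}(i,j)$ in $G^q$ is also a path in $G^{q+1}$ of the same length, so the shortest-path distance can only decrease (or stay equal) when we pass to the larger edge set $E^{q+1}$. Hence $d_{G^{q+1}}(i,j) \le d_{G^q}(i,j)$, which is exactly $D^{(q+1)}_{ij} \le D^{(q)}_{ij}$. Chaining these single-step inequalities over $q = 1, \dots, m-1$ yields the full decreasing chain.

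It then remains to show the final equality $D^{(m)}_{ij} = D_{ij}$ for all $i,j$. Here I would use the boxed fact stated just before the lemma: $D^{(q)}_{ij} = D_{ij}$ whenever $D_{ij} \le q$. Since $m = \max\{D_{ij} : 1 \le i < j \le n\}$, every off-diagonal entry satisfies $D_{ij} \le m$, and the diagonal entries are zero, so $D_{ij} \le m$ holds for all $i,j$. Applying the boxed fact with $q = m$ gives $D^{(m)}_{ij} = D_{ij}$ for every pair, completing the proof.

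I do not anticipate a genuine obstacle here, as the statement is essentially bookkeeping once monotonicity is in place; the only point requiring mild care is justifying the boxed fact $D^{(q)}_{ij} = D_{ij}$ when $D_{ij} \le q$, which is what makes the final equality work. For completeness I would note why this holds: when $D_{ij} \le q$ the edge $\{i,j\}$ itself lies in $E^q$ with weight $D_{ij}$, so $d_{G^q}(i,j) \le D_{ij}$; conversely the triangle inequality~(iii) for $D$ guarantees that no concatenation of skeleton edges can produce a shorter route, since each edge weight equals the corresponding $D$-entry and $D$ satisfies the triangle inequality, giving $d_{G^q}(i,j) \ge D_{ij}$. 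These two bounds pin down $D^{(q)}_{ij} = D_{ij}$, and with $q = m$ the chain terminates at $D$ as claimed.
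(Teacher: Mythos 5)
Your proof is correct and follows essentially the same route as the paper: monotonicity of the edge sets $E^1 \subseteq E^2 \subseteq \cdots \subseteq E^m$ gives the decreasing chain, and the equality $D^{(m)} = D$ comes from the fact that every pair is directly joined by an edge of weight $D_{ij}$ in $G^m$ while the triangle inequality rules out shorter routes. The only cosmetic difference is that the paper justifies the final equality by citing the construction in Proposition~\ref{prop:graphrealisationmetric}, whereas you verify the underlying fact $D^{(q)}_{ij} = D_{ij}$ for $D_{ij} \le q$ directly; the argument is the same.
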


\begin{proof}
    In the transition from \(G^q\) to \(G^{q+1}\) the only possible change is that new edges are added to \(G^q\), decreasing the distance between vertices.
    Therefore, \(D^{(q+1)}_{ij} \leq D^{(q)}_{ij}\) for all \(i,j \in [n]\).
    Moreover, by the construction in the proof of Proposition \ref{prop:graphrealisationmetric} we have \(D^{(m)}_{ij} = D_{ij}\) for all \(i,j \in [n]\).
\end{proof}

Now let \(q_0\) denote the smallest \(q \in \N\) such that \(D^{(q)} = D\). 
Then we have the following improvement on Proposition \ref{lem:drealisinggraphbasic}.
\medskip
\begin{proposition}
    \label{lem:drealisinggraphbasic2}
    Let \(D\) be an $n \times n$ distance matrix and \(q_0 \in \N\) be the smallest \(q \in \N\) such that \(D^{(q)} = D\), where \(D^{(q)}\) the distance matrix of the q-skeleton of \(D\).
    Then there exists a graph realisation $(G=(V,E),\Phi)$ of \(D\) with 
    \begin{equation}
        \label{eq:drpminlowerbound2}
        |V| \leq n + \sum_{(1 \le i < j \le n) \land (2 \leq D_{ij} \leq q_0)} (D_{ij}-1).
    \end{equation}
\end{proposition}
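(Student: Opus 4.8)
The plan is to build the realisation directly from the $q_0$-skeleton $G^{q_0}$ rather than from the full construction used in the proof of Proposition~\ref{prop:graphrealisationmetric}, thereby avoiding the creation of interior vertices for pairs $\{i,j\}$ with $D_{ij} > q_0$. By the definition of $q_0$ we have $D^{(q_0)} = D$, so the weighted distance function of $G^{q_0}$ already realises $D$ on the vertex set $[n]$; that is, $d_{G^{q_0}}(i,j) = D^{(q_0)}_{ij} = D_{ij}$ for all $i,j \in [n]$. (Such a $q_0$ exists and satisfies $q_0 \le m$ by Lemma~\ref{lem:skeletonordering}.)

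First I would pass from the weighted graph $G^{q_0}$ to an unweighted graph $G$ by subdividing each edge $\{i,j\} \in E^{q_0}$ into an elementary path of length $w(i,j) = D_{ij}$, introducing $D_{ij}-1$ fresh interior vertices, and insisting (exactly as in the proof of Proposition~\ref{prop:graphrealisationmetric}) that no two distinct elementary paths share an interior vertex. Edges with $D_{ij} = 1$ contribute no new vertices and remain single edges. I set $\Phi(i) = v_i$ for the original vertices $i \in [n]$.

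Next I would verify that $(G,\Phi)$ is a graph realisation of $D$. The key observation is that, because interior vertices are never shared, every path in $G$ between two original vertices $v_i$ and $v_j$ decomposes into a concatenation of elementary paths whose total length equals the weighted length of the corresponding walk in $G^{q_0}$, and conversely. Consequently shortest paths correspond under subdivision, giving $d(v_i,v_j) = d_{G^{q_0}}(i,j) = D_{ij}$ for all $i,j \in [n]$.

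Finally, counting vertices yields the bound: the $n$ original vertices together with $\sum_{\{i,j\} \in E^{q_0},\, D_{ij} \ge 2} (D_{ij}-1)$ interior vertices, and since $E^{q_0}$ consists precisely of the pairs with $D_{ij} \le q_0$, this sum equals $\sum_{(1\le i<j\le n)\,\land\,(2\le D_{ij}\le q_0)} (D_{ij}-1)$, matching~\eqref{eq:drpminlowerbound2} (in fact with equality). The only genuinely delicate point is the distance-preservation argument: I must rule out the possibility that subdivision creates an unexpected shortcut between $v_i$ and $v_j$, which is exactly where the requirement that no interior vertices be shared is used, reducing the claim to the shortest-path reasoning already established for Proposition~\ref{prop:graphrealisationmetric}.
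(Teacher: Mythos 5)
Your proposal is correct and follows essentially the same route as the paper: construct the $q_0$-skeleton $G^{q_0}$, replace each edge $\{i,j\} \in E^{q_0}$ by an elementary path of length $D_{ij}$ with disjoint interior vertices, and count the $D_{ij}-1$ new vertices per edge to obtain the bound \eqref{eq:drpminlowerbound2}. Your added care in verifying that subdivision preserves shortest-path distances (via the correspondence between paths in $G$ and weighted walks in $G^{q_0}$) is a sound elaboration of the step the paper delegates to the construction in Proposition~\ref{prop:graphrealisationmetric}, not a different argument.
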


\begin{proof}
    We construct the q-skeleton \(G^{q_0}\) as in Definition \ref{def:Gk}.
    We then convert \(G^{q_0}\) to a simple undirected graph \(G\) by replacing each edge \(\{i,j\} \in E^{q_0}\) with an elementary path of length \(D_{ij}\) in \(G^{q_0}\) to obtain a graph realisation of \(D\).
    Each elementary path is a path of length \(D_{ij}\) with \(D_{ij}-1\) new vertices.
    In \(G^{q_0}\) we have edges \(\{i,j\} \in E^{q_0}\) if and only if \(D_{ij} \leq q_0\), and so we have precisely the number of vertices as in \eqref{eq:drpminlowerbound2} in this graph realisation of \(D\).
\end{proof}
\medskip
Using the q-skeleton of \(D\) we can now provide a lower bound on the number of vertices required for a graph realisation of \(D\).
To do so, we generalise the earlier notion of elementary paths as follows: 
for a graph realisation \((G,\Phi)\) of \(D\), an {\emph{elementary path}} is a path of length \(D_{ij}\) between vertices \(v_i\) and \(v_j\) with no interior vertices in \(\Phi([n])\).
Then we have the following result.
\medskip
\begin{proposition}
    \label{prop:drpminlowerboundpre}
    Let \(s \in \N\), \(D\) be an \(n \times n\) distance matrix, \(D^{(q)}\) be the distance matrix of the q-skeleton of \(D\) and \((G, \Phi)\) be a graph realisation of \(D\).
    If there are no elementary paths of length greater than \(s\) in \(G\), then \(D^{(s)} = D\).
\end{proposition}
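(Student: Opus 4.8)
The plan is to establish the reverse inequality $D^{(s)}_{ij} \le D_{ij}$ for every pair $i,j \in [n]$. Since the displayed fact following Definition~\ref{def:Gk} already guarantees $D^{(s)}_{ij} \ge D_{ij}$, these two bounds together force $D^{(s)} = D$. Thus it suffices to produce, for each pair, a walk in the $s$-skeleton $G^s$ from $i$ to $j$ of total weight at most $D_{ij}$, because such a walk witnesses $d_{G^s}(i,j) \le D_{ij}$.

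To construct this walk, first I would fix a shortest path $P$ from $\Phi(i)$ to $\Phi(j)$ in $G$; by assumption its length is exactly $D_{ij}$. Next I would record the vertices of $\Phi([n])$ lying on $P$ in the order they appear, writing them as $\Phi(i)=\Phi(i_0), \Phi(i_1), \dots, \Phi(i_\ell)=\Phi(j)$, where these indices are distinct since $\Phi$ is injective and $P$ is a simple path. The key observation is that each segment $P_t$ of $P$ running from $\Phi(i_{t-1})$ to $\Phi(i_t)$ is an elementary path of length $D_{i_{t-1}i_t}$: it has no interior vertex in $\Phi([n])$ by the choice of the $i_t$, and its length equals $d(\Phi(i_{t-1}),\Phi(i_t)) = D_{i_{t-1}i_t}$ because any subpath of a shortest path is again a shortest path.

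With this in hand, the hypothesis that $G$ has no elementary path longer than $s$ immediately gives $D_{i_{t-1}i_t} \le s$ for each $t$, so every pair $\{i_{t-1}, i_t\}$ is an edge of $G^s$ carrying weight $D_{i_{t-1}i_t}$. Concatenating these edges yields a walk $i_0, i_1, \dots, i_\ell$ in $G^s$ whose total weight is $\sum_{t=1}^\ell D_{i_{t-1}i_t}$; since the segments $P_t$ partition $P$ and $P$ has length $D_{ij}$, this total weight is exactly $D_{ij}$. Hence $D^{(s)}_{ij} \le D_{ij}$, as required.

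I expect the only real subtlety to be the middle step: recognising that the pieces between consecutive marked vertices are bona fide elementary paths realising the true distances $D_{i_{t-1}i_t}$, rather than merely paths of some length bounded above by these distances. This hinges on the shortest-subpath property of $P$, which is what converts the hypothesis on elementary-path lengths into the edge-weight bound $D_{i_{t-1}i_t}\le s$ needed in $G^s$. The degenerate case $i=j$ is trivial ($D_{ii}=0$), and the situation where $P$ carries no interior marked vertex is simply the instance $\ell=1$ of the same construction.
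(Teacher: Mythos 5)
Your proposal is correct and follows essentially the same route as the paper's proof: decompose a shortest path in \(G\) at its marked vertices \(\Phi([n])\) into elementary paths, use the hypothesis to conclude each piece has length at most \(s\) and hence corresponds to an edge of \(G^s\) of the same weight, and combine the resulting bound \(D^{(s)}_{ij} \le D_{ij}\) with the opposite inequality from Lemma~\ref{lem:skeletonordering}. Your write-up is in fact slightly more explicit than the paper's, which simply asserts the decomposition into elementary paths that you justify via the shortest-subpath property.
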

\begin{proof}
    Let \((G=(V,E),\Phi)\) be a graph realisation without elementary paths of length greater than \(s\) and without loss of generality assume that \(\Phi(i) = v_i \in V\) for all \(i \in [n]\).
    Let \(i,j \in [n]\) and \(\pi\) be a shortest path from \(v_i\) to \(v_j\) in \(G\) of length \(D_{ij}\).
    Then \(\pi\) is a concatenation of elementary paths \(\pi_1, \pi_2, \dots, \pi_r\) (\(r \in \N\)) where \(\pi_t\) is a path from \(v_{k_t}\) to \(v_{k_{t+1}}\) and has length \(D_{k_t k_{t+1}} = D^{(s)}_{k_t k_{t+1}}\leq s\).
    This implies,
    \[
        D_{ij} = d_G(v_i, v_j) = \sum_{t=1}^{r} d_G(v_{k_t}, v_{k_{t+1}}) = \sum_{t=1}^{r} D_{k_t k_{t+1}} = \sum_{t=1}^{r} D^{(s)}_{k_t k_{t+1}} \geq D^{(s)}_{ij}.
    \]
    Since \(D^{(s)}_{ij} \geq D_{ij}\) by Lemma \ref{lem:skeletonordering}, we have \(D^{(s)} = D\).
\end{proof}
\medskip
    
\medskip
\begin{proposition}
    \label{prop:drpminlowerbound}
    Let \(D\) be an \(n \times n\) distance matrix and \(q_0 \in \N\) be the smallest \(q \in \N\) such that \(D^{(q)} = D\), with \(D^{(q)}\) the distance matrix of the q-skeleton of \(D\).
    Any graph realisation \((G=(V,E), \Phi)\) of \(D\) must satisfy \(|V| \geq n + (q_0 - 1)\).
\end{proposition}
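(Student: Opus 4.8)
The plan is to obtain the bound from the contrapositive of Proposition \ref{prop:drpminlowerboundpre} combined with a simple counting argument on the interior vertices of a sufficiently long elementary path. First I would dispose of the trivial case $q_0 = 1$: here the claimed inequality reads $|V| \geq n$, which holds immediately because $\Phi$ is injective and hence $|\Phi([n])| = n \leq |V|$.

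Next, assume $q_0 \geq 2$. By the minimality of $q_0$ we have $D^{(q_0 - 1)} \neq D$. Reading Proposition \ref{prop:drpminlowerboundpre} contrapositively with $s = q_0 - 1$ then tells us the following: in any graph realisation $(G, \Phi)$ of $D$ there must exist an elementary path $\pi$ of length strictly greater than $q_0 - 1$, that is, of length at least $q_0$. (If no such long elementary path existed, the Proposition would force $D^{(q_0-1)} = D$, contradicting minimality.)

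The key step is then to count vertices. An elementary path of length $\ell$ consists of $\ell + 1$ vertices, of which exactly $\ell - 1$ are interior, and by the very definition of an elementary path none of these interior vertices lie in $\Phi([n])$. Taking $\ell \geq q_0$, the path $\pi$ therefore contributes at least $q_0 - 1$ vertices disjoint from the set $\Phi([n])$. Since $\Phi$ is injective, $|\Phi([n])| = n$, and adding these two disjoint vertex sets gives $|V| \geq n + (q_0 - 1)$, as required.

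I expect no serious obstacle here, since the substantive work is already carried out by Proposition \ref{prop:drpminlowerboundpre}; the only point demanding care is to verify that the interior vertices of $\pi$ are genuinely distinct from every vertex of $\Phi([n])$ — which is precisely the defining property of an elementary path — so that the two counts add without overlap and the lower bound follows cleanly.
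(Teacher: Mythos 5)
Your proof is correct and follows essentially the same route as the paper's: the trivial case \(q_0 = 1\), then the contrapositive of Proposition \ref{prop:drpminlowerboundpre} with \(s = q_0 - 1\) to produce an elementary path of length at least \(q_0\), and finally counting its \(q_0 - 1\) interior vertices, which lie outside \(\Phi([n])\) by definition. You merely make explicit two steps the paper leaves implicit (the contrapositive application and the vertex count), which is fine.
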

\begin{proof}
    Clearly, if \(q_0 = 1\), then the statement is trivially true.
    Now assume that \(q_0 \geq 2\) and \(D^{(q_0)} = D\).
    By Proposition \ref{prop:drpminlowerboundpre}, we know that in any graph realisation \((G, \Phi)\) of \(D\) there must exist an elementary path of length at least \(q_0\) between some pair of vertices \(v_i, v_j\) for \(i,j \in [n]\).
    The interior vertices of this elementary path must be contained within \(V \setminus \Phi([n])\), therefore we have \(|V|  \geq n + (q_0 - 1)\). 
\end{proof}
\medskip

Combining Proposition \ref{lem:drealisinggraphbasic2} and Proposition \ref{prop:drpminlowerbound} we obtain the following Proposition.
\medskip
\begin{proposition}
    \label{cor:simpleresult}
    Let \(D\) be an \(n \times n\) distance matrix and \(D^{(q)}\) be the distance matrix of the q-skeleton of \(D\).
    Then \(D\) has a graph realisation \((G, \Phi)\) with \(|V| = n\) if and only if \(D^{(1)} = D\).
\end{proposition}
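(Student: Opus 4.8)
The plan is to derive both implications directly from the two bounds established immediately above, namely the upper bound of Proposition \ref{lem:drealisinggraphbasic2} and the lower bound of Proposition \ref{prop:drpminlowerbound}, both phrased in terms of the quantity \(q_0\), the smallest \(q \in \N\) with \(D^{(q)} = D\). The key translation I would make first is that the condition \(D^{(1)} = D\) is equivalent to \(q_0 = 1\): indeed \(q_0 \geq 1\) always holds since \(q_0 \in \N\), and \(q_0 \leq 1\) precisely when \(D^{(1)} = D\) by the minimality in the definition of \(q_0\). With this reformulation the statement becomes ``\(|V| = n\) is achievable if and only if \(q_0 = 1\)''.

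For the forward direction, I would assume that \(D\) admits a graph realisation \((G=(V,E),\Phi)\) with \(|V| = n\). Proposition \ref{prop:drpminlowerbound} gives \(|V| \geq n + (q_0 - 1)\), so substituting \(|V| = n\) yields \(q_0 - 1 \leq 0\), that is \(q_0 \leq 1\). Combined with \(q_0 \geq 1\) this forces \(q_0 = 1\), and hence \(D^{(1)} = D\).

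For the converse, I would assume \(D^{(1)} = D\), so that \(q_0 = 1\), and apply Proposition \ref{lem:drealisinggraphbasic2}. The realisation it produces satisfies
\[
    |V| \leq n + \sum_{(1 \le i < j \le n)\,\land\,(2 \leq D_{ij} \leq q_0)} (D_{ij}-1).
\]
Since \(q_0 = 1\), the summation index set is empty, as no pair can satisfy \(2 \leq D_{ij} \leq 1\); thus the sum vanishes and \(|V| \leq n\). As the injective map \(\Phi : [n] \to V\) forces \(|V| \geq n\), I conclude \(|V| = n\), producing the desired realisation.

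The argument is essentially bookkeeping once the two preceding propositions are in hand. The only point requiring a moment of care is the collapse of the upper-bound sum to the empty sum when \(q_0 = 1\), which is exactly what makes the lower and upper bounds meet at \(n\); beyond this I do not anticipate any genuine obstacle.
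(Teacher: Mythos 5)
Your proof is correct and takes essentially the same approach as the paper: both directions rest on the same two results, with Proposition~\ref{lem:drealisinggraphbasic2} (whose bound collapses to an empty sum when \(q_0=1\)) giving the realisation with \(|V|=n\), and Proposition~\ref{prop:drpminlowerbound} ruling it out otherwise. The only cosmetic difference is that you prove the ``only if'' direction directly from the lower bound while the paper states its contrapositive.
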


\begin{proof}
    Proposition \ref{lem:drealisinggraphbasic2} implies that if \(D^{(1)} = D\) then there exists a graph realisation \((G, \Phi)\) of \(D\) with \(|V| = n\).
    Furthermore, Proposition \ref{prop:drpminlowerbound} implies that, if \(D^{(1)} \neq D\) (hence \(D^{(q)} = D\) for some \(q \geq 2\)), then any graph realisation \((G=(V,E), \Phi)\) of \(D\) must have \(|V| \geq n + 1\).
\end{proof}
\medskip
Due to its general significance throughout this paper, we introduce for any distance matrix \(D\) the unweighted graph \(G_D\), which is simply the 1-skeleton \(G^1\) of \(D\) as in Definition \ref{def:Gk} without the edge weights.
Note that Proposition \ref{cor:simpleresult} is equivalent to the following Theorem by Hakimi and Yau 1965 \cite{Hakimi1965DistanceMO}.
\medskip
\begin{theorem}
    \label{thm:nequalk}
    For any \(n \times n\) distance matrix \(D\), the following statements are equivalent:
    \begin{itemize}
    \item[-] A graph realisation \((G=(V,E), \Phi)\) of \(D\) with \(|V| = n\) exists.
    \item[-] A graph realisation of \(D\) is \((G_D=(V_D,E_D), \Phi)\) with \(V_D=v_1,\dots,v_n\) and \(\Phi(i) = v_i\) for \(i \in [n]\). 
    \end{itemize}
\end{theorem}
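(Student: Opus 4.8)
The plan is to reduce the claimed equivalence to Proposition~\ref{cor:simpleresult} by first identifying the shortest-path distance function of the unweighted graph $G_D$ with the matrix $D^{(1)}$. Once this identification is in place, both implications follow almost immediately from the characterisation already established for realisations on exactly $n$ vertices.

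First I would observe that $G_D$ is obtained from the weighted $1$-skeleton $G^1$ of Definition~\ref{def:Gk} by forgetting the edge weights, and that every edge $\{i,j\}$ of $G^1$ carries weight $w(i,j)=D_{ij}=1$, since such edges appear precisely when $D_{ij}\le 1$ while $D_{ij}\ge 1$ for all $i\ne j$. Hence the unweighted shortest-path distance in $G_D$ coincides with the weighted shortest-path distance in $G^1$, giving $d_{G_D}(i,j)=D^{(1)}_{ij}$ for all $i,j\in[n]$. In particular, $(G_D,\Phi)$ with $\Phi(i)=v_i$ is a graph realisation of $D$ if and only if $D^{(1)}_{ij}=D_{ij}$ for all $i,j$, that is, if and only if $D^{(1)}=D$.

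Next I would prove the two implications. For the direction from the second statement to the first, the argument is immediate: if $(G_D,\Phi)$ is a graph realisation of $D$, then since $|V_D|=n$ it is in particular a graph realisation with exactly $n$ vertices, which is the first statement. For the converse, suppose a graph realisation $(G,\Phi)$ of $D$ with $|V|=n$ exists. Proposition~\ref{cor:simpleresult} then yields $D^{(1)}=D$, and by the identification of the previous paragraph this is exactly the condition under which $(G_D,\Phi)$ is a graph realisation of $D$, establishing the second statement.

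I do not anticipate any genuine obstacle here, as this theorem is essentially a reformulation of Proposition~\ref{cor:simpleresult}. The only point requiring care is the identification $d_{G_D}=D^{(1)}$, which merely records that an unweighted graph and the weighted graph all of whose edge weights equal $1$ induce the same distance function; everything else is a direct appeal to the earlier results.
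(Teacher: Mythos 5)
Your proposal is correct and takes essentially the same route as the paper: the paper presents Theorem~\ref{thm:nequalk} as a direct reformulation of Proposition~\ref{cor:simpleresult} (offering no separate proof), and your argument simply makes this explicit by recording the identification $d_{G_D}=D^{(1)}$ and then invoking Proposition~\ref{cor:simpleresult} for the nontrivial direction. The one step you single out as needing care --- that the unweighted graph $G_D$ and the weighted $1$-skeleton $G^1$, all of whose edge weights equal $1$, induce the same distance function --- is exactly the observation the paper leaves implicit, so your write-up is a faithful completion of the paper's intended proof.
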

\medskip
Hakimi and Yau 1965 \cite{Hakimi1965DistanceMO} also show that the graph realisation of a distance matrix \(D\) with \(|V|=n\) is unique up to isomorphism.
Theorem \ref{thm:nequalk} gives us a simple algorithm to solve the \(0\)-\textsc{CombDMR} as follows:

\begin{itemize}
    \item[1.] Construct the graph \(G_D\).
    \item[2.] Check if the distance function of \(G_D\) coincides with \(D\).
\end{itemize}

If \(D\) has a graph realisation with \(|V|=n\), then we call \(D\) a {\emph{self-realising distance matrix}}.
Note that the graph \(G_D\) is always a subgraph of any graph realisation of \(D\), as shown in the following proposition.
\medskip
\begin{proposition}
    \label{prop:g_d_subgraph}
    Let \(D\) be an \(n \times n\) distance matrix and \(G_D=(V_D,E_D)\) be the associated unweighted graph, with \(V_D = \{v_1,\dots,v_n\}\).
    Then any graph realisation \((G, \Phi)\) of \(D\) with \(\Phi(i) = v_i\) for \(i \in [n]\) has \(G_D\) as the induced subgraph on the vertices \(v_1,\dots,v_n\).
\end{proposition}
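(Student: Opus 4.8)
The plan is a straightforward double inclusion on edge sets, since the two graphs already share the vertex set \(\{v_1,\dots,v_n\}\) by the hypothesis \(\Phi(i)=v_i\). Writing \(H\) for the subgraph of \(G\) induced on \(\{v_1,\dots,v_n\}\), the entire content reduces to showing that a pair \(\{v_i,v_j\}\) is an edge of \(H\) if and only if it is an edge of \(G_D\). The single ingredient I would invoke is the elementary fact that in a simple graph two distinct vertices are adjacent precisely when their shortest-path distance equals \(1\); combined with the realisation identity \(d_G(v_i,v_j)=D_{ij}\), this turns the whole question into a statement about the entries of \(D\).

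First I would take any edge \(\{v_i,v_j\}\) of \(H\). By definition of the induced subgraph this is an edge of \(G\) with both endpoints among \(v_1,\dots,v_n\), so \(d_G(v_i,v_j)=1\), and the realisation property yields \(D_{ij}=d_G(v_i,v_j)=1\). By the definition of \(G_D\) as the unweighted \(1\)-skeleton \(G^1\) of \(D\), whose edges are exactly the pairs with \(D_{ij}\leq 1\), this places \(\{v_i,v_j\}\) in \(E_D\). Conversely, I would take an edge \(\{v_i,v_j\}\) of \(G_D\), so that \(D_{ij}=1\). The realisation property gives \(d_G(v_i,v_j)=D_{ij}=1\); since \(\Phi\) is injective the vertices \(v_i,v_j\) are distinct, hence adjacent in \(G\), and as both lie in \(\{v_1,\dots,v_n\}\) the edge \(\{v_i,v_j\}\) belongs to \(H\). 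These two inclusions together give \(H=G_D\).

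There is no genuine obstacle in this argument; the only point deserving a moment's care is the coincidence between the defining edge condition of \(G_D\) and the equality \(D_{ij}=1\). This holds because \(D\) is a distance matrix with strictly positive off-diagonal integer entries, so that \(D_{ij}\leq 1\) forces \(D_{ij}=1\) for \(i\neq j\), and correspondingly distance \(1\) in \(G\) is equivalent to adjacency. With this equivalence in hand, both directions reduce to a direct substitution into the realisation identity.
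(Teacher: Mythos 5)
Your proof is correct and follows essentially the same route as the paper's: both establish \(E'=E_D\) by a double inclusion on edge sets, using the realisation identity \(d_G(v_i,v_j)=D_{ij}\) together with the equivalence between adjacency and shortest-path distance \(1\) (the paper merely phrases each inclusion as a contradiction rather than directly). Your additional remark that strict positivity of off-diagonal entries turns \(D_{ij}\leq 1\) into \(D_{ij}=1\) is a nice touch of care but does not change the argument.
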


\begin{proof}
    Let \((G, \Phi)\) be a graph realisation of \(D\) with \(\Phi(i) = v_i\) for \(i \in [n]\).
    Let \(G'=(V',E')\) be the induced subgraph on the vertices \(v_1,\dots,v_n\).
    We seek to show that \(E'=E_D\).
    Let \(d_{G'}\) be the distance function of \(G'\).
    Suppose that there exists an edge \(\{v_i, v_j\} \in E'\) such that \(\{v_i, v_j\} \not \in E_D\).
    Then, by construction, \(D_{ij} > 1\) and \(d_{G'}(v_i, v_j) =~1\) which is a contradiction.
    Likewise, suppose that there exists an edge \(\{v_i, v_j\} \in E_D\) such that \(\{v_i, v_j\} \not \in E'\).
    Then, by construction, \(D_{ij} = 1\) and \(d_{G'}(v_i, v_j) > 1\) which is also a contradiction.
    Therefore, \(E'=E_D\) and \(G'=G_D\) is an induced subgraph of \(G\) on the vertices \(v_1,\dots,v_n\).
\end{proof}
\medskip

\section{Polynomial solution of $1$-\textsc{CombDMR} \label{sec:kplus1}}

In this section we consider the case where \(k=1\).
Knowing that \(D\) must be a distance matrix by Proposition \ref{prop:graphrealisationmetric}, we then check whether \(D\) is self-realising by the application of Theorem \ref{thm:nequalk} and its associated algorithm.
If this is not the case, we know that \(|V|\geq n+1\) for any graph realisation \((G=(V,E), \Phi)\) of \(D\).

Now we aim to develop a polynomial time algorithm to decide the existence of a graph realisation \((G, \Phi)\) of \(D\) with \(k=1\).
Our algorithm will also provide such a graph realisation if it exists.
The approach we take to solve this problem involves a particular 2-Satisfiability (2-SAT) instance \cite{papadimitriou1994computational}.
The 2-SAT problem is the special case of the boolean satisfiability problem, where the input formula is in Conjunctive Normal Form (2-CNF), namely, every clause consists of at most two literals. 
Given a 2-CNF formula $\phi$, the question is whether there exists a truth assignment of the variables that satisfies $\phi$.
Formally, a 2-SAT instance is expressed as a conjunction of a set \(\mathcal{C}\) of clauses, that is
\[
\phi = \bigwedge_{c_i \in \mathcal{C}} c_i = c_1 \land \ldots \land c_m
\]
for some finite \(m \in \N_0\), where each clause \( c_i \) of \(\mathcal{C}\) is a disjunction of two literals:
\[
c_i = (\ell_i^{(1)} \lor \ell_i^{(2)}).
\]
Here, each of \( \ell_i^{(1)}, \ell_i^{(2)} \) is a  literal, i.e., either a variable $x$ or its negation $\bar{x}$.
The 2-SAT problem is known to be polynomial time solvable by Krom 1967 \cite{Krom1967}.

Within any graph realisation \((G=(V,E),\Phi)\) with \(|V|=n+1\) and \(\Phi(i)=v_i\) we know that the induced subgraph on vertices \(v_1,\dots,v_n\) agrees with \(G_D\) by Proposition~\ref{prop:g_d_subgraph}.
Therefore, we look at the addition of a new vertex \(v_{n+1}\) to \(G_D\) and construct a 2-CNF formula $\phi_1$ to determine which vertices \(v_1,\dots,v_n\) can and which cannot be adjacent to \(v_{n+1}\) in \(G\).
We construct $\phi_1$ as the disjunction of the clauses in \(\mathcal{C}_1\), as follows:
\begin{itemize}
    \item[1.] Let \(G_D = (V_D,E_D)\) be the graph as described in Proposition \ref{prop:g_d_subgraph} with vertices \(V_D = \{v_1,\dots,v_n\}\) and \(\Phi(i) = v_i\) for \(i \in [n]\).
    \item[2.] Create an empty set of clauses \(\mathcal{C}_1 = \emptyset\) and let \(\{x_{i,n+1} : i \in [n]\}\) be the set of boolean variables representing the existence of an additional edge \(\{v_i, v_{n+1}\}\) to those already in \(G_D\).
    That is, \(x_{i,n+1}\) is true if and only if \(v_i\) is adjacent to \(v_{n+1}\) in \(G\).
    \item[3.] For all \(i,j \in [n]\) with \(D_{ij} > 2\), we know that the vertices \(v_i\) and \(v_j\) must not both be adjacent to \(v_{n+1}\) in any graph realisation of \(D\), as otherwise this would result in a distance of~2 between \(v_i\) and \(v_j\).
    This condition is equivalent to the following clause being satisfied:    
    \begin{equation}
        \label{eq:k1_2sat3a}
        (\bar{x}_{i,n+1} \lor \bar{x}_{j,n+1}).
    \end{equation}
    We therefore add the clauses \eqref{eq:k1_2sat3a} to \(\mathcal{C}_1\) for all \(i,j \in [n]\) with \(D_{ij} > 2\).
    \item[4.] For all \(i,j \in [n]\), with, \(D_{ij} = 2\) and \(d_{G_D}(v_i, v_j) > 2\), the following boolean expression must be satisfied:
    \begin{equation}
        \label{eq:k1_2sat4a}
        (x_{i,n+1} \land x_{j,n+1}),
    \end{equation}
    meaning that the distance between \(v_i\) and \(v_j\) must be 2 and realised by a path of length 2 via \(v_{n+1}\).
    The boolean expression \eqref{eq:k1_2sat4a} is equivalent to the following two clauses both being satisfied:
    \begin{eqnarray}
        \label{eq:k1_2sat4b}
        (x_{i,n+1} \lor {x}_{i,n+1}),\\
        \label{eq:k1_2sat4c}
        (x_{j,n+1} \lor {x}_{j,n+1}).
    \end{eqnarray}
    Therefore, we add the clauses \eqref{eq:k1_2sat4b} and \eqref{eq:k1_2sat4c} to \(\mathcal{C}_1\) for all \(i,j \in [n]\) with \(D_{ij} = 2\) and \(d_{G_D}(v_i, v_j) > 2\) as there is no other way to realise a distance of 2 between \(v_i\) and \(v_j\) in \(G\).    
\end{itemize}

This completes the construction of the 2-SAT instance \(\phi_1\) of the clauses \(\mathcal{C}_1\).
We will see (Lemma~\ref{lem:k1eqd2} below) that any satisfying assignment {\bf{X}} of \(\phi_1\) gives rise to a graph \(G_{\phi_1,{\bf{X}}}\), whose distance function on \(v_1, \dots, v_n\) agrees with the distance matrix \(D^{(2)}\) of the 2-skeleton of \(D\).

We now introduce the following definition.
A truth assignment of boolean variables \(\{x_{i,j}\}\) is said to be {\emph{consistent}} with a graph \(G=(V,E)\) with \(V \supset \{v_1, \dots, v_m\}\) if
\begin{eqnarray*}
    x_{i,j} = \begin{cases}
        {\rm{True}} \quad {\rm{if}} \, \{v_i, v_j\} \in E,\\
        {\rm{False}} \quad {\rm{if} }\, \{v_i, v_j\} \notin E.
    \end{cases}
\end{eqnarray*}
\medskip
Since the clauses of the form \eqref{eq:k1_2sat3a}, \eqref{eq:k1_2sat4b} and \eqref{eq:k1_2sat4c} are all necessary conditions for a graph realisation of \(D\) with \(|V| \leq n+1\), we have the following observation.
\medskip
\begin{observation}
    \label{obs:prop:k1cneccessary}
    Let \(D\) be an \(n \times n\) distance matrix.
    If \(\phi_1\) is not satisfiable then no graph realisation \((G=(V,E), \Phi)\) of \(D\) with \(|V| \leq n+1\) exists.
\end{observation}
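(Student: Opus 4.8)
The plan is to prove the contrapositive: assuming that a graph realisation $(G=(V,E),\Phi)$ of $D$ with $|V|\le n+1$ exists, I would exhibit an explicit truth assignment of the variables $\{x_{i,n+1}\}$ and verify that it satisfies every clause in $\mathcal{C}_1$. Since $\Phi$ is injective we have $|V|\ge n$, so $|V|\in\{n,n+1\}$; to treat both cases uniformly I would, in the case $|V|=n$, simply adjoin one isolated vertex and label it $v_{n+1}$. This leaves all pairwise distances among $v_1,\dots,v_n$ unchanged and produces a realisation with exactly $n+1$ vertices, so without loss of generality I may assume $|V|=n+1$ with $\Phi(i)=v_i$.

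Next I would take the truth assignment $\mathbf{X}$ consistent with $G$, namely $x_{i,n+1}=\mathrm{True}$ if and only if $\{v_i,v_{n+1}\}\in E$, and check the two families of clauses in turn. For a clause of type \eqref{eq:k1_2sat3a}, arising from a pair with $D_{ij}>2$, the key observation is that if $x_{i,n+1}$ and $x_{j,n+1}$ were both true then $v_i,v_{n+1},v_j$ would form a path of length $2$, forcing $d_G(v_i,v_j)\le 2<D_{ij}$ and contradicting that $G$ realises $D$; hence at least one literal is false and the clause holds.

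The main obstacle will be verifying the clauses \eqref{eq:k1_2sat4b} and \eqref{eq:k1_2sat4c}, which assert that both $x_{i,n+1}$ and $x_{j,n+1}$ are true for every pair with $D_{ij}=2$ and $d_{G_D}(v_i,v_j)>2$. Here I would argue as follows: since $d_G(v_i,v_j)=D_{ij}=2$, there is a vertex $w$ adjacent to both $v_i$ and $v_j$. If $w$ were among $v_1,\dots,v_n$, then by Proposition~\ref{prop:g_d_subgraph} the induced subgraph on those vertices is exactly $G_D$, so the edges $\{v_i,w\},\{w,v_j\}$ would lie in $G_D$ and yield $d_{G_D}(v_i,v_j)\le 2$, contradicting the hypothesis $d_{G_D}(v_i,v_j)>2$. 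Therefore the only possibility is $w=v_{n+1}$, which forces $\{v_i,v_{n+1}\},\{v_j,v_{n+1}\}\in E$ and hence $x_{i,n+1}=x_{j,n+1}=\mathrm{True}$, so both clauses are satisfied. This is the crucial step, as it is precisely where the availability of a single extra vertex and the structural identification of $G_D$ as an induced subgraph are used.

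Having verified that $\mathbf{X}$ satisfies all clauses of both types, $\phi_1$ is satisfiable whenever such a realisation exists, which is the contrapositive of the statement. In the degenerate case $|V|=n$, equivalently $D^{(1)}=D$ by Proposition~\ref{cor:simpleresult}, no clause of type \eqref{eq:k1_2sat4b} is ever generated, so the all-false assignment works directly; the uniform argument above subsumes this.
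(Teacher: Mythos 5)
Your proposal is correct and follows essentially the same route as the paper: the paper justifies the observation by noting that every clause added to \(\mathcal{C}_1\) is a necessary condition for a realisation with \(|V|\leq n+1\), which is exactly what you verify in detail via the assignment consistent with \(G\), using Proposition~\ref{prop:g_d_subgraph} to rule out a common neighbour among \(v_1,\dots,v_n\) and padding the \(|V|=n\) case with an isolated vertex. Your write-up simply makes explicit the case analysis the paper leaves implicit.
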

\medskip
Note that any graph realisation \((G=(V,E), \Phi)\) of \(D\) with \(|V| = n+1\) gives rise to a satisfying assignment \({\bf{X}}\) of \(\phi_1\) which is consistent with \(G\).
We now seek to show that we require only a single satisfying assignment of \(\phi_1\) to determine whether a graph realisation with \(|V|=n+1\) exists.
Let \(G_{\phi_1,{\bf{X}}}\) be the unique graph with vertex set \(V=\{v_1,\dots,v_{n+1}\}\) having \(G_D\) as the induced subgraph on the vertices \(v_1, \dots, v_n\) and consistent with a satisfying assignment \({\bf{X}}\) of \(\phi_1\).
For such a graph, let \(D(\phi_1,{\bf{X}})\) denote the \(n \times n\) distance matrix of \(G_{\phi_1,{\bf{X}}}\) over the vertices \(\{v_1,\dots,v_n\}\). 
Note that \(D(\phi_1,{\bf{X}})\) can be computed in polynomial time using any known all-pairs shortest-paths (APSP) algorithm. 

\medskip
\begin{lemma}
    \label{lem:k1eqd2}
    Let \({\bf{X}}\) be a satisfying assignment of $\{x_{i,n+1} : i \in [n]\}$ of \(\phi_1\) and \(D^{(2)}\) be the distance matrix of the 2-skeleton of \(D\).
    Then \[D(\phi_1,{\bf{X}}) = D^{(2)}.\]
\end{lemma}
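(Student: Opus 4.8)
The plan is to fix \(i,j \in [n]\) and prove the claimed equality entrywise, writing \(d^\ast = D(\phi_1,{\bf{X}})_{ij}\) for the distance between \(v_i\) and \(v_j\) in \(G_{\phi_1,{\bf{X}}}\) and recalling that \(D^{(2)}_{ij} = d_{G^2}(i,j)\) is the distance in the weighted \(2\)-skeleton \(G^2\). I would establish \(d^\ast = D^{(2)}_{ij}\) by proving the two inequalities separately. The structural fact I rely on throughout is that \(G_{\phi_1,{\bf{X}}}\) has vertex set \(\{v_1,\dots,v_{n+1}\}\) with \(G_D\) as its induced subgraph on \(v_1,\dots,v_n\), while \(v_{n+1}\) is adjacent precisely to those \(v_a\) with \(x_{a,n+1}\) true; in particular, the weight-\(1\) edges of \(G^2\) are exactly the edges of \(G_D\), and a shortest path may be assumed simple, so it visits \(v_{n+1}\) at most once.

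For the inequality \(D^{(2)}_{ij} \le d^\ast\), I would take a shortest \(v_i\)-\(v_j\) path in \(G_{\phi_1,{\bf{X}}}\) and distinguish two cases. If the path stays inside \(G_D\), its length is \(d_{G_D}(v_i,v_j)\), and since every \(G_D\)-path is a \(G^2\)-path of the same length this already gives \(D^{(2)}_{ij} \le d_{G_D}(v_i,v_j) = d^\ast\). Otherwise the path passes through \(v_{n+1}\) exactly once, entering from some neighbour \(v_a\) and leaving to some neighbour \(v_b\); since \(x_{a,n+1}\) and \(x_{b,n+1}\) are both true, the clause \eqref{eq:k1_2sat3a} forces \(D_{ab} \le 2\), so \(G^2\) contains the edge \(\{a,b\}\) of weight \(D_{ab} \le 2\). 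Replacing the length-\(2\) detour \(v_a, v_{n+1}, v_b\) by this single edge converts the path into a \(G^2\)-walk of no greater length, yielding \(D^{(2)}_{ij} \le d^\ast\).

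For the reverse inequality \(d^\ast \le D^{(2)}_{ij}\), I would take a shortest path in \(G^2\), which is a concatenation of edges of weight \(1\) (pairs with \(D_{ab}=1\)) and weight \(2\) (pairs with \(D_{ab}=2\)), and replace each edge by a path of equal length in \(G_{\phi_1,{\bf{X}}}\). A weight-\(1\) edge \(\{a,b\}\) is an edge of \(G_D\) and hence of \(G_{\phi_1,{\bf{X}}}\). For a weight-\(2\) edge \(\{a,b\}\), note that \(D_{ab}=2\) forbids a \(G_D\)-edge, so \(d_{G_D}(v_a,v_b) \ge 2\); if \(d_{G_D}(v_a,v_b)=2\) there is a length-\(2\) path already inside \(G_D\), and if \(d_{G_D}(v_a,v_b)>2\) then the clauses \eqref{eq:k1_2sat4b} and \eqref{eq:k1_2sat4c} force \(x_{a,n+1}\) and \(x_{b,n+1}\) true, giving the length-\(2\) path \(v_a, v_{n+1}, v_b\). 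Concatenating these replacements produces a \(v_i\)-\(v_j\) walk in \(G_{\phi_1,{\bf{X}}}\) of total length \(D^{(2)}_{ij}\), so \(d^\ast \le D^{(2)}_{ij}\).

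I expect the main obstacle to be the lower-bound direction \(D^{(2)}_{ij} \le d^\ast\): one must argue that every detour through the new vertex \(v_{n+1}\) can be charged to a genuine weight-\(\le 2\) edge of \(G^2\), and this is exactly where the clauses of the form \eqref{eq:k1_2sat3a} are indispensable, together with the observation that a simple shortest path uses \(v_{n+1}\) at most once so that only a single such detour need be accounted for. The upper bound is more routine, but its correctness rests on the case split by the value of \(d_{G_D}(v_a,v_b)\), which explains why the forcing clauses \eqref{eq:k1_2sat4b} and \eqref{eq:k1_2sat4c} are imposed only when \(d_{G_D}(v_a,v_b)>2\) rather than for every pair with \(D_{ab}=2\).
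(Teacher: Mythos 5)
Your proof is correct and takes essentially the same route as the paper's: both directions rest on the same two correspondences, namely that a detour through \(v_{n+1}\) maps to a weight-\(\le 2\) edge of \(G^2\) via the clauses \eqref{eq:k1_2sat3a}, and that each edge of \(G^2\) maps back to a path of equal length in \(G_{\phi_1,{\bf{X}}}\) via the clauses \eqref{eq:k1_2sat4b} and \eqref{eq:k1_2sat4c} (together with the case \(d_{G_D}(v_a,v_b)=2\)). The only differences are presentational: the paper proves the two inequalities by contradiction and phrases the single-detour analysis in the language of elementary-path decompositions, whereas you argue the inequalities directly, using that a simple shortest path visits \(v_{n+1}\) at most once.
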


\begin{proof}
    Let \({\bf{X}}\) be a satisfying assignment of \(\phi_1\).
    By construction, we know that \(G_D\) is an induced subgraph of \(G_{\phi_1,{\bf{X}}}\) and by clauses of the form \eqref{eq:k1_2sat3a}, \eqref{eq:k1_2sat4b} and \eqref{eq:k1_2sat4c} we have that,
    \begin{equation}
    \label{eq:lem:k1eqd2}
        D(\phi_1,{\bf{X}})_{ij} = D_{ij} = D^{(2)}_{ij} \quad \text{for all } i,j \in [n] \text{ with } D_{ij} \leq 2.
    \end{equation}

    Assume, for the sake of contradiction, that \(D(\phi_1,{\bf{X}})_{ij} \neq D^{(2)}_{ij}\) for some \(i,j \in [n]\) with \(D_{ij} > 2\).
    First assume that \(D(\phi_1,{\bf{X}})_{ij} < D^{(2)}_{ij}\).
    Let \(\pi\) be a shortest path between \(v_i\) and \(v_j\) in \(G_{\phi_1,{\bf{X}}}\) of length \(D(\phi_1,{\bf{X}})_{ij}\).
    Then \(\pi\) must be a concatenation of elementary paths \(\pi_1,\dots,\pi_s\) of lengths \(1\) or \(2\).
    Each elementary path of length \(1\) and \(2\) has between its endpoints a corresponding edge in the \(2\)-skeleton \(G^2\) of \(D\) of equivalent length.
    Therefore, in \(G^2\) there must exist a path of length \(D(\phi_1,{\bf{X}})_{ij}\), which is a contradiction to \(D(\phi_1,{\bf{X}})_{ij} < D^{(2)}_{ij}\).
    
    Now assume that \(D(\phi_1,{\bf{X}})_{ij} > D^{(2)}_{ij}\).
    Let \(\pi\) be a shortest path between vertices \(i \in [n]\) and \(j \in [n]\) in \(G^2\), the \(2\)-skeleton of \(D\) of length \(D^{(2)}_{ij}\).
    Then \(\pi\) must consist of consecutive edges \(e_1=\{w_1,w_2\},\dots,e_s=\{w_{s},w_{s+1}\}\) such that \(w_1 = i\) and \(w_{s+1} = j\), that is, \(D^{(2)}_{w_{t}w_{t+1}} \leq 2\) for all \(t \in [s]\).
    Then, as \(D(\phi_1,{\bf{X}})_{w_tw_{t+1}} = D^{(2)}_{w_tw_{t+1}}\) by \eqref{eq:lem:k1eqd2}, we know that for each edge \(e_t\) there must exist a corresponding path of length \(1\) or \(2\) between \(w_t\) and \(w_{t+1}\) in \(G_{\phi_1,{\bf{X}}}\).
    Therefore, we have a path of length \(D^{(2)}_{ij}\) between \(v_i\) and \(v_j\) in \(G_{\phi_1,{\bf{X}}}\) which is a contradiction to the assumption that \(D(\phi_1,{\bf{X}})_{ij} > D^{(2)}_{ij}\).
\end{proof}
\medskip
An immediate consequence of Lemma \ref{lem:k1eqd2} is the following corollary.
\medskip
\begin{corollary}
    \label{cor:lem:k1dleqdprime}
    Let \({\bf{X}}\) and \({\bf{X}}'\) be two distinct satisfying assignments of \(\phi_1\). Then \[D(\phi_1,{\bf{X}}) = D(\phi_1,{\bf{X}}').\]
\end{corollary}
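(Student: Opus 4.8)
The plan is to derive this directly from Lemma \ref{lem:k1eqd2}, which does all the real work. The key observation is that Lemma \ref{lem:k1eqd2} asserts a statement about \emph{every} satisfying assignment individually: for any satisfying assignment $\mathbf{Y}$ of $\phi_1$, the induced distance matrix $D(\phi_1, \mathbf{Y})$ equals the fixed matrix $D^{(2)}$, the distance matrix of the $2$-skeleton of $D$. Crucially, $D^{(2)}$ is determined by $D$ alone and does not depend on the choice of satisfying assignment.

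First I would apply Lemma \ref{lem:k1eqd2} to the assignment $\mathbf{X}$ to obtain $D(\phi_1, \mathbf{X}) = D^{(2)}$. Then I would apply the same lemma to $\mathbf{X}'$ to obtain $D(\phi_1, \mathbf{X}') = D^{(2)}$. Since both matrices equal the common value $D^{(2)}$, transitivity of equality immediately yields
\[
D(\phi_1, \mathbf{X}) = D^{(2)} = D(\phi_1, \mathbf{X}'),
\]
which is the desired conclusion.

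There is essentially no obstacle to overcome here: the distinctness of $\mathbf{X}$ and $\mathbf{X}'$ plays no role in the argument, and the statement holds equally well for identical assignments. The entire content of the corollary is the observation that, although distinct satisfying assignments correspond to genuinely different graphs $G_{\phi_1, \mathbf{X}}$ and $G_{\phi_1, \mathbf{X}'}$ (differing in which vertices are adjacent to $v_{n+1}$), these graphs are nonetheless indistinguishable at the level of pairwise distances among $v_1, \dots, v_n$, precisely because each such distance matrix is pinned to $D^{(2)}$ by Lemma \ref{lem:k1eqd2}.
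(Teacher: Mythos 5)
Your proof is correct and is exactly the paper's argument: the paper states this corollary as an immediate consequence of Lemma \ref{lem:k1eqd2}, since that lemma pins $D(\phi_1,{\bf{X}})$ to the assignment-independent matrix $D^{(2)}$ for every satisfying assignment, and transitivity of equality finishes the job. Your additional remark that distinctness of ${\bf{X}}$ and ${\bf{X}}'$ is irrelevant is also accurate.
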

\medskip
Corollary \ref{cor:lem:k1dleqdprime} tells us that the distance matrix \(D(\phi_1,{\bf{X}})\)  of \(G_{\phi_1,{\bf{X}}}\) is invariant over all satisfying assignments of \(\phi_1\).
Therefore, if we can find a single satisfying assignment \({\bf{X}}\) of \(\phi_1\), then we can construct \(G_{\phi_1,{\bf{X}}}\), and if \(D(\phi_1,{\bf{X}}) = D\) then we have found a graph realisation \((G=(V,E), \Phi)\) of \(D\) with \(|V| = n+1\).
It remains to show that, in the case \(D(\phi_1,{\bf{X}}) \neq D\), no graph realisation of \(D\) with \(|V| = n+1\) exists.
\medskip
\begin{proposition}
    \label{prop:k1_2sat2_not_realised}
    Let \(D\) be an \(n \times n\) distance matrix.
    If \(D(\phi_1,{\bf{X}}) \neq D\) for some satisfying assignment $\bf{X}$ of \(\phi_1\) then no graph realisation \((G=(V,E), \Phi)\) of \(D\) with \(|V| = n+1\) exists.
\end{proposition}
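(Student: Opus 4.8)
The plan is to reduce the statement to the lower bound already established in Proposition~\ref{prop:drpminlowerbound}, by first identifying the matrix \(D(\phi_1,{\bf{X}})\) with the distance matrix \(D^{(2)}\) of the 2-skeleton. Lemma~\ref{lem:k1eqd2} tells us that every satisfying assignment \({\bf{X}}\) of \(\phi_1\) yields \(D(\phi_1,{\bf{X}}) = D^{(2)}\), so the hypothesis \(D(\phi_1,{\bf{X}}) \neq D\) is exactly the statement that \(D^{(2)} \neq D\). This is the one substantive input to the argument, and it has already been carried out; everything that follows is a short chain of deductions. (By Corollary~\ref{cor:lem:k1dleqdprime} it does not matter which satisfying assignment we take, since \(D(\phi_1,{\bf{X}})\) is the same for all of them.)

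Next, I would translate \(D^{(2)} \neq D\) into a statement about \(q_0\), the smallest \(q \in \N\) with \(D^{(q)} = D\). The ordering in Lemma~\ref{lem:skeletonordering} gives \(D^{(1)}_{ij} \geq D^{(2)}_{ij} \geq \cdots \geq D_{ij}\) for all \(i,j \in [n]\); in particular \(D^{(2)}_{ij} \geq D_{ij}\) holds with equality throughout if and only if \(D^{(2)} = D\). Since by hypothesis \(D^{(2)} \neq D\), there is some pair \(i,j\) for which the inequality is strict, so \(D^{(q)} = D\) can hold for neither \(q = 1\) nor \(q = 2\). Hence \(q_0 \geq 3\).

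Finally, I would invoke Proposition~\ref{prop:drpminlowerbound}, which asserts that any graph realisation \((G=(V,E),\Phi)\) of \(D\) satisfies \(|V| \geq n + (q_0 - 1)\). Combining this with \(q_0 \geq 3\) gives \(|V| \geq n+2\), so no graph realisation with \(|V| = n+1\) can exist, which is precisely the claim.

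The main content has been front-loaded into Lemma~\ref{lem:k1eqd2} (which shows the 2-SAT construction faithfully computes \(D^{(2)}\)) and into the lower bound of Proposition~\ref{prop:drpminlowerbound} (which converts non-realisability of short elementary paths into a vertex count). Given these, there is essentially no remaining obstacle; the only point requiring care is the bookkeeping of strict versus weak inequalities when deducing \(q_0 \geq 3\), so as to be certain that \(D^{(2)} \neq D\) genuinely forces the first index of equality up to at least \(3\).
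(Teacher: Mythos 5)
Your proof is correct, but it takes a genuinely different route from the paper's. The paper argues by contradiction entirely inside the 2-SAT machinery: if a realisation \((G,\Phi)\) with \(|V|=n+1\) existed, then the truth assignment \({\bf{X}}'\) consistent with \(G\) would satisfy \(\phi_1\) (the clauses are necessary conditions, Observation~\ref{obs:prop:k1cneccessary}), and since the induced subgraph of \(G\) on \(v_1,\dots,v_n\) is \(G_D\) (Proposition~\ref{prop:g_d_subgraph}) one has \(G_{\phi_1,{\bf{X}}'}=G\) and hence \(D(\phi_1,{\bf{X}}')=D\); Corollary~\ref{cor:lem:k1dleqdprime} then forces \(D(\phi_1,{\bf{X}})=D(\phi_1,{\bf{X}}')=D\), contradicting the hypothesis. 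You instead invoke Lemma~\ref{lem:k1eqd2} to identify \(D(\phi_1,{\bf{X}})\) with \(D^{(2)}\), deduce \(D^{(2)}\neq D\) and hence \(q_0\geq 3\) (your handling of the strict inequality via Lemma~\ref{lem:skeletonordering} is sound, and \(D^{(2)}_{ij}\geq D_{ij}\) also follows from the remark after Definition~\ref{def:Gk}), and then apply the vertex lower bound of Proposition~\ref{prop:drpminlowerbound}. Both arguments rest on the same core fact, since Corollary~\ref{cor:lem:k1dleqdprime} is an immediate consequence of Lemma~\ref{lem:k1eqd2}, but they diverge in the final step: the paper's contradiction stays local to the 2-SAT construction and needs no skeleton machinery beyond that lemma, whereas your route passes through the elementary-path counting of Propositions~\ref{prop:drpminlowerboundpre} and~\ref{prop:drpminlowerbound} and in exchange yields slightly more, namely that \emph{any} graph realisation of \(D\) requires \(|V|\geq n+2\), not merely that \(|V|=n+1\) is impossible.
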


\begin{proof}
    Assume that a graph realisation \((G=(V,E), \Phi)\) of \(D\) with \(|V| = n+1\) exists.
    By Observation \ref{obs:prop:k1cneccessary}, we know that there exists a satisfying assignment \({\bf{X}}'\) of \(\phi_1\), consistent with the graph \(G\), such that \(D(\phi_1,{\bf{X}}') = D\).
    By Corollary \ref{cor:lem:k1dleqdprime}, we know that \(D(\phi_1,{\bf{X}}) = D(\phi_1,{\bf{X}}')\) which is a contradiction to \(D(\phi_1,{\bf{X}}) \neq D\).
\end{proof}
\medskip
In summary, we have the following polynomial time algorithm to determine whether there exists a graph realisation \((G=(V,E), \Phi)\) of \(D\) with \(|V| = n+1\). Moreover, the algorithm produces such a graph realisation if it exists.
\medskip
\begin{alg}[Solving 1-\textsc{CombDMR}]{\phantom{Bob}}

    \label{alg:kp1}
    \noindent {\bf{Input:}} An \(n \times n\) distance matrix \(D\).\\
    {\bf{Output:}} A graph realisation \((G=(V,E), \Phi)\) of \(D\) with \(|V| = n+1\) or a statement that no such graph realisation exists.
    \begin{itemize}
        \item[1.] Let \(\Phi(i) = v_i\) for \(i \in [n]\). 
        \item[2.] Construct the 2-CNF formula $\phi_1 = \bigwedge_{c\in\mathcal{C}_1}c$ as described above.
        \item[3.] Compute a satisfying assignment \({\bf{X}}\) of $\phi_1$, if it exists.
        \item[4.] If \(\phi_1\) is not satisfiable then no graph realisation \((G=(V,E), \Phi)\) of \(D\) with \(|V| = n+1\) exists. (By Observation \ref{obs:prop:k1cneccessary})
        \item[5.] If \(\phi_1\) is satisfiable then construct the graph \(G_{\phi_1,{\bf{X}}}\) consistent with the satisfying assignment \({\bf{X}}\).
        \item[6.] Compute the distance matrix \(D(\phi_1,{\bf{X}})\) of \(G_{\phi_1,{\bf{X}}}\) (using any APSP algorithm).
        \item[7.] If \(D(\phi_1,{\bf{X}}) = D\) then \((G_{\phi_1,{\bf{X}}}, \Phi)\) is a graph realisation of \(D\) with \(|V| = n+1\).
        \item[] Otherwise, if \(D(\phi_1,{\bf{X}}) \neq D\) then no graph realisation of \(D\) with \(|V| = n+1\) exists. (By Proposition \ref{prop:k1_2sat2_not_realised})
    \end{itemize}
\end{alg}
\medskip
We then have the following Theorem.
\medskip
\begin{theorem}
    \label{thm:kp1}
    1-\textsc{CombDMR} is solvable in polynomial time. 
    Moreover, if the input distance matrix \(D\) is a \textsc{YES}-instance of 1-\textsc{CombDMR}, then also a graph realisation \((G=(V,E), \Phi)\) of \(D\) can be computed with the same running time.
\end{theorem}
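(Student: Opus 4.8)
The plan is to show that Algorithm~\ref{alg:kp1}, preceded by a self-realisability test, decides 1-\textsc{CombDMR} correctly and runs in polynomial time; the constructive claim then follows because the procedure explicitly outputs a realising graph whenever one exists. Since a 1-\textsc{CombDMR} instance asks for a realisation with \(|V| \le n+1\), I would split the analysis into the cases \(|V| = n\) and \(|V| = n+1\).

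First I would verify that \(D\) is a distance matrix (Proposition~\ref{prop:graphrealisationmetric}) and then test self-realisability via Theorem~\ref{thm:nequalk}: construct \(G_D\) and check whether its distance function equals \(D\), equivalently whether \(D^{(1)} = D\). If so, \((G_D, \Phi)\) is a realisation with \(|V| = n \le n+1\) and we answer \textsc{Yes}. Otherwise Proposition~\ref{prop:drpminlowerbound} guarantees \(|V| \ge n+1\) for every realisation, so it remains to decide the existence of a realisation with exactly \(n+1\) vertices, which is precisely the task addressed by Algorithm~\ref{alg:kp1}.

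For correctness of the \(|V| = n+1\) case I would assemble the established facts. If \(\phi_1\) is unsatisfiable, Observation~\ref{obs:prop:k1cneccessary} rules out any realisation with \(|V| \le n+1\), so we answer \textsc{No}. If \(\phi_1\) is satisfiable, we fix one satisfying assignment \(\mathbf{X}\), build \(G_{\phi_1,\mathbf{X}}\), and compute \(D(\phi_1,\mathbf{X})\). When \(D(\phi_1,\mathbf{X}) = D\), the pair \((G_{\phi_1,\mathbf{X}}, \Phi)\) is by construction a realisation with \(|V| = n+1\), so we answer \textsc{Yes}. When \(D(\phi_1,\mathbf{X}) \ne D\), Proposition~\ref{prop:k1_2sat2_not_realised} shows that no realisation with \(|V| = n+1\) exists, and combined with the failed self-realisability test this yields \textsc{No}. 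The point that makes this efficient, and which I expect to carry the conceptual weight, is that we need inspect only a \emph{single} satisfying assignment: Corollary~\ref{cor:lem:k1dleqdprime} (via Lemma~\ref{lem:k1eqd2}) ensures that \(D(\phi_1,\mathbf{X})\) is independent of the chosen assignment, so the outcome of the test does not depend on which assignment the 2-SAT solver happens to return, and no search over assignments is required.

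Finally, for the running time I would confirm that each step is polynomial in \(n\). Verifying the distance-matrix axioms costs \(O(n^3)\); constructing \(G_D\) and computing both \(D^{(1)}\) and the distances \(d_{G_D}\) each require one all-pairs shortest-paths computation. The formula \(\phi_1\) has \(O(n^2)\) clauses of constant size, and 2-SAT is solvable in polynomial time~\cite{Krom1967}. Building \(G_{\phi_1,\mathbf{X}}\), running a final all-pairs shortest-paths computation to obtain \(D(\phi_1,\mathbf{X})\), and comparing two \(n \times n\) matrices are all polynomial. Hence the whole procedure, including the output of a realising graph in the \textsc{Yes} case, runs in polynomial time, which establishes both assertions of the theorem.
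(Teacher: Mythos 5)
Your proposal is correct and takes essentially the same approach as the paper's proof: it establishes correctness of Algorithm~\ref{alg:kp1} from Observation~\ref{obs:prop:k1cneccessary}, Lemma~\ref{lem:k1eqd2}, Corollary~\ref{cor:lem:k1dleqdprime} and Proposition~\ref{prop:k1_2sat2_not_realised}, with the key point (only one satisfying assignment of \(\phi_1\) need be inspected) identified exactly as in the paper, together with the standard polynomial bounds for 2-SAT and all-pairs shortest paths. The only difference is presentational: you fold the preliminary self-realisability test (Theorem~\ref{thm:nequalk}) and the \(|V|=n\) versus \(|V|=n+1\) case split into the proof itself, whereas the paper carries out that step in the preamble of Section~\ref{sec:kplus1} before stating the algorithm.
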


\begin{proof}
    We proceed by proving the correctness of Algorithm \ref{alg:kp1} and its polynomial time complexity.
    Correctness follows from Proposition \ref{prop:k1_2sat2_not_realised}, Lemma \ref{lem:k1eqd2} and Observation~\ref{obs:prop:k1cneccessary}.
    The 2-CNF formula \(\phi_1\) can be constructed and solved in polynomial time.
    The distance matrix \(D(\phi_1,{\bf{X}})\) can be computed in polynomial time.
    Furthermore, Corollary \ref{cor:lem:k1dleqdprime} implies that we only require a single satisfying assignment of \(\phi_1\) to determine whether a graph realisation \((G=(V,E), \Phi)\) of \(D\) with \(|V| = n+1\) exists or not.
\end{proof}
\medskip

\section{Polynomial solution of $2$-\textsc{CombDMR} \label{sec:kplus2}}
In this section we consider $k$-\textsc{CombDMR} in the case where \(k=2\).
Knowing that \(D\) must be a distance matrix by Proposition \ref{prop:graphrealisationmetric}, we can check if \(k \in \{0,1\}\) is sufficient by application of Theorem \ref{thm:nequalk} and Theorem \ref{thm:kp1}, respectively.
If this is not the case, we must have \(|V|\geq n+2\) for any graph realisation \((G=(V,E), \Phi)\) of \(D\).

Now we aim to develop a polynomial time algorithm to decide the existence of a graph realisation \((G=(V,E),\Phi)\) of \(D\) with \(k = 2\). 
Additionally, our algorithm will also provide such a graph realisation if it exists.
The approach we take to this problem involves solving two particular 2-Satisfiability (2-SAT) instances.
In any graph realisation of \(D\) with \(|V| = n+2\) and \(\Phi(i) = v_i\) for \(i \in [n]\), the two additional vertices \(v_{n+1}\) and \(v_{n+2}\) are either not adjacent (\(\{v_{n+1}, v_{n+2}\} \notin E\)) or adjacent (\(\{v_{n+1}, v_{n+2}\} \in~E\)).
We consider these two cases separately as different 2-SAT instances, denoted by \(\phi_2\) and \(\phi_2'\), respectively.

First, let us consider the case where \(\{v_{n+1}, v_{n+2}\} \notin E\).
The associated 2-SAT instance \(\phi_2\) as a disjunction of clauses \(\mathcal{C}_2\), is constructed as follows:
\begin{itemize}
    \item[1.] Let \(G_D = (V_D,E_D)\) be the graph as described previously with vertices \(V_D = \{v_1,\dots,v_n\}\) and \(\Phi(i) = v_i\) for \(i \in [n]\).
    \item[2.] Create an empty list of clauses \(\mathcal{C}_2=\emptyset\) and let \(\{x_{i,j} : i \in [n], j \in \{n+1,n+2\}\}\) be the set of boolean variables representing the existence of an additional edge \(\{v_i, v_j\}\) to those already in \(G_D\) to the additional vertices \(v_{n+1}\) and \(v_{n+2}\).
    That is, \(x_{i,j}\) is true if and only if \(v_i\) is adjacent to \(v_j\).
    \item[3.] For all \(i,j \in [n]\) such that \(D_{ij} > 2\) we know that the vertices \(v_i\) and \(v_j\) must not both be adjacent to \(v_{n+1}\) or \(v_{n+2}\) in any graph realisation of \(D\) as otherwise this would result in a distance of 2 between \(v_i\) and \(v_j\).
    This condition is equivalent to the following clauses both being satisfied:    
    \begin{eqnarray}
        \label{eq:k2_2sat3a}
        (\bar{x}_{i,n+1} \lor \bar{x}_{j,n+1}),\\
        \label{eq:k2_2sat3b}
        (\bar{x}_{i,n+2} \lor \bar{x}_{j,n+2}).
    \end{eqnarray} 
    We therefore add the clauses \eqref{eq:k2_2sat3a} and \eqref{eq:k2_2sat3b} to \(\mathcal{C}_2\) for all \(i,j \in [n]\) such that \(D_{ij} > 2\).
    \item[4.] For all \(i,j \in [n]\), such that, \(D_{ij} = 2\) and \(d_{G_D}(v_i, v_j) > 2\), we know the following boolean expression must be satisfied:
    \begin{equation}
        \label{eq:k2_2sat4a}
        (x_{i,n+1} \land x_{j,n+1}) \lor ({x}_{i,n+2} \land {x}_{j,n+2}),
    \end{equation}
    meaning that the distance between \(v_i\) and \(v_j\) must be 2 and realised via a path of length 2 via \(v_{n+1}\) or \(v_{n+2}\).
    The boolean expression \eqref{eq:k2_2sat4a}, by distributivity, is equivalent to the following 4 clauses being satisfied:
    \begin{eqnarray}
        \label{eq:k2_2sat4b}
        (x_{i,n+1} \lor {x}_{i,n+2}),\\
        \label{eq:k2_2sat4c}
        (x_{j,n+1} \lor {x}_{j,n+2}),\\
        \label{eq:k2_2sat4d}
        (x_{i,n+1} \lor x_{j,n+2}),\\
        \label{eq:k2_2sat4e}
        (x_{i,n+2} \lor x_{j,n+2}).
    \end{eqnarray}
    Therefore, we add the clauses \eqref{eq:k2_2sat4b}, \eqref{eq:k2_2sat4c}, \eqref{eq:k2_2sat4d} and \eqref{eq:k2_2sat4e} to \(\mathcal{C}_2\) for all \(i,j \in [n]\) such that \(D_{ij} = 2\) and \(d_{G_D}(v_i, v_j) > 2\).    
\end{itemize}

This completes the construction of the 2-SAT instance \(\phi_2\) of clauses \(\mathcal{C}_2\).
As in the case of \(k=1\), we have the following analogous observation.
\medskip
\begin{observation}
    \label{obs:prop:k2cneccessary}
    Let \(D\) be an \(n \times n\) distance matrix.
    If the formula \(\phi_2\) is not satisfiable then no graph realisation \((G=(V,E), \Phi)\) of \(D\) with \(|V| \leq n + 2\) exists.
\end{observation}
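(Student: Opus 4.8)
The plan is to reproduce the necessity argument used for Observation \ref{obs:prop:k1cneccessary} and to establish the contrapositive: assuming a graph realisation \((G=(V,E),\Phi)\) of \(D\) with \(|V|\leq n+2\) exists, I would exhibit a satisfying assignment of \(\phi_2\). By Proposition \ref{prop:g_d_subgraph} I may assume \(\Phi(i)=v_i\) for \(i\in[n]\) and that the induced subgraph of \(G\) on \(\{v_1,\dots,v_n\}\) is exactly \(G_D\). I label the at most two additional vertices \(v_{n+1}\) and \(v_{n+2}\) (if \(|V|<n+2\), every variable attached to a missing vertex is simply set to false), and I define the assignment \(\mathbf{X}\) to be the one consistent with \(G\), namely \(x_{i,j}\) is true precisely when \(\{v_i,v_j\}\in E\).

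It then remains to check that \(\mathbf{X}\) satisfies every clause of \(\mathcal{C}_2\). For a pair with \(D_{ij}>2\): if \(v_i\) and \(v_j\) were both adjacent to the same additional vertex \(v_{n+1}\) (respectively \(v_{n+2}\)), then they would be joined by a path of length \(2\), forcing \(d_G(v_i,v_j)\leq 2\) and contradicting \(d_G(v_i,v_j)=D_{ij}>2\); this validates clauses \eqref{eq:k2_2sat3a} and \eqref{eq:k2_2sat3b}. For a pair with \(D_{ij}=2\) and \(d_{G_D}(v_i,v_j)>2\): a shortest path of length \(2\) in \(G\) realising this distance has a single interior vertex \(w\) adjacent to both \(v_i\) and \(v_j\); the hypothesis \(d_{G_D}(v_i,v_j)>2\) rules out \(w\) being one of \(v_1,\dots,v_n\), so \(w\in\{v_{n+1},v_{n+2}\}\). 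This is exactly the content of the boolean expression \eqref{eq:k2_2sat4a}, and hence of the clauses \eqref{eq:k2_2sat4b}--\eqref{eq:k2_2sat4e} into which it is expanded. Therefore \(\mathbf{X}\) satisfies \(\phi_2\), contradicting the assumption that \(\phi_2\) is unsatisfiable.

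The one point requiring genuine care — and what I expect to be the main obstacle — is that the clauses of \(\phi_2\) were derived under the working assumption \(\{v_{n+1},v_{n+2}\}\notin E\), whereas the statement quantifies over \emph{all} realisations with \(|V|\leq n+2\), including those in which the two extra vertices are adjacent. I would resolve this by noting that the edge \(\{v_{n+1},v_{n+2}\}\), whether or not it is present, has no bearing on the distance-\(2\) and distance-greater-than-\(2\) constraints between pairs of original vertices: whether \(v_{n+1}\) (respectively \(v_{n+2}\)) is a common neighbour of a pair \(v_i,v_j\) is governed solely by the variables \(x_{i,n+1},x_{j,n+1}\) (respectively \(x_{i,n+2},x_{j,n+2}\)), and no path of length \(2\) between two original vertices can traverse the edge \(\{v_{n+1},v_{n+2}\}\), since any such path would have an additional vertex as an endpoint. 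Thus this edge can only shorten distances realised by paths of length at least \(3\) (together with distances to the extra vertices themselves), leaving intact precisely the constraints gathered into \(\mathcal{C}_2\). This is why the single instance \(\phi_2\) already encodes a necessary condition valid in both the adjacent and the non-adjacent case, so that its unsatisfiability rules out every realisation with \(|V|\leq n+2\); the companion instance \(\phi_2'\) is needed only afterwards, for the sufficiency and explicit construction.
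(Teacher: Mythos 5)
Your proposal is correct and follows essentially the same route as the paper: the paper treats the observation as immediate on the grounds that every clause of \(\mathcal{C}_2\) is a necessary condition for any graph realisation with \(|V| \leq n+2\), and it makes exactly the remark you labour over, namely that these clauses remain necessary irrespective of whether \(\{v_{n+1}, v_{n+2}\} \in E\) (your argument that no path of length 2 between original vertices can use that edge is the explicit justification the paper leaves implicit). One caveat worth noting: as printed in the paper, clause \eqref{eq:k2_2sat4e} reads \((x_{i,n+2} \lor x_{j,n+2})\), which is \emph{not} implied by \eqref{eq:k2_2sat4a} (take \(x_{i,n+1}, x_{j,n+1}\) true and \(x_{i,n+2}, x_{j,n+2}\) false), so your step ``and hence of the clauses \eqref{eq:k2_2sat4b}--\eqref{eq:k2_2sat4e}'' fails for the literal clause set; this is a transcription error in the paper itself — the correct distributive expansion requires \((x_{j,n+1} \lor x_{i,n+2})\) in place of \eqref{eq:k2_2sat4e} — and with the intended clause set your verification, like the paper's, goes through.
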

\medskip
Note, in particular, that Observation \ref{obs:prop:k2cneccessary} is valid irrespective of whether there is an edge between \(v_{n+1}\) and \(v_{n+2}\).

As before, we will show that we require only a single satisfying assignment of \(\phi_2\) to determine if a graph realisation \((G=(V,E),\Phi)\) of \(D\) with \(|V| = n+2\) and \(\{v_{n+1}, v_{n+2}\} \notin E\) exists or not.
Therefore, we introduce \(G_{\phi_2,{\bf{X}}}\) to be the unique graph with vertex set \(V=\{v_1,\dots,v_{n+2}\}\) having \(\{v_{n+1}, v_{n+2}\} \notin E\) with \(G_D\) as the induced subgraph on the vertices \(v_1,\dots,v_{n}\)  and being consistent with a satisfying assignment~\({\bf{X}}\) of \(\phi_2\).
For such a graph, let \(D(\phi_2,{\bf{X}})\) denote the \(n \times n\) distance matrix of \(G_{\phi_2,{\bf{X}}}\) over the vertices \(\{v_1,\dots,v_n\}\).

\medskip
\begin{lemma}
    \label{lem:k2eqd2}
    Let \({\bf{X}}\) be an assignment of \(\{x_{i,j} : i \in [n], j \in \{n+1,n+2\}\}\)  that satisfies \(\phi_2\) and let \(D^{(2)}\) be the distance matrix of the 2-skeleton of \(D\).
    Then \[D(\phi_2,{\bf{X}}) = D^{(2)}.\]
\end{lemma}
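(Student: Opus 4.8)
The plan is to follow the proof of Lemma~\ref{lem:k1eqd2} almost verbatim, since the only new feature in the $k=2$ case is the presence of a second additional vertex $v_{n+2}$. The decisive structural observation, which I would establish first, is that in $G_{\phi_2,{\bf{X}}}$ every elementary path between two vertices $v_i, v_j \in \{v_1,\dots,v_n\}$ has length at most $2$. Indeed, the interior vertices of such a path must lie in $\{v_{n+1}, v_{n+2}\}$, and since we are in the case $\{v_{n+1}, v_{n+2}\} \notin E$ these two vertices are non-adjacent, so at most one of them can occur as an interior vertex. This is exactly the property that pinned the relevant skeleton at level~$2$ in the $k=1$ argument, and it is the reason $D^{(2)}$, rather than a higher skeleton, is the correct target.

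With this in hand I would first dispatch the pairs with $D_{ij} \leq 2$. Since $G_D$ is the induced subgraph of $G_{\phi_2,{\bf{X}}}$ on $\{v_1,\dots,v_n\}$ by Proposition~\ref{prop:g_d_subgraph}, every edge of $G_D$ (the pairs with $D_{ij}=1$) is present and contributes the correct distance. For a pair with $D_{ij}=2$ the distance cannot be $1$, as there is no edge in $G_D$, and it is at most $2$: either $d_{G_D}(v_i,v_j)=2$ already, or the distance-$2$ clauses \eqref{eq:k2_2sat4b}--\eqref{eq:k2_2sat4e} force a common neighbour among $\{v_{n+1},v_{n+2}\}$, yielding a path of length $2$. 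Hence
\[
D(\phi_2,{\bf{X}})_{ij} = D_{ij} = D^{(2)}_{ij} \quad \text{for all } i,j \text{ with } D_{ij} \leq 2.
\]

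For the pairs with $D_{ij} > 2$ I would argue both inequalities by a path-decomposition and lifting correspondence, again mirroring Lemma~\ref{lem:k1eqd2}. For $D(\phi_2,{\bf{X}})_{ij} \geq D^{(2)}_{ij}$, take a shortest path from $v_i$ to $v_j$ in $G_{\phi_2,{\bf{X}}}$ and split it into elementary subpaths; by the structural observation each has length $1$ or $2$, and each therefore corresponds to an edge of the $2$-skeleton $G^2$ of equal weight --- here the distance-$>2$ clauses \eqref{eq:k2_2sat3a}--\eqref{eq:k2_2sat3b} are needed to guarantee that a length-$2$ elementary subpath can arise only between a pair at distance exactly $2$. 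This produces a walk of length $D(\phi_2,{\bf{X}})_{ij}$ in $G^2$. For the reverse inequality, take a shortest path in $G^2$, whose edges all have weight at most $2$; by the displayed equality each such edge lifts to a path of equal length in $G_{\phi_2,{\bf{X}}}$, giving a walk from $v_i$ to $v_j$ of length $D^{(2)}_{ij}$. Combining the two bounds yields $D(\phi_2,{\bf{X}})_{ij} = D^{(2)}_{ij}$ for these pairs as well.

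The main obstacle, and the only place where the $k=2$ case genuinely differs from $k=1$, is the structural observation that elementary paths remain of length at most $2$; this rests entirely on the non-adjacency assumption $\{v_{n+1}, v_{n+2}\} \notin E$ built into the definition of $G_{\phi_2,{\bf{X}}}$, which is precisely why the adjacent case must be handled separately by the instance $\phi_2'$. Once this is secured, the correspondence between elementary paths of length at most $2$ and weighted edges of $G^2$ is the same technical device as in the $k=1$ proof, and the remaining verifications are routine.
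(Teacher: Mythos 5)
Your proposal is correct and takes essentially the same approach as the paper: it establishes \(D(\phi_2,{\bf{X}})_{ij} = D_{ij} = D^{(2)}_{ij}\) for all pairs with \(D_{ij} \leq 2\) from the clauses, and then runs the two-directional elementary-path decomposition and lifting argument of Lemma~\ref{lem:k1eqd2}. Your explicit structural observation --- that elementary paths in \(G_{\phi_2,{\bf{X}}}\) have length at most \(2\) because \(v_{n+1}\) and \(v_{n+2}\) are non-adjacent --- is exactly the fact the paper leaves implicit when it declares the remainder of its proof ``analogous'' to the \(k=1\) case, so you have merely made the same argument more self-contained.
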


\begin{proof}
    Let \({\bf{X}}\) be a satisfying assignment of \(\phi_2\).
    By construction, we know that \(G_D\) is an induced subgraph of \(G_{\phi_2,{\bf{X}}}\) and by clauses of the form \eqref{eq:k2_2sat3a}, \eqref{eq:k2_2sat3b}, \eqref{eq:k2_2sat4b}, \eqref{eq:k2_2sat4c}, \eqref{eq:k2_2sat4d} and \eqref{eq:k2_2sat4e} we have that,
    \begin{equation*}
        D(\phi_2,{\bf{X}})_{ij} = D_{ij} = D^{(2)}_{ij} \quad \text{for all } i,j \in [n] \text{ with } D_{ij} \leq 2.
    \end{equation*}
    
    Because of this property, the remainder of the proof is analogous to that of Lemma~\ref{lem:k1eqd2}.
\end{proof}
\medskip
As an immediate consequence of Lemma \ref{lem:k2eqd2}, we have the following Corollary.
\medskip
\begin{corollary}
    \label{cor:lem:k2dleqdprime}
    Let \({\bf{X}}\) and \({\bf{X}}'\) be two distinct satisfying assignments of \(\phi_2\) then \[D(\phi_2,{\bf{X}}) = D(\phi_2,{\bf{X}}').\]
\end{corollary}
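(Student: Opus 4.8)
The plan is to observe that Corollary~\ref{cor:lem:k2dleqdprime} is an immediate consequence of Lemma~\ref{lem:k2eqd2}, exactly mirroring how Corollary~\ref{cor:lem:k1dleqdprime} follows from Lemma~\ref{lem:k1eqd2} in the \(k=1\) case. The essential content has already been extracted: Lemma~\ref{lem:k2eqd2} establishes that the distance matrix \(D(\phi_2,{\bf{X}})\) arising from \emph{any} satisfying assignment \({\bf{X}}\) of \(\phi_2\) equals the fixed matrix \(D^{(2)}\), the distance matrix of the 2-skeleton of \(D\). Crucially, \(D^{(2)}\) depends only on the input matrix \(D\) and not on the particular choice of satisfying assignment.

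First I would let \({\bf{X}}\) and \({\bf{X}}'\) be two distinct satisfying assignments of \(\phi_2\). Applying Lemma~\ref{lem:k2eqd2} to \({\bf{X}}\) yields \(D(\phi_2,{\bf{X}}) = D^{(2)}\), and applying the same lemma to \({\bf{X}}'\) yields \(D(\phi_2,{\bf{X}}') = D^{(2)}\). Since both distance matrices equal the common value \(D^{(2)}\), transitivity of equality gives \(D(\phi_2,{\bf{X}}) = D(\phi_2,{\bf{X}}')\), which is exactly the claimed identity. The entire argument is a two-line deduction: evaluate the lemma at each assignment and equate through the shared right-hand side.

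There is essentially no obstacle in this corollary itself, as all the genuine work resides in Lemma~\ref{lem:k2eqd2}; the corollary merely repackages that invariance statement in a form useful for the algorithm. The only point worth confirming is that both \(G_{\phi_2,{\bf{X}}}\) and \(G_{\phi_2,{\bf{X}}'}\) are constructed under the \emph{same} convention (here, that \(\{v_{n+1},v_{n+2}\}\notin E\)), so that Lemma~\ref{lem:k2eqd2} applies uniformly to both; this is guaranteed since both graphs are defined as the unique graph with \(\{v_{n+1},v_{n+2}\}\notin E\) consistent with the respective satisfying assignment. The practical upshot, to be used in the subsequent algorithm, is that computing \(D(\phi_2,{\bf{X}})\) for a single arbitrary satisfying assignment \({\bf{X}}\) suffices to decide whether a graph realisation with \(|V|=n+2\) and \(\{v_{n+1},v_{n+2}\}\notin E\) exists.
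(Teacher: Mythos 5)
Your proposal is correct and matches the paper exactly: the paper presents this corollary as an immediate consequence of Lemma~\ref{lem:k2eqd2}, and your two applications of that lemma (giving \(D(\phi_2,{\bf{X}}) = D^{(2)} = D(\phi_2,{\bf{X}}')\)) are precisely that deduction. Your added remark that both graphs are built under the same convention \(\{v_{n+1},v_{n+2}\}\notin E\) is a sensible sanity check but not a departure from the paper's argument.
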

\medskip
As in the case of \(1\)-\textsc{CombDMR}, we have the following Proposition.
\medskip
\begin{proposition}
    \label{prop:k2_2sat2_not_realised}
    Let \(D\) be an \(n \times n\) distance matrix.
    If \(D(\phi_2,{\bf{X}}) \neq D\) for at least one satisfying assignment \({\bf{X}}\) of \(\phi_2\) then no graph realisation \((G=~(V,E), \Phi)\) of \(D\) exists with \(|V| = n+2\) and \(\{v_{n+1}, v_{n+2}\} \notin E\), where \(v_{n+1}, v_{n+2}  \in V \setminus \Phi([n])\).
\end{proposition}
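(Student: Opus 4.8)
The plan is to mirror the argument of Proposition \ref{prop:k1_2sat2_not_realised} from the \(k=1\) case, reasoning by contradiction. I would begin by supposing that, contrary to the claim, there does exist a graph realisation \((G=(V,E),\Phi)\) of \(D\) with \(|V| = n+2\) and \(\{v_{n+1}, v_{n+2}\} \notin E\), while simultaneously \(D(\phi_2,{\bf{X}}) \neq D\) holds for the given satisfying assignment \({\bf{X}}\).

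The first substantive step is to extract from this hypothetical realisation a satisfying assignment \({\bf{X}}'\) of \(\phi_2\) that genuinely realises \(D\). By Proposition \ref{prop:g_d_subgraph}, the induced subgraph of \(G\) on \(\{v_1,\dots,v_n\}\) must coincide with \(G_D\); combined with the standing assumption \(\{v_{n+1},v_{n+2}\}\notin E\), this means \(G\) is completely determined by its edges joining \(\{v_1,\dots,v_n\}\) to the two added vertices. I would therefore let \({\bf{X}}'\) be the assignment consistent with \(G\), setting \(x_{i,j}\) true exactly when \(\{v_i,v_j\}\in E\) for \(i\in[n]\), \(j\in\{n+1,n+2\}\). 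Because every clause of \(\phi_2\) --- those of the form \eqref{eq:k2_2sat3a}, \eqref{eq:k2_2sat3b} arising from \(D_{ij}>2\), and those of the form \eqref{eq:k2_2sat4b}--\eqref{eq:k2_2sat4e} arising from \(D_{ij}=2\) with \(d_{G_D}(v_i,v_j)>2\) --- encodes a necessary condition on any such realisation (the reasoning underlying Observation \ref{obs:prop:k2cneccessary}), the assignment \({\bf{X}}'\) indeed satisfies \(\phi_2\). Moreover, since \(G\) is the unique graph with \(\{v_{n+1},v_{n+2}\}\notin E\), induced subgraph \(G_D\) on \(\{v_1,\dots,v_n\}\), and consistency with \({\bf{X}}'\), we obtain \(G = G_{\phi_2,{\bf{X}}'}\); as \((G,\Phi)\) realises \(D\), this yields \(D(\phi_2,{\bf{X}}') = D\).

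The contradiction is then immediate from the invariance already established: Corollary \ref{cor:lem:k2dleqdprime} gives \(D(\phi_2,{\bf{X}}) = D(\phi_2,{\bf{X}}')\), so \(D(\phi_2,{\bf{X}}) = D\), contradicting the hypothesis \(D(\phi_2,{\bf{X}}) \neq D\). This closes the case \(\{v_{n+1},v_{n+2}\}\notin E\); the complementary case \(\{v_{n+1},v_{n+2}\}\in E\) is handled separately by the instance \(\phi_2'\).

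I expect the only delicate point to be the verification in the middle step that the hypothetical realisation really does induce a \emph{satisfying} assignment of \(\phi_2\) whose associated graph is exactly \(G\) --- that is, confirming that no edge between the two added vertices can sneak in (ruled out by the standing hypothesis) and that the induced subgraph on \(\{v_1,\dots,v_n\}\) is forced to be \(G_D\) (supplied by Proposition \ref{prop:g_d_subgraph}). Once the identification \(G = G_{\phi_2,{\bf{X}}'}\) is pinned down, the heavy lifting is done entirely by the invariance in Corollary \ref{cor:lem:k2dleqdprime}, so the remainder of the argument is purely formal and identical in shape to the \(k=1\) case.
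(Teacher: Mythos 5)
Your proposal is correct and follows essentially the same route as the paper: the paper's proof is exactly the contradiction argument you describe---assume the realisation exists, obtain from it a satisfying assignment \({\bf{X}}'\) consistent with \(G\) with \(D(\phi_2,{\bf{X}}')=D\) (via Observation \ref{obs:prop:k2cneccessary}), and then invoke the invariance of Corollary \ref{cor:lem:k2dleqdprime} to contradict \(D(\phi_2,{\bf{X}})\neq D\). Your additional care in verifying that the induced assignment really satisfies \(\phi_2\) and that \(G=G_{\phi_2,{\bf{X}}'}\) only makes explicit what the paper leaves implicit in its appeal to the analogous \(k=1\) proof.
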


\begin{proof}
    The proof is analogous to that of Proposition \ref{prop:k1_2sat2_not_realised} and follows from Observation \ref{obs:prop:k2cneccessary} and Corollary \ref{cor:lem:k2dleqdprime}.
\end{proof}
\medskip

Therefore, if \(\phi_2\) has at least one satisfying assignment \({\bf{X}}\), then the condition \(D(\phi_2,{\bf{X}}) = D\) is sufficient to determine if a graph realisation \((G=(V,E), \Phi)\) of \(D\) with \(|V| = n+2\) and \(\{v_{n+1}, v_{n+2}\} \notin E\) exists.

\medskip
Now let us consider the case where \(\{v_{n+1}, v_{n+2}\} \in E\) and construct the corresponding 2-SAT instance \(\phi_2'\) with clauses \(\mathcal{C}_2'\).
We know by Observation \ref{obs:prop:k2cneccessary} that all clauses included in \(\mathcal{C}_2\) are necessary conditions for a graph realisation \((G=(V,E), \Phi)\) of \(D\) with \(|V| = n+2\) irrespective of \(v_{n+1}\) and \(v_{n+2}\) being adjacent or not.
Therefore, it is clear that all clauses of \(\mathcal{C}_2\) should be included in \(\mathcal{C}_2'\).
Furthermore, we have the following lemma.
\medskip
\begin{lemma}
    \label{lem:k2shortest}
    Let \(D\) be an \(n \times n\) distance matrix and \(i,j \in [n]\).
    Assume \(D_{ij} = 3\) and \(D_{ij}^{(2)} > 3\), where \(D^{(2)}\) is the distance matrix of the 2-skeleton of \(D\).
    In any graph realisation \((G=(V,E), \Phi)\) of \(D\) with \(V = \{v_i = \Phi(i): i \in [n]\} \cup \{v_{n+1},v_{n+2}\}\), any shortest path from \(v_i\) to \(v_j\) in \(G\) must be of the following form: \[v_i \to v_{n+1} \to v_{n+2} \to v_j \quad {\rm{or}}\quad v_i \to v_{n+2} \to v_{n+1} \to v_j.\]
\end{lemma}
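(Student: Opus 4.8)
The plan is to analyse an arbitrary shortest path $\pi$ from $v_i$ to $v_j$ in $G$ and show that its two interior vertices are forced to be exactly $v_{n+1}$ and $v_{n+2}$. Since $(G,\Phi)$ realises $D$ we have $d_G(v_i,v_j) = D_{ij} = 3$, so $\pi$ has length $3$ and, being a shortest path, is a simple path $v_i \to a \to b \to v_j$ on four distinct vertices with two interior vertices $a,b \in V = \Phi([n]) \cup \{v_{n+1}, v_{n+2}\}$. The goal then reduces to ruling out, by contradiction with the hypothesis $D^{(2)}_{ij} > 3$, every possibility except $\{a,b\} = \{v_{n+1}, v_{n+2}\}$.

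The key bridge to the $2$-skeleton is the following two observations, each translating local structure of $\pi$ in $G$ into short edges of $G^2$. First, by Proposition \ref{prop:g_d_subgraph} the induced subgraph of $G$ on $\Phi([n])$ is $G_D$, so any edge of $\pi$ joining two original vertices $v_k, v_l$ forces $D_{kl} = 1$ and hence contributes a weight-$1$ edge $\{k,l\}$ to $G^2$. Second, if two original vertices $v_k, v_l$ share a common additional neighbour (one of $v_{n+1}, v_{n+2}$), then $d_G(v_k,v_l) \le 2$, so $D_{kl} \le 2$ and thus $D^{(2)}_{kl} = D_{kl} \le 2$; in particular $\{k,l\}$ is again an edge of $G^2$, of weight at most $2$.

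With these in hand I would carry out the case analysis on how many of $a, b$ lie in $\{v_{n+1}, v_{n+2}\}$, using the symmetries of reversing $\pi$ and swapping the labels of the two additional vertices to cut down the subcases. If both $a, b \in \Phi([n])$, the three edges of $\pi$ are weight-$1$ edges of $G^2$ by the first observation, giving a path of length $3$ in $G^2$ and hence $D^{(2)}_{ij} \le 3$, a contradiction. If exactly one interior vertex is additional, then after reversing and relabelling we may assume $\pi = v_i \to v_{n+1} \to v_l \to v_j$ with $v_l \in \Phi([n])$; here $\{v_l, v_j\}$ forces $D_{lj} = 1$, while $v_i$ and $v_l$ share the neighbour $v_{n+1}$ so $D_{il} \le 2$, whence $D^{(2)}_{ij} \le D^{(2)}_{il} + D^{(2)}_{lj} \le 2 + 1 = 3$, again a contradiction. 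The only remaining possibility is that both interior vertices are additional, and since they are distinct this forces $\{a,b\} = \{v_{n+1}, v_{n+2}\}$, yielding precisely the two stated forms of $\pi$.

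The main obstacle I anticipate is not any single computation but the bookkeeping of the mixed case: one must check that ``one interior vertex original, one additional'' genuinely reduces — via reversal of the path and the interchangeability of $v_{n+1}$ and $v_{n+2}$ — to the single configuration $v_i \to v_{n+1} \to v_l \to v_j$, and that the common-neighbour argument correctly yields a weight-$\le 2$ edge of the $2$-skeleton rather than merely a bound in $G$. Once the correspondence between edges and length-$2$ detours of $\pi$ and edges of $G^2$ is set up cleanly, the hypothesis $D^{(2)}_{ij} > 3$ does all the work, since it precisely forbids any $v_i$--$v_j$ route that survives the collapse into the $2$-skeleton with length at most $3$.
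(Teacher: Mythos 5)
Your proposal is correct and follows essentially the same route as the paper: the paper also argues by contradiction that if the two interior vertices of the length-$3$ shortest path are not exactly $\{v_{n+1},v_{n+2}\}$, then the path decomposes at its interior original vertices into elementary paths of length $1$ or $2$, each corresponding to an edge of the $2$-skeleton, forcing $D^{(2)}_{ij} \leq 3$ and contradicting $D^{(2)}_{ij} > 3$. Your explicit case analysis (both interior vertices original; exactly one original, handled via the common-neighbour bound and the triangle inequality in $G^2$) is just a spelled-out version of that single uniform argument.
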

\begin{proof}
    Let \((G=(V,E), \Phi)\) be a graph realisation of \(D\) with \(V = \{v_i = \Phi(i): i \in [n]\} \cup \{v_{n+1},v_{n+2}\}\).
    Since \(D_{ij}=3\), there exists a shortest path of the form \(v_i \to v_{s} \to v_{t} \to v_j\) for some \(s,t \in [n+2]\).
    If \(\{s,t\} \neq \{n+1,n+2\}\) then this shortest is a concatenation of elementary paths of length \(1\) or \(2\) and therefore \(D_{ij} = D^{(2)}_{ij}\), which is a contradiction to the assumption that \(D_{ij}^{(2)} > 3\).
\end{proof}
Now we construct the 2-SAT instance \(\phi_2'\) with clauses \(\mathcal{C}_{2}'\), where \(\mathcal{C}_{2}'\) contains all the clauses of \(\mathcal{C}_{2}\) and further the clauses obtained by the following process:
\begin{itemize}
    \item[1.] Calculate \(D^{(2)}\) the distance matrix of the 2-skeleton of \(D\).
    \item[2.] For all \(i,j \in [n]\) with \(D_{ij} > 3\), we know that, if $v_i$ is adjacent to $v_{n+1}$ then $v_j$ cannot be adjacent to $v_{n+2}$, as otherwise this would result in a distance of \(3\) between \(v_i\) and \(v_j\). Similarly, if $v_i$ is adjacent to $v_{n+2}$ then $v_j$ cannot be adjacent to $v_{n+1}$.
    This condition is equivalent to the following clauses both being satisfied:
    \begin{eqnarray}
        \label{eq:2sat6a}
        (\bar{x}_{i,n+1} \lor \bar{x}_{j,n+2}),\\
        \label{eq:2sat6b}
        (\bar{x}_{i,n+2} \lor \bar{x}_{j,n+1}).
    \end{eqnarray} 
    We therefore add the clauses \eqref{eq:2sat6a} and \eqref{eq:2sat6b} to \(\mathcal{C}_{2}'\) for all \(i,j \in [n]\) with \(D_{ij} > 3\).
    \item[3.] For all \(i,j \in [n]\) with \(D_{ij} = 3\) and \(D_{ij}^{(2)} > 3\), the following boolean expression must be satisfied (by Lemma \ref{lem:k2shortest}):
    \begin{equation}
        \label{eq:2sat7a}
        (x_{i,n+1} \land x_{j,n+2}) \lor ({x}_{i,n+2} \land {x}_{j,n+1}),
    \end{equation}
    meaning that the distance between \(v_i\) and \(v_j\) must be 3 and it must be realised via a path of length 3 through \(v_{n+1}\) and \(v_{n+2}\).
    The boolean expression \eqref{eq:2sat7a}, by distributivity, is equivalent to the following 4 clauses being satisfied:
    \begin{eqnarray}
        \label{eq:2sat7b}
        (x_{i,n+1} \lor {x}_{i,n+2}),\\
        \label{eq:2sat7c}
        (x_{j,n+1} \lor {x}_{j,n+2}),\\
        \label{eq:2sat7d}
        (x_{i,n+1} \lor x_{j,n+1}),\\
        \label{eq:2sat7e}
        (x_{i,n+2} \lor x_{j,n+2}).
    \end{eqnarray}
    Therefore, we add the clauses \eqref{eq:2sat7b}, \eqref{eq:2sat7c}, \eqref{eq:2sat7d} and \eqref{eq:2sat7e} to \(\mathcal{C}_2'\) for all \(i,j \in [n]\) with \(D_{ij} = 3\) and \(D^{(2)}_{ij} > 3\).
\end{itemize}
This completes the construction of the 2-SAT instance \(\phi_2'\) with clauses \(\mathcal{C}_2'\).
As with \(\phi_2\), we have the following observation for \(\phi_2'\).
\medskip
\begin{observation}
    \label{obs:prop:k2pcneccessary}
    Let \(D\) be an \(n \times n\) distance matrix.
    If \(\phi'_2\) is not satisfiable then no graph realisation \((G=(V,E), \Phi)\) of \(D\) exists with \(|V| = n + 2\) and \(\{v_{n+1}, v_{n+2}\} \in E\), where
    \(v_{n+1}, v_{n+2}  \in V \setminus \Phi([n])\).
\end{observation}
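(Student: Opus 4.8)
The plan is to prove the contrapositive: I will show that any graph realisation \((G=(V,E),\Phi)\) of \(D\) with \(|V|=n+2\) and \(\{v_{n+1},v_{n+2}\}\in E\) gives rise to a truth assignment that satisfies \(\phi'_2\). Since \(\phi'_2\) is the conjunction of the clauses of \(\mathcal{C}_2\) together with the additional clauses \eqref{eq:2sat6a}, \eqref{eq:2sat6b} and \eqref{eq:2sat7b}--\eqref{eq:2sat7e}, it suffices to verify that every one of these clauses is a necessary condition for the existence of such a realisation. The assignment \({\bf{X}}\) consistent with \(G\) (setting \(x_{i,j}\) true exactly when \(\{v_i,v_j\}\in E\), for \(i\in[n]\) and \(j\in\{n+1,n+2\}\)) will then satisfy all of them, so unsatisfiability of \(\phi'_2\) precludes any such \(G\).

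First I would take the assignment \({\bf{X}}\) consistent with \(G\) and recall that the clauses of \(\mathcal{C}_2\) are already known to be necessary for any realisation with \(|V|\le n+2\), irrespective of whether \(v_{n+1}\) and \(v_{n+2}\) are adjacent, by Observation \ref{obs:prop:k2cneccessary}. Hence \({\bf{X}}\) satisfies every clause of \(\mathcal{C}_2\), and it remains only to check the clauses that are new to \(\mathcal{C}_2'\).

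Next, for the clauses \eqref{eq:2sat6a} and \eqref{eq:2sat6b} arising from pairs with \(D_{ij}>3\), I would argue directly: if \(v_i\) were adjacent to \(v_{n+1}\) and \(v_j\) adjacent to \(v_{n+2}\), then, using the edge \(\{v_{n+1},v_{n+2}\}\in E\), the path \(v_i\to v_{n+1}\to v_{n+2}\to v_j\) would give \(d(v_i,v_j)\le 3<D_{ij}\), a contradiction; swapping the roles of \(v_{n+1}\) and \(v_{n+2}\) yields \eqref{eq:2sat6b}. For the clauses \eqref{eq:2sat7b}--\eqref{eq:2sat7e} arising from pairs with \(D_{ij}=3\) and \(D^{(2)}_{ij}>3\), I would invoke Lemma \ref{lem:k2shortest}, which forces every shortest \(v_i\)--\(v_j\) path to pass through both \(v_{n+1}\) and \(v_{n+2}\) in one of the two prescribed orders. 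This is precisely the boolean expression \eqref{eq:2sat7a}, whose distributive expansion is the conjunction of \eqref{eq:2sat7b}--\eqref{eq:2sat7e}, so \({\bf{X}}\) satisfies these as well.

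The argument is essentially a bookkeeping verification and closely mirrors Observations \ref{obs:prop:k1cneccessary} and \ref{obs:prop:k2cneccessary}. The only nontrivial ingredient is Lemma \ref{lem:k2shortest}, which I would treat as already established and which is the crux for the distance-\(3\) pairs: such pairs are exactly those whose distance cannot be realised without routing through the edge between the two new vertices, so they must be handled via the lemma rather than by a naive path count. I do not anticipate any genuine obstacle beyond this case distinction.
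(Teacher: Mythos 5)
Your proposal is correct and is essentially the paper's own reasoning: the paper states this as an Observation precisely because every clause of \(\mathcal{C}_2'\) was introduced as a necessary condition for such a realisation (clauses of \(\mathcal{C}_2\) via Observation \ref{obs:prop:k2cneccessary}, the \(D_{ij}>3\) clauses via the length-3 path through the edge \(\{v_{n+1},v_{n+2}\}\), and the \(D_{ij}=3\), \(D^{(2)}_{ij}>3\) clauses via Lemma \ref{lem:k2shortest}), so the assignment consistent with any such \(G\) satisfies \(\phi_2'\). You have simply made this contrapositive bookkeeping explicit, exactly as the paper intends.
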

\medskip

Let \({\bf{X}}\) be a satisfying assignment of \(\phi_2'\).
Then \(G'_{\phi_2',{\bf{X}}}\) denotes the graph \(G_{\phi_2,{\bf{X}}}\) with the additional edge \(\{v_{n+1},v_{n+2}\}\).
Furthermore, \(D'(\phi_2',{\bf{X}})\) denotes the \(n \times n\) distance matrix of \(G'_{\phi_2',{\bf{X}}}\) over the vertices \(\{v_1,\dots,v_n\}\).
We have the following lemma.
\medskip
\begin{lemma}
    \label{lem:k2peqd3}
    Let \({\bf{X}}\) be a satisfying assignment of \(\{x_{i,j} : i \in [n], j \in \{n+1,n+2\}\}\) that satisfies \(\phi'_2\) and let \(D^{(3)}\) be the distance matrix of the 3-skeleton of \(D\).
    Then \[D'(\phi'_2,{\bf{X}}) = D^{(3)}.\]
\end{lemma}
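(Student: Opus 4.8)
The plan is to mirror the proofs of Lemma~\ref{lem:k1eqd2} and Lemma~\ref{lem:k2eqd2}, with the crucial difference that the additional edge \(\{v_{n+1},v_{n+2}\}\) now permits \emph{elementary paths of length up to \(3\)} (namely \(v_a \to v_{n+1} \to v_{n+2} \to v_b\) and its reverse), so the relevant skeleton is the \(3\)-skeleton \(G^3\) rather than the \(2\)-skeleton. First I would establish the base case that \(D'(\phi_2',{\bf{X}})_{ij} = D_{ij} = D^{(3)}_{ij}\) for all \(i,j \in [n]\) with \(D_{ij} \le 3\), and then derive a contradiction in the case \(D_{ij} > 3\) by comparing shortest paths in \(G'_{\phi_2',{\bf{X}}}\) with shortest paths in \(G^3\), exactly as in the second half of the proof of Lemma~\ref{lem:k1eqd2}. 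Throughout I would use that, since \(\mathcal{C}_2 \subseteq \mathcal{C}_2'\), the assignment \({\bf{X}}\) also satisfies \(\phi_2\), so that \(G_{\phi_2,{\bf{X}}}\) is defined and Lemma~\ref{lem:k2eqd2} applies to it.

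For the base case I would argue in two steps. When \(D_{ij} \le 2\), note that any path between two vertices of \([n]\) that traverses the new edge \(\{v_{n+1},v_{n+2}\}\) has length at least \(3\), since it enters \(\{v_{n+1},v_{n+2}\}\) from a vertex of \([n]\), crosses the edge, and leaves to a vertex of \([n]\); hence adjoining this edge to \(G_{\phi_2,{\bf{X}}}\) changes no distance that was already at most \(2\), and Lemma~\ref{lem:k2eqd2} gives \(D'(\phi_2',{\bf{X}})_{ij} = D(\phi_2,{\bf{X}})_{ij} = D^{(2)}_{ij} = D_{ij} = D^{(3)}_{ij}\). When \(D_{ij} = 3\), I would split according to \(D^{(2)}_{ij}\): if \(D^{(2)}_{ij} = 3\), the distance is already realised inside \(G_{\phi_2,{\bf{X}}}\) and, since the new edge creates no path of length below \(3\), it stays equal to \(3 = D^{(3)}_{ij}\); if \(D^{(2)}_{ij} > 3\), then clauses \eqref{eq:2sat7b}--\eqref{eq:2sat7e} force one of \((x_{i,n+1} \land x_{j,n+2})\) or \((x_{i,n+2} \land x_{j,n+1})\) to hold, producing a length-\(3\) path through the new edge, while the same lower bound rules out anything shorter, so again the distance equals \(3 = D^{(3)}_{ij}\).

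For the contradiction step I would suppose \(D'(\phi_2',{\bf{X}})_{ij} \neq D^{(3)}_{ij}\) for some \(i,j\) with \(D_{ij} > 3\). Since the only vertices outside \(\Phi([n])\) are \(v_{n+1}\) and \(v_{n+2}\), every elementary path in \(G'_{\phi_2',{\bf{X}}}\) has length \(1\), \(2\) or \(3\). The key observation is that an elementary path of length \(\ell\) between \(v_a\) and \(v_b\) forces \(D_{ab} \le \ell\): length \(1\) means \(D_{ab}=1\); a length-\(2\) path via \(v_{n+1}\) or \(v_{n+2}\), together with clauses \eqref{eq:k2_2sat3a}--\eqref{eq:k2_2sat3b}, forbids \(D_{ab} > 2\); and a length-\(3\) path through both new vertices, together with clauses \eqref{eq:2sat6a}--\eqref{eq:2sat6b}, forbids \(D_{ab} > 3\). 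Consequently each elementary segment corresponds to an edge of \(G^3\) of weight \(D_{ab} \le \ell\), so decomposing a shortest path of \(G'_{\phi_2',{\bf{X}}}\) into elementary segments yields a path in \(G^3\) of length at most \(D'(\phi_2',{\bf{X}})_{ij}\), contradicting \(D'(\phi_2',{\bf{X}})_{ij} < D^{(3)}_{ij}\). Conversely, expanding each edge of a shortest \(G^3\)-path into the length-\(D_{ab}\) path guaranteed by the base case yields a path in \(G'_{\phi_2',{\bf{X}}}\) of length \(D^{(3)}_{ij}\), contradicting \(D'(\phi_2',{\bf{X}})_{ij} > D^{(3)}_{ij}\).

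I expect the main obstacle to lie in the base case rather than the contradiction step: one must carefully check that inserting \(\{v_{n+1},v_{n+2}\}\) into \(G_{\phi_2,{\bf{X}}}\) neither shortens any distance that was at most \(2\) nor accidentally shortens a genuine distance-\(3\) pair below \(3\). Both facts rest on the single quantitative observation that any \(v_a\)--\(v_b\) path with \(a,b \in [n]\) using the new edge has length at least \(3\); once this is in hand, the interplay between the distance-\(2\) clauses inherited from \(\mathcal{C}_2\) and the new distance-\(3\) clauses \eqref{eq:2sat6a}--\eqref{eq:2sat7e} makes the remainder routine and fully parallel to Lemma~\ref{lem:k1eqd2}.
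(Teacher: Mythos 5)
Your proposal is correct and takes essentially the same approach as the paper's proof: establish \(D'(\phi_2',{\bf{X}})_{ij} = D_{ij} = D^{(3)}_{ij}\) for all pairs with \(D_{ij} \le 3\), then obtain both contradictions by decomposing shortest paths into elementary paths of length at most \(3\) and comparing against the \(3\)-skeleton \(G^3\). If anything, your write-up is slightly more careful than the paper's, which asserts the base case directly from the clauses and claims each elementary segment corresponds to a \(G^3\)-edge of \emph{equal} length, whereas your explicit appeal to Lemma~\ref{lem:k2eqd2} (via \(\mathcal{C}_2 \subseteq \mathcal{C}_2'\)) and the weaker, sufficient bound \(D_{ab} \le \ell\) are what the argument strictly requires.
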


\begin{proof}
    Let \({\bf{X}}\) be a satisfying assignment of \(\phi'_2\).
    By construction, we know that \(G_D\) is an induced subgraph of \(G_{\phi'_2,{\bf{X}}}\) and since \(\phi_2 \subseteq \phi_2'\) and the clauses of the form \eqref{eq:2sat6a}, \eqref{eq:2sat6b}, \eqref{eq:2sat7b}, \eqref{eq:2sat7c}, \eqref{eq:2sat7d} and \eqref{eq:2sat7e} are satisfied, we have that,
    
    \begin{equation}
        \label{eq:lem:k2peqd3}
        D'(\phi'_2,{\bf{X}})_{ij} = D_{ij} = D^{(3)}_{ij} \quad \text{for all } i,j \in [n] \text{ with } D_{ij} \leq 3.
    \end{equation}

    Assume, for the sake of contradiction, that \(D'(\phi_2',{\bf{X}})_{ij} \neq D^{(3)}_{ij}\) for some \(i,j \in [n]\) with \(D_{ij} > 3\).
    First assume that \(D'(\phi_2',{\bf{X}})_{ij} < D^{(3)}_{ij}\).
    Let \(\pi\) be a shortest path between \(v_i\) and \(v_j\) in \(G'_{\phi'_2,{\bf{X}}}\) of length \(D'(\phi_2',{\bf{X}})_{ij}\).
    Then \(\pi\) must be a concatenation of elementary paths \(\pi_1,\dots,\pi_s\) of lengths \(1\), \(2\) or \(3\).
    Each elementary path of length \(1\), \(2\) and \(3\) has between its endpoints a corresponding edge in the \(3\)-skeleton \(G^3\) of \(D\) of equal length.
    Therefore, in \(G^3\) there must exist a path of length \(D'(\phi_2',{\bf{X}})_{ij}\), which is a contradiction to \(D'(\phi_2',{\bf{X}})_{ij} < D^{(3)}_{ij}\).
    
    Now assume, \(D'(\phi_2',{\bf{X}})_{ij} > D^{(3)}_{ij}\).
    Let \(\pi\) be a shortest path between vertices \(i \in [n]\) and \(j \in [n]\) in \(G^3\), the \(3\)-skeleton of \(D\) of length \(D^{(3)}_{ij}\).
    Then \(\pi\) must consist of consecutive edges \(e_1=\{w_1,w_2\},\dots,e_s=\{w_{s},w_{s+1}\}\) such that \(w_1 = i\) and \(w_{s+1} = j\), that is \(D^{(3)}_{w_{t}w_{t+1}} \leq 3\) for all \(t \in [s]\).
    Then, as \(D'(\phi_2',{\bf{X}})_{w_tw_{t+1}} = D^{(3)}_{w_tw_{t+1}}\) by \eqref{eq:lem:k2peqd3}, we know that for each edge \(e_t\) there must exist a corresponding path of length \(1\), \(2\) or \(3\) between \(w_t\) and \(w_{t+1}\) in \(G'_{\phi_2',{\bf{X}}}\).
    Therefore, we have a path of length \(D^{(3)}_{ij}\) between \(v_i\) and \(v_j\) in \(G'_{\phi'_2,{\bf{X}}}\) which is a contradiction to \(D'(\phi_2',{\bf{X}})_{ij} > D^{(3)}_{ij}\).
\end{proof}
\medskip
As an immediate consequence of Lemma \ref{lem:k2peqd3} we have the following Corollary.
\medskip
\begin{corollary}
    \label{cor:lem:dleqdprimecprime}
    Let \({\bf{X}}\) and \({\bf{X}}'\) be two distinct satisfying assignments of \(\phi_2'\). Then \[D'(\phi_2',{\bf{X}}) = D'(\phi_2',{\bf{X}}').\]
\end{corollary}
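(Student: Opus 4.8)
The plan is to derive this corollary directly from Lemma \ref{lem:k2peqd3} by exploiting the fact that the target matrix in that lemma, namely the distance matrix \(D^{(3)}\) of the 3-skeleton of \(D\), depends only on \(D\) and not on the chosen satisfying assignment. Since \({\bf{X}}\) and \({\bf{X}}'\) are both satisfying assignments of \(\phi_2'\), I would apply Lemma \ref{lem:k2peqd3} separately to each of them.

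First I would apply the lemma to \({\bf{X}}\) to obtain \(D'(\phi_2',{\bf{X}}) = D^{(3)}\). Then I would apply the same lemma to \({\bf{X}}'\) to obtain \(D'(\phi_2',{\bf{X}}') = D^{(3)}\). Because the right-hand side \(D^{(3)}\) is a fixed matrix determined entirely by the input distance matrix \(D\) (via Definition \ref{def:Gk}), it is identical in both equalities. Chaining the two equalities through this common value then yields \(D'(\phi_2',{\bf{X}}) = D'(\phi_2',{\bf{X}}')\), as required.

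There is essentially no obstacle here: all of the substantive work has already been carried out in the proof of Lemma \ref{lem:k2peqd3}, which establishes that every satisfying assignment of \(\phi_2'\) produces the same distance matrix over \(\{v_1,\dots,v_n\}\). The corollary is simply the observation that a quantity equal to a fixed value for two different inputs must take the same value on both inputs; this mirrors exactly the role played by Corollary \ref{cor:lem:k1dleqdprime} relative to Lemma \ref{lem:k1eqd2}, and by Corollary \ref{cor:lem:k2dleqdprime} relative to Lemma \ref{lem:k2eqd2}, in the preceding sections. The practical upshot, which motivates stating the corollary at all, is that Algorithm-level reasoning needs only a single satisfying assignment of \(\phi_2'\) in order to decide whether a graph realisation with \(|V| = n+2\) and \(\{v_{n+1},v_{n+2}\} \in E\) exists, since the resulting distance matrix over \(\{v_1,\dots,v_n\}\) is invariant across all such assignments.
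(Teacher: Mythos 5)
Your proof is correct and is exactly the argument the paper intends: the corollary is stated there as an ``immediate consequence'' of Lemma \ref{lem:k2peqd3}, obtained by applying that lemma to each satisfying assignment and chaining through the fixed matrix \(D^{(3)}\). Nothing is missing, and your remarks on its algorithmic role match the paper's use of it in Theorem \ref{thm:kp2}.
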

\medskip
\medskip
\begin{proposition}
\label{prop:k2p_2sat2_not_realised}
    Let \(D\) be an \(n \times n\) distance matrix.
    If \(D'(\phi_2',{\bf{X}}) \neq D\) for at least one satisfying assignment \({\bf{X}}\) of \(\phi'_2\) then no graph realisation~\((G=(V,E), \Phi)\) of \(D\) exists with \(|V| = n+2\) and \(\{v_{n+1}, v_{n+2}\} \in E\), where \(v_{n+1}, v_{n+2}  \in V \setminus \Phi([n])\).
\end{proposition}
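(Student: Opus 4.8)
The plan is to argue by contradiction in exactly the same spirit as the proofs of Proposition \ref{prop:k1_2sat2_not_realised} and Proposition \ref{prop:k2_2sat2_not_realised}, now using the adjacent-vertex machinery \(\phi_2'\), \(G'_{\phi_2',{\bf{X}}}\) and \(D'(\phi_2',{\bf{X}})\). So I would suppose that, contrary to the claim, a graph realisation \((G=(V,E),\Phi)\) of \(D\) does exist with \(|V|=n+2\) and \(\{v_{n+1},v_{n+2}\}\in E\), while simultaneously \(D'(\phi_2',{\bf{X}})\neq D\) for some satisfying assignment \({\bf{X}}\) of \(\phi_2'\). The goal is to manufacture a second satisfying assignment \({\bf{X}}'\), read off directly from \(G\), for which \(D'(\phi_2',{\bf{X}}')=D\), and then invoke the invariance Corollary \ref{cor:lem:dleqdprimecprime} to conclude \(D'(\phi_2',{\bf{X}})=D'(\phi_2',{\bf{X}}')=D\), contradicting \(D'(\phi_2',{\bf{X}})\neq D\).

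The key steps, in order, would be as follows. First, I would let \({\bf{X}}'\) be the truth assignment of \(\{x_{i,j}:i\in[n],\,j\in\{n+1,n+2\}\}\) that is consistent with \(G\), that is, \(x_{i,j}\) is true precisely when \(\{v_i,v_j\}\in E\). Next, using Proposition \ref{prop:g_d_subgraph}, the induced subgraph of \(G\) on \(v_1,\dots,v_n\) is exactly \(G_D\); together with the hypothesis \(\{v_{n+1},v_{n+2}\}\in E\) and the definition of consistency, this identifies \(G\) with \(G'_{\phi_2',{\bf{X}}'}\), so that \(D'(\phi_2',{\bf{X}}')\) is literally the \(n\times n\) distance matrix of \(G\) over \(v_1,\dots,v_n\), namely \(D\) itself. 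Finally, I would verify that \({\bf{X}}'\) actually satisfies \(\phi_2'\): every clause of \(\mathcal{C}_2'\) is a necessary condition for such a realisation, so a consistent assignment must satisfy all of them. With \({\bf{X}}'\) a satisfying assignment and \(D'(\phi_2',{\bf{X}}')=D\), Corollary \ref{cor:lem:dleqdprimecprime} closes the argument.

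The main obstacle is the verification that the consistent assignment \({\bf{X}}'\) satisfies all of \(\phi_2'\), and this is really where the content of the proposition sits. The clauses inherited from \(\mathcal{C}_2\) are necessary for any realisation with \(|V|=n+2\) by the reasoning behind Observation \ref{obs:prop:k2pcneccessary}, exactly as in the non-adjacent case. The genuinely new clauses are \eqref{eq:2sat6a}, \eqref{eq:2sat6b} (for pairs with \(D_{ij}>3\)) and \eqref{eq:2sat7b}--\eqref{eq:2sat7e} (for pairs with \(D_{ij}=3\) and \(D^{(2)}_{ij}>3\)). Their necessity is where Lemma \ref{lem:k2shortest} does the heavy lifting: it forces any shortest path between such a distance-\(3\) pair to traverse both \(v_{n+1}\) and \(v_{n+2}\) in order, which is precisely the Boolean condition \eqref{eq:2sat7a} encoded by \eqref{eq:2sat7b}--\eqref{eq:2sat7e}; and if any pair with \(D_{ij}>3\) had \(v_i,v_j\) joined to \(v_{n+1},v_{n+2}\) in the forbidden way, the present edge \(\{v_{n+1},v_{n+2}\}\) would create a path of length \(3<D_{ij}\). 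Since \({\bf{X}}'\) comes from an actual realisation of \(D\), these necessary conditions all hold, so \({\bf{X}}'\) is satisfying and the contradiction follows; this is the summary already encapsulated in Observation \ref{obs:prop:k2pcneccessary}.
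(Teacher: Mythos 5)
Your proposal is correct and takes essentially the same route as the paper's proof: assume such a realisation \(G\) exists, take the assignment \({\bf{X}}'\) consistent with \(G\), observe that it satisfies \(\phi_2'\) and that \(D'(\phi_2',{\bf{X}}')=D\), and conclude via the invariance Corollary \ref{cor:lem:dleqdprimecprime}. The paper compresses your clause-by-clause necessity verification (including the use of Lemma \ref{lem:k2shortest}) into a single appeal to Observation \ref{obs:prop:k2pcneccessary}, so your version merely spells out what the paper leaves implicit.
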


\begin{proof}
    Assume that a graph realisation \((G=(V,E), \Phi)\) of \(D\) with \(|V| = n+2\) and \(\{v_{n+1}, v_{n+2}\} \in E\) exists.
    By Observation \ref{obs:prop:k2pcneccessary}, we know that there exists a satisfying assignment \({\bf{X}}'\) of \(\phi'_2\), consistent with the graph \(G\), such that \(D'(\phi'_2,{\bf{X}}') = D\).
    By Corollary \ref{cor:lem:dleqdprimecprime}, we know that \(D'(\phi_2',{\bf{X}}) = D'(\phi'_2,{\bf{X}}')\) which is a contradiction to \(D'(\phi_2',{\bf{X}}) \neq D\).
\end{proof}
\medskip

In summary, we have the following polynomial time algorithm to determine whether there exists a graph realisation \((G=(V,E), \Phi)\) of \(D\) with \(|V| = n+2\). Moreover, note that the algorithm produces such a graph realisation if it exists.
\medskip
\begin{alg}[Solving \(2\)-\textsc{CombDMR}]{\phantom{Bob}}\\
    {\bf{Input:}} An \(n \times n\) distance matrix \(D\).\\
    {\bf{Output:}} A graph realisation \((G, \Phi)\) of \(D\) with \(|V| = n+2\) or a statement that no such graph realisation exists.
    \label{alg:kp2}
    \begin{itemize}
        \item[1.] Let \(\Phi(i) = v_i\) for \(i \in [n]\).
        \item[2.] Construct the 2-CNF formula \(\phi_2 = \bigwedge_{c \in \mathcal{C}_2} c\) as described above.
        \item[3.] Compute a satisfying assignment \({\bf{X}}\) of \(\phi_2\), if it exists.
        \item[4.] If \(\phi_2\) is not satisfiable then no graph realisation \((G, \Phi)\) of \(D\) with \(|V| = n+2\) exists. (By Observation \ref{obs:prop:k2cneccessary})
        \item[5.] If \(\phi_2\) is satisfiable then construct the graph \(G_{\phi_2,{\bf{X}}}\) consistent with the satisfying assignment \({\bf{X}}\).
        \item[6.] Compute the distance matrix \(D(\phi_2,{\bf{X}})\) of \(G_{\phi_2,{\bf{X}}}\) (using any APSP algorithm).
        \item[7.] If \(D(\phi_2,{\bf{X}}) = D\) then \((G_{\phi_2,{\bf{X}}}, \Phi)\) is a graph realisation of \(D\) with \(|V| = n+2\).
        \item[8.] Construct the 2-CNF formula \(\phi_2' = \bigwedge_{c \in \mathcal{C}_2'} c\) as described above.
        \item[9.] Compute a satisfying assignment \({\bf{X}}'\) of \(\phi_2'\), if it exists.
        \item[10.] If \(\phi_2'\) is not satisfiable then no graph realisation \((G, \Phi)\) of \(D\) with \(|V| = n+2\) and \(\{v_{n+1}, v_{n+2}\} \in E\) exists. (By Observation \ref{obs:prop:k2pcneccessary})
        \item[11.] If \(\phi_2'\) is satisfiable then construct the graph \(G'_{\phi_2',{\bf{X}}'}\) consistent with the satisfying assignment \({\bf{X}}'\).
        \item[12.] Compute the distance matrix \(D'(\phi_2',{\bf{X}}')\) of \(G'_{\phi_2',{\bf{X}}'}\) (using any APSP algorithm).
        \item[13.] If \(D'(\phi_2',{\bf{X}}') = D\) then \((G'_{\phi_2',{\bf{X}}'}, \Phi)\) is a graph realisation of \(D\) with \(|V| = n+2\) and \(\{v_{n+1}, v_{n+2}\} \in E\).
        \item[] Otherwise, no graph realisation \((G, \Phi)\) of \(D\) with \(|V| = n+2\) exists. (By Proposition~\ref{prop:k2p_2sat2_not_realised})      
    \end{itemize}

\end{alg}

\medskip
\begin{theorem}
    \label{thm:kp2}
    \(2\)-\textsc{CombDMR} is solvable in polynomial time.
    Moreover, if the input distance matrix \(D\) is a YES-instance of \(2\)-\textsc{CombDMR} then also a graph realisation \((G=(V,E), \Phi)\) of \(D\) with \(|V| = n+2\) can be computed with the same running time.
\end{theorem}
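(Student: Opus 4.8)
The plan is to mirror the proof of Theorem \ref{thm:kp1}, establishing both the correctness and the polynomial running time of Algorithm \ref{alg:kp2}. The essential new feature compared with the case $k=1$ is that a graph realisation with $|V| = n+2$ can take one of two mutually exclusive and jointly exhaustive forms, according to whether the two additional vertices $v_{n+1}, v_{n+2}$ are non-adjacent or adjacent. First I would observe that, after the preliminary checks for $k \in \{0,1\}$ via Theorem \ref{thm:nequalk} and Theorem \ref{thm:kp1}, it suffices to decide the existence of a realisation with exactly $|V| = n+2$, and that every such realisation falls into precisely one of these two cases. The first seven steps of the algorithm then handle the non-adjacent case through $\phi_2$, and the remaining steps handle the adjacent case through $\phi_2'$.

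For correctness in the non-adjacent case, I would argue exactly as for $k=1$: by Observation \ref{obs:prop:k2cneccessary}, unsatisfiability of $\phi_2$ rules out any realisation with $|V| \le n+2$; otherwise, Corollary \ref{cor:lem:k2dleqdprime} guarantees that the matrix $D(\phi_2,{\bf{X}})$ is independent of the chosen satisfying assignment, so testing a single assignment suffices, and Proposition \ref{prop:k2_2sat2_not_realised} shows that $D(\phi_2,{\bf{X}}) \neq D$ certifies the non-existence of a realisation of this form. The adjacent case is handled symmetrically, using Observation \ref{obs:prop:k2pcneccessary}, Corollary \ref{cor:lem:dleqdprimecprime} and Proposition \ref{prop:k2p_2sat2_not_realised} in place of their unprimed counterparts. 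Combining the two cases, if neither $\phi_2$ nor $\phi_2'$ yields a satisfying assignment whose associated distance matrix equals $D$, then no realisation with $|V|=n+2$ exists; conversely, whenever such an assignment is found, the constructed graph $G_{\phi_2,{\bf{X}}}$ or $G'_{\phi_2',{\bf{X}}}$ is an explicit realisation, which establishes the second, constructive assertion of the theorem.

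For the running time, I would note that both clause sets $\mathcal{C}_2$ and $\mathcal{C}_2'$ contain $O(n^2)$ clauses over the $2n$ variables $\{x_{i,j} : i \in [n],\, j \in \{n+1,n+2\}\}$, and can be assembled in polynomial time once $d_{G_D}$, $D^{(2)}$ (and $D^{(3)}$ implicitly) are computed by any APSP algorithm such as Floyd--Warshall. Each 2-SAT instance is then solved in polynomial time by the algorithm of Krom \cite{Krom1967}, the graphs $G_{\phi_2,{\bf{X}}}$ and $G'_{\phi_2',{\bf{X}}}$ are built in polynomial time, and their distance matrices $D(\phi_2,{\bf{X}})$ and $D'(\phi_2',{\bf{X}})$ are again computed by APSP in polynomial time; the final comparison to $D$ is trivially polynomial, so the whole procedure runs in polynomial time.

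The step I expect to require the most care is verifying that the two cases genuinely exhaust all realisations with $|V|=n+2$ and that the invariance results (Corollaries \ref{cor:lem:k2dleqdprime} and \ref{cor:lem:dleqdprimecprime}) legitimately reduce each case to inspecting a single satisfying assignment. In particular, one must confirm that the additional distance-$3$ constraints encoded in $\phi_2'$ via Lemma \ref{lem:k2shortest} are precisely the extra conditions imposed by the presence of the edge $\{v_{n+1}, v_{n+2}\}$, so that $D'(\phi_2',{\bf{X}}) = D^{(3)}$ holds by Lemma \ref{lem:k2peqd3}; once this is granted, the remainder is a routine adaptation of the $k=1$ argument.
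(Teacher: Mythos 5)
Your proposal is correct and follows essentially the same route as the paper's own proof: correctness of Algorithm \ref{alg:kp2} via the case split on whether $\{v_{n+1},v_{n+2}\}$ is an edge, handled by Observations \ref{obs:prop:k2cneccessary} and \ref{obs:prop:k2pcneccessary}, Propositions \ref{prop:k2_2sat2_not_realised} and \ref{prop:k2p_2sat2_not_realised}, and the invariance Corollaries \ref{cor:lem:k2dleqdprime} and \ref{cor:lem:dleqdprimecprime}, together with polynomial-time construction and solution of the two 2-SAT instances and APSP computations. Your write-up is in fact somewhat more explicit than the paper's (e.g.\ in spelling out case exhaustiveness and citing Proposition \ref{prop:k2_2sat2_not_realised} for the non-adjacent case), but the argument is the same.
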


\begin{proof}
    We proceed by proving the correctness of Algorithm \ref{alg:kp2} and its polynomial time complexity.
    Correctness follows from Observation~\ref{obs:prop:k2cneccessary}, Observation~\ref{obs:prop:k2pcneccessary} and Proposition \ref{prop:k2p_2sat2_not_realised}.
    The 2-CNF formulas \(\phi_2\) and \(\phi_2'\) can be constructed and solved in polynomial time.
    The distance matrices \(D(\phi_2,{\bf{X}})\) and \(D'(\phi_2',{\bf{X}'})\) can be computed in polynomial time.
    Furthermore, Corollary \ref{cor:lem:k2dleqdprime} and Corollary \ref{cor:lem:dleqdprimecprime} imply that we only require a single satisfying assignment of \(\phi_2\) and \(\phi_2'\), respectively, to determine whether a graph realisation \((G=(V,E), \Phi)\) of \(D\) with \(|V| = n+2\) exists or not.
\end{proof}
\medskip

\section{$k$-\textsc{CombDMR} is NP-complete for \(k \geq 3\)} \label{sec:kplus3}

In this section we prove that $k$-\textsc{CombDMR} is NP-complete for every \(k \geq 3\), via a reduction from \(k\)-colourability, which is known to be NP-complete~\cite{Karp1972,Lovasz1973,Stockmeyer1973}. 
For the readers' convenience we restate the \(k\)-colourability problem as follows:
\medskip
\begin{problem}[\(k\)-\textsc{colourability}]
    \label{prob:kcolour}
    Given a graph \(G = (V,E)\). 
    Does there exist a function \(\chi:V \to [k]\) such that for all \(\{i,j\} \in E\) we have \(\chi(i) \neq \chi(j)\)?
\end{problem}
\medskip



As we will prove, \(k\)-\textsc{Colourability} can be reduced to the \(k\)-\textsc{CombDMR} problem by the following reduction algorithm.
\medskip
\begin{alg}[Reduction of $k$-\textsc{Colourability} to \(k\)-\textsc{CombDMR}]{\phantom{Bob}}

\label{alg:k3redution}
\noindent{\bf{Input:}} A connected simple undirected graph \(G_c = (V_c, E_c)\) for which we want to determine if it is \(k\)-colourable.\\
{\bf{Output:}} A distance matrix \(D\) such that \(G_c\) is \(k\)-colourable if and only if \(k\)-\textsc{CombDMR} for \(D\) is a YES-instance.
\begin{itemize}
    \item[1.] Enumerate the vertices of \(G_c\) such that \(V_c = \{v_1,\dots, v_{n_c}\}\) where \(n_c = |V_c|\).
    \item[2.] Construct the gadget graph \(G_g = (V_g, E_g)\), with \(V_c \subseteq V_g\), as follows. We subdivide each edge in $E_c$ twice, i.e., we replace each edge by a path of length 3 (containing two new vertices).
    For every pair of non-adjacent vertices in $G_c$, we add a path of length 2 between them (containing one new vertex). 
    We enumerate the vertices of \(G_g\) such that \(V_g = \{v_1,\dots, v_{n_c}, v_{n_c+1} \dots, v_{n_g}\}\) where $v_{n_c+1}, \dots, v_{n_g}$ are the new vertices and \(n_g = |V_g|\), see Figure~\ref{fig:k3reduction}.
    \item[3.] Let \(d_{G_g}\) denote the shortest path distance function of \(G_g\). 
    Construct the \(n \times n\) distance matrix \(D\) where \(n=n_g + 1\), with entries,
    \begin{align*}
    D_{ij} &= d_{G_g}(v_i, v_j) && \text{ for } i,j \in [n_g],\\
    D_{i'n} &= D_{ni'} = 2 && \text{ for } i' \in [n_c],\\
    D_{i''n} &= D_{ni''}=3 && \text{ for } i''\in[n_g] \setminus [n_c],\\
    D_{nn} &= 0.
    \end{align*}
    This will result in a distance matrix of the form:
    \begin{equation*}
        \label{eq:dk3}
        D =
        \left[
        \begin{array}{*{8}{c}|c}
        & & & & & & & & 2 \\
        & & & & & & & & \vdots \\
        & & & & \makebox[0pt]{\(D_{ij} = d_{G_g}(v_i, v_j)\)} & & & & 2 \\
        \cline{9-9}
        & & & & \makebox[0pt]{\(i,j \in [n_g]\)} & & & & 3 \\
        & & & & & & & & \vdots \\
        & & & & & & & & 3 \\
        \hline
        2 & \cdots & \cdots & 2 & \multicolumn{1}{|c}{3} & \cdots & \cdots & 3 & 0 \\
        \end{array}
        \right]
    \end{equation*}

\end{itemize}
\end{alg}
An example of the construction of a gadget graph as in Algorithm \ref{alg:k3redution} is illustrated in Figure \ref{fig:k3reduction}.

\begin{figure}[h!]
    \centering
    \begin{minipage}[t]{0.8\linewidth}
        \centering
        \begin{tikzpicture}
            \node[draw, circle] (1) at (0,6) {\bf{$v_1$}};
            \node[draw, circle] (2) at (2,6) {\bf{$v_2$}};
            \node[draw, circle] (3) at (2,8) {\bf{$v_3$}};
            \node[draw, circle] (4) at (0,8) {\bf{$v_4$}};
            \node (10) at (-3,7) {Input graph \(G_c\)};

            \node[draw, circle] (1b) at (0,0) {{\(v_1\)}};
            \node[draw, circle] (2b) at (4,0) {{\(v_2\)}};
            \node[draw, circle] (3b) at (4,4) {{\(v_3\)}};
            \node[draw, circle] (4b) at (0,4) {{\(v_4\)}};
            \node (10b) at (-3,2) {Gadget graph \(G_g\) of \(G_c\)};

            \draw[->] (10) -- (10b);

            \draw (1) -- (2);
            \draw (2) -- (3);
            \draw (3) -- (4);
            \draw (4) -- (1);
            \draw (1) -- (3);

            \node[draw, circle] (a) at (1,0) {\(v_5\)}; 
            \node[draw, circle] (b) at (3,0) {\(v_6\)}; 
            \node[draw, circle] (c) at (4,1) {\(v_7\)}; 
            \node[draw, circle] (d) at (4,3) {\(v_8\)}; 
            \node[draw, circle] (e) at (3,4) {\(v_9\)}; 
            \node[draw, circle] (f) at (1,4) {\(v_{10}\)}; 
            \node[draw, circle] (g) at (0,1) {\(v_{11}\)}; 
            \node[draw, circle] (h) at (0,3) {\(v_{12}\)}; 
            \node[draw, circle] (i) at (1,1) {\(v_{13}\)}; 
            \node[draw, circle] (j) at (3,3) {\(v_{14}\)}; 
            \node[draw, circle] (l) at (1,3) {\(v_{15}\)}; 

            \draw (1b) -- (a) -- (b) -- (2b);
            \draw (2b) -- (c) -- (d) -- (3b);
            \draw (3b) -- (e) -- (f) -- (4b);
            \draw (4b) -- (h) -- (g) -- (1b);
            \draw (1b) -- (i) -- (j) -- (3b);
            \draw (2b) -- (l) -- (4b);
        \end{tikzpicture}
    \end{minipage}
    \caption{Example of construction of a gadget graph \(G_g\) from an input graph \(G_c\) as in Algorithm \ref{alg:k3redution} with old vertices \(v_1,\dots,v_4\) (i.e., \(n_c = 4\)) and new vertices \(v_5,\dots,v_{15}\) (i.e., \(n_g = 15\)). \label{fig:k3reduction}}
\end{figure}
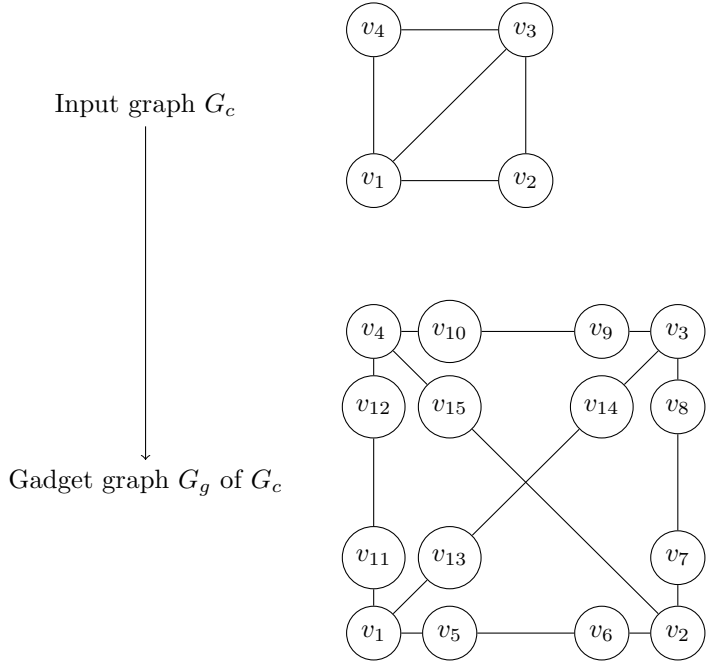
\FloatBarrier
We first prove that the constructed matrix \(D\) is always a distance matrix.

\medskip
\begin{proposition}
    The constructed matrix \(D\) of Algorithm \ref{alg:k3redution} satisfies the conditions of Definition \ref{def:distancematrix} and is therefore a distance matrix.
\end{proposition}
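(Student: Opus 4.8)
The plan is to verify directly the three defining conditions of Definition \ref{def:distancematrix}. Conditions (i) and (ii) are essentially immediate once one records that the gadget graph \(G_g\) is connected: since \(G_c\) is connected, its edge subdivision is connected on \(V_c\) together with the subdivision vertices, and every added length-\(2\) path attaches its middle vertex to two original vertices, so \(G_g\) is connected and all entries \(D_{ij} = d_{G_g}(v_i,v_j)\) with \(i,j \in [n_g]\) are finite. Symmetry (ii) then follows from symmetry of the shortest-path distance and from the symmetric definition of the last row and column. For (i), the diagonal entries are \(0\) and the off-diagonal entries are strictly positive, because distinct vertices of a simple graph are at distance at least \(1\), and the prescribed distances to the \(n\)-th vertex are \(2\) or \(3\).

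The substance of the proof is the triangle inequality (iii), namely \(D_{iw} + D_{wj} \geq D_{ij}\) for all \(i,j,w \in [n]\), and I would organise it by how many of \(i,j,w\) equal the special index \(n\). When none of them equals \(n\), the inequality is exactly the triangle inequality for the shortest-path metric of \(G_g\), hence automatic. The cases where two or three of the indices equal \(n\) reduce to trivial inequalities using \(D_{nn}=0\) and \(D_{nx} \in \{2,3\}\). This leaves the two genuinely new cases, each involving the virtual vertex \(n\) exactly once.

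For the case \(w = n\) (so \(i,j \in [n_g]\)) I must show \(D_{in} + D_{nj} \geq d_{G_g}(v_i,v_j)\). Here I would first prove the key structural fact that \(G_g\) has small diameter: every new vertex lies at distance at most \(1\) from some original vertex (a middle vertex is adjacent to its two endpoints, and each of the two subdivision vertices of a subdivided edge is adjacent to one endpoint), and any two original vertices are joined by an explicitly added path of length \(2\) or \(3\). Consequently \(d_{G_g}(v_i,v_j) \leq 3\) when both are original, \(\leq 4\) when one is original and one new, and \(\leq 5\) when both are new. Since \(D_{in}+D_{nj}\) equals \(4\), \(5\) or \(6\) in exactly these three situations respectively, the required inequality holds in each. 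For the remaining case, say \(i = n\) (so \(j,w \in [n_g]\)), I must show \(D_{nw} + d_{G_g}(v_w,v_j) \geq D_{nj}\); since \(D_{n\cdot}\) takes only the values \(2\) and \(3\), we always have \(D_{nj} - D_{nw} \leq 1 \leq d_{G_g}(v_w,v_j)\) whenever \(w \neq j\), while both sides vanish when \(w = j\). The symmetric case \(j = n\) is identical.

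The main obstacle is the case \(w = n\): it is the only place where the artificially prescribed distances \(2\) and \(3\) to the virtual vertex must be reconciled with genuine distances in \(G_g\), and it is precisely the diameter bound, driven by the fact that every new vertex is within distance \(1\) of an original one, that makes the three numerical sub-cases \(4,5,6\) match up. All remaining cases are routine.
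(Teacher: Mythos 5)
Your proof is correct and follows essentially the same route as the paper's: the crux in both is the case where the virtual index \(n\) is the middle vertex of the triangle, resolved by comparing the values \(4,5,6\) of \(D_{in}+D_{nj}\) (according to whether \(i,j\) are old or new vertices) against the corresponding diameter bounds \(3,4,5\) on \(d_{G_g}(v_i,v_j)\), which rest on the observation that every new vertex is within distance \(1\) of an old one. Your write-up is in fact slightly more complete, since you also spell out the trivial case where \(n\) appears as an endpoint of the triangle, which the paper leaves implicit.
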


\begin{proof}
    It is clear that conditions (i) and (ii) of Definition \ref{def:distancematrix} are satisfied by construction.
    Therefore, it remains to show that the triangle inequality, condition (iii) of Definition \ref{def:distancematrix}, is satisfied.
    Within the submatrix \(D_{[n_g][n_g]}\) of the distance matrix \(D\), we know that the triangle inequality is satisfied, as \(G_g\) is a connected simple graph.
    Now we consider column \(n\) and row \(n\) of the matrix \(D\).
    For any \(i,j \in [n_g], i \neq j\), we know that 
    \begin{equation*}
        \label{eq:dist1}
        D_{in} + D_{nj} =
        \begin{cases}
            4 & \text{if } |\{i,j\} \cap [n_c]| = 2,\\
            5 & \text{if } |\{i,j\} \cap [n_c]| = 1,\\ 
            6 & \text{if } |\{i,j\} \cap [n_c]| = 0.
        \end{cases}
    \end{equation*}
    Consider now the distance function \(d_{G_g}\) of \(G_g\). 
    Let now \(i, j \in [n_g]\) and \(i \neq j\).
    If both $v_i, v_j$ are old vertices (i.e.~whenever $i,j\in [n_c]$) we have that
    \begin{equation*}
        \label{eq:dist2}
        d_{G_g}(v_i, v_j) =
        \begin{cases}
            2 & \text{if }  \{v_i, v_j\} \notin E_c, \\
            3 & \text{if }  \{v_i, v_j\} \in E_c.
        \end{cases}
    \end{equation*}
    If exactly one of $v_i, v_j$ is old and the other one is new (i.e.~$|\{i,j\} \cap [n_c]| = 1$), we know that \(d_{G_g}(v_i, v_j) \in \{1,2,3,4\}\). 
    Finally, if both vertices are new (i.e.~$|\{i,j\} \cap [n_c]| = 0$), we know that \(d_{G_g}(v_i, v_j) \in \{1,2,3,4,5\}\). 
    Hence, \(D_{in} + D_{nj} > d_{G_g}(v_i, v_j)\) for all \(i,j \in [n_g]\), and therefore the condition (iii) of Definition \ref{def:distancematrix} is satisfied.
\end{proof}
\medskip
Now that we have established that the constructed matrix \(D\) is a valid distance matrix, our next aim is to prove that, 
if \(D\) is a YES-instance of \(k\)-\textsc{CombDMR} then \(G_c\) is \(k\)-colourable (Proposition \ref{prop:k3redutionforward} below).
We start with the following useful lemma.
\medskip
\begin{lemma}
    \label{lem:notadjacentk3}
    Given an input graph \(G_c = (V_c, E_c)\) and \(k \in \N\).
    Let \((G=(V,E), \Phi)\) be any graph realisation of the constructed \(n \times n\) distance matrix \(D\) by Algorithm \ref{alg:k3redution}, with \(|V|=n+k\) vertices.
    Let \(v_{n+1},\dots,v_{n+k} \in V \setminus \Phi([n])\).
    Any two vertices \(v_i\) and \(v_j\) adjacent in the input graph \(G_c\) cannot both be adjacent to the same vertex \(v \in \{v_{n+1}\),\dots, \(v_{n+k}\}\) in \(G\). 
\end{lemma}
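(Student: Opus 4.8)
The plan is to exploit the single distinguishing feature of the gadget construction: an edge of \(G_c\) is encoded as a distance of \emph{exactly} \(3\), whereas any two vertices sharing a common neighbour in \(G\) necessarily sit at distance at most \(2\). These two facts are incompatible, and that incompatibility is the whole content of the lemma.

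Concretely, suppose \(\{v_i, v_j\} \in E_c\). By the construction of the gadget graph \(G_g\) in Algorithm~\ref{alg:k3redution}, each edge of \(G_c\) is subdivided twice, so that \(d_{G_g}(v_i, v_j) = 3\) (this is precisely the value used above when checking that \(D\) satisfies the triangle inequality). Hence by the definition of \(D\) we have \(D_{ij} = 3\), and since \((G, \Phi)\) realises \(D\) with \(\Phi(i) = v_i\) and \(\Phi(j) = v_j\), the shortest-path distance \(d\) in \(G\) satisfies \(d(v_i, v_j) = 3\).

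I would then argue by contradiction. Assume some \(v \in \{v_{n+1}, \dots, v_{n+k}\}\) is adjacent in \(G\) to both \(v_i\) and \(v_j\). Since \(v \in V \setminus \Phi([n])\) and \(\Phi\) is injective, \(v\) is distinct from both \(v_i = \Phi(i)\) and \(v_j = \Phi(j)\), and \(v_i \neq v_j\). Therefore \(v_i \to v \to v_j\) is a genuine walk of length \(2\) in \(G\), giving \(d(v_i, v_j) \leq 2\). This contradicts \(d(v_i, v_j) = 3\), completing the proof.

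There is essentially no hard step here: the argument is immediate once the distance-\(3\) encoding of edges is in place. The only two points that deserve a word of care are (a) confirming that \(d_{G_g}(v_i, v_j)\) equals \(3\) rather than something smaller, i.e.\ that the twice-subdivided \(v_i\)--\(v_j\) path is genuinely a shortest path in \(G_g\) (which holds because adjacent old vertices share no common neighbour in \(G_g\), as length-\(2\) connectors are added only between \emph{non}-adjacent pairs), and (b) checking that the putative common neighbour \(v\) is distinct from the two endpoints, which is guaranteed by \(v \notin \Phi([n])\) together with the injectivity of \(\Phi\).
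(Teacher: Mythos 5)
Your proof is correct and follows essentially the same argument as the paper's: the edge $\{v_i,v_j\}\in E_c$ forces $D_{ij}=3$ by construction, while a common neighbour $v\in\{v_{n+1},\dots,v_{n+k}\}$ would force $d_G(v_i,v_j)\leq 2$, a contradiction. Your extra care in verifying that $d_{G_g}(v_i,v_j)=3$ (no length-$2$ connectors between adjacent pairs) and that $v$ is distinct from the endpoints only makes explicit what the paper leaves implicit.
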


\begin{proof}
    Given a graph realisation \((G=(V,E), \Phi)\) of \(D\) with \(|V| = n+k\) vertices and \(\Phi(i)=v_i\) for \(i \in [n]\), with \(D\) constructed by Algorithm \ref{alg:k3redution} and let \(d_{G}\) denote the shortest path distance function of \(G\).
    As \(v_i\) and \(v_j\) are adjacent in \(G_c\), by construction \(D_{ij} = 3\) and \(d_{G}(v_i, v_j) = 3\).
    If \(v_i\) and \(v_j\) were both adjacent to the same vertex \(v \in \{v_{n+1}\),\dots, \(v_{n+k}\}\) in \(G\) then \(d_{G}(v_i, v_j) \leq 2\), which would be a contradiction to \(G\) being a graph realisation of \(D\).
\end{proof}
\medskip

\medskip
\begin{proposition}
    \label{prop:k3redutionforward}
    Given an input graph \(G_c = (V_c, E_c)\) and \(D\) the \(n \times n\) matrix as constructed in Algorithm \ref{alg:k3redution}.
    If \(D\) is a YES-instance of \(k\)-\textsc{CombDMR} then \(G_c\) is \(k\)-colourable.
\end{proposition}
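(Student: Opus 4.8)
The plan is to read a proper $k$-colouring of $G_c$ directly off any realisation, using the apex vertex $v_n$ as the device that converts the budget of $k$ extra vertices into a palette of $k$ colours. Suppose $D$ is a YES-instance, and let $(G=(V,E),\Phi)$ be a realisation with $|V|=n+k$. Write $v_i=\Phi(i)$ for $i\in[n]$ and let $v_{n+1},\dots,v_{n+k}$ be the additional vertices of $V\setminus\Phi([n])$. The crucial structural fact is that $v_n$ has no neighbour among $v_1,\dots,v_{n_g}$: by construction $D_{\ell n}\in\{2,3\}$ for every $\ell\in[n_g]$, so no $v_\ell$ with $\ell\in[n_g]$ can be adjacent to $v_n$ in $G$. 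Consequently, for each old vertex $v_i$ with $i\in[n_c]$, the realising path of length $D_{in}=2$ from $v_i$ to $v_n$ has a midpoint $w$ adjacent to both endpoints, and this midpoint cannot be any $v_\ell$ ($\ell\in[n_g]$) or $v_n$ itself. Hence $w$ is one of the additional vertices, and it is adjacent to $v_n$.

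I would then call an additional vertex adjacent to $v_n$ a \emph{colour vertex}. The previous step shows that every old vertex $v_i$ ($i\in[n_c]$) is adjacent to at least one colour vertex, and since there are only $k$ additional vertices in total, there are at most $k$ colour vertices. I define a map $\chi$ on $V_c$ by letting $\chi(v_i)$ be any fixed colour vertex adjacent to $v_i$; after relabelling the colour vertices by elements of $[k]$, this yields $\chi:V_c\to[k]$.

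It remains to check that $\chi$ is proper. Let $\{v_i,v_j\}\in E_c$. By Lemma~\ref{lem:notadjacentk3}, two vertices adjacent in $G_c$ cannot both be adjacent to the same additional vertex in $G$; in particular they cannot be adjacent to the same colour vertex. Therefore the colour vertex $\chi(v_i)$ chosen for $v_i$ differs from $\chi(v_j)$, giving $\chi(v_i)\neq\chi(v_j)$. Thus $\chi$ is a proper colouring using at most $k$ colours, and $G_c$ is $k$-colourable.

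The only genuinely delicate step is the first one: forcing the midpoint of each length-$2$ path from an old vertex to $v_n$ to be an additional vertex rather than an image under $\Phi$. This is precisely where the apex distances $D_{\ell n}\ge 2$ for all $\ell\in[n_g]$ are indispensable, since they guarantee $v_n$ has no $\Phi$-image neighbour and hence that the $k$ colour vertices are drawn entirely from the $k$ extra vertices. Once this is in place, the properness of $\chi$ is an immediate application of Lemma~\ref{lem:notadjacentk3}, and the colour bound follows from simply counting additional vertices.
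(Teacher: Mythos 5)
Your proof is correct and follows essentially the same route as the paper's: each old vertex $v_i$ ($i\in[n_c]$) must be adjacent to an additional vertex in order to realise $D_{in}=2$, and Lemma~\ref{lem:notadjacentk3} then makes the induced assignment a proper $k$-colouring. The only difference is that you spell out explicitly why the midpoint of the length-$2$ path to $v_n$ must be an additional vertex (since $D_{\ell n}\geq 2$ forbids any $v_\ell$, $\ell\in[n_g]$, from being adjacent to $v_n$), a step the paper asserts without elaboration.
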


\begin{proof}
    Assume a graph realisation \((G=(V,E), \Phi)\) of \(D\) as constructed by Algorithm \ref{alg:k3redution} with \(|V| = n+k\) exists.
    Without loss of generality let \(\Phi(i) = v_i\) for \(i \in [n]\) and let \(\{v_{n+1},\dots,v_{n+k}\} = V \setminus \Phi([n])\).
    Then, by Lemma \ref{lem:notadjacentk3} we know that any two adjacent vertices in \(G_c\) cannot both be adjacent to the same vertex \(v \in \{v_{n+1}, \dots, v_{n+k}\}\) in \(G\).
    Furthermore, we know that each vertex \(v_i\) for \(i \in [n_c]\) must be adjacent to at least one of the vertices \(v \in \{v_{n+1}, \dots, v_{n+k}\}\) in \(G\) to realise the \(D_{in} = D_{ni} = 2\) distances in \(D\).
    We construct a colouring of the vertices of \(G_c\) by assigning a colour to each of the vertices \(v_{n+1}, \dots, v_{n+k}\) and then assign the same colour to any vertex \(v_i\) for \(i \in [n_c]\) which is adjacent to that vertex (with arbitrary choice in the case of multiple adjacent vertices \(v_{n+1}, \dots, v_{n+k}\)).
    This is a valid \(k\)-colouring due to Lemma \ref{lem:notadjacentk3}.
\end{proof}
\medskip
The colour assignment in the above proof is illustrated in Figure \ref{fig:k3realisation} as a continuation of the example in Figure \ref{fig:k3reduction}.
\begin{figure}[h!]

    \begin{minipage}[t]{0.95\linewidth}
        \centering
        \begin{tikzpicture}
            \tikzstyle{every node}=[node distance=1.5cm]
            \node[draw, circle] (1) at (0,0) {\({v_1}\)};
            \node[draw, circle] (2) at (4,0) {\({v_2}\)};
            \node[draw, circle] (3) at (4,4) {\({v_3}\)};
            \node[draw, circle] (4) at (0,4) {\({v_4}\)};
            \node[draw, circle] (k) at (-6,2) {$v_{n}$};
            \node[draw, circle, fill=yellow] (k1) at (-4,4) {$v_{n+1}$};
            \node[draw, circle, fill=brown] (k2) at (-4,2) {$v_{n+2}$};
            \node[draw, circle, fill=pink] (k3) at (-4,0) {$v_{n+3}$};
        
            \node[draw, circle] (a) at (1,0) {\(v_{5}\)}; 
            \node[draw, circle] (b) at (3,0) {\(v_{6}\)}; 
            \node[draw, circle] (c) at (4,1) {\(v_{7}\)}; 
            \node[draw, circle] (d) at (4,3) {\(v_{8}\)}; 
            \node[draw, circle] (e) at (3,4) {\(v_{9}\)}; 
            \node[draw, circle] (f) at (1,4) {\(v_{10}\)}; 
            \node[draw, circle] (g) at (0,1) {\(v_{11}\)}; 
            \node[draw, circle] (h) at (0,3) {\(v_{12}\)}; 
            \node[draw, circle] (i) at (1,1) {\(v_{13}\)}; 
            \node[draw, circle] (j) at (3,3) {\(v_{14}\)}; 
            \node[draw, circle] (l) at (1,3) {\(v_{15}\)}; 
    
            \draw (1) -- (a) -- (b) -- (2);
            \draw (2) -- (c) -- (d) -- (3);
            \draw (3) -- (e) -- (f) -- (4);
            \draw (4) -- (h) -- (g) -- (1);
            \draw (1) -- (i) -- (j) -- (3);
            \draw (2) -- (l) -- (4);
    
            \draw (k) -- (k1) -- (4);
            \draw (k) -- (k1) -- (2);
            \draw (k) -- (k2) -- (1);
            \draw (k) -- (k3) -- (3);
    
        \end{tikzpicture}
        \caption{A graph realisation of \(D\) constructed from the example in Figure \ref{fig:k3reduction} with \(k=3\). 
        In accordance with the proof of Proposition \ref{prop:k3redutionforward}, the vertex \(v_{1}\) inherits the colour of vertex \(v_{n+2}\), the vertices \(v_{2}\) and \(v_{4}\) inherit the colour of vertex \(v_{n+2}\), and the vertex \(v_{3}\) inherits the colour of vertex \(v_{n+3}\).
        \label{fig:k3realisation}}
    \end{minipage}
    \end{figure}
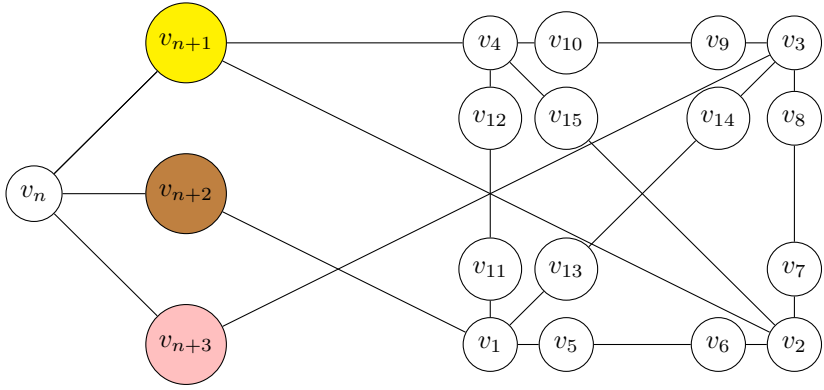

\medskip
The following theorem states that the implication in Proposition \ref{prop:k3redutionforward} is, in fact, an equivalence.
\medskip
\begin{theorem}
    \label{thm:3colmain}
    Let \(k \in \N\), \(G_c\) be an input graph for Algorithm \ref{alg:k3redution} and \(D\) be the constructed distance matrix.
    Then \(G_c\) is \(k\)-colourable if and only if \(D\) is a YES-instance of \(k\)-\textsc{CombDMR}.
\end{theorem}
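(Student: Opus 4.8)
Since Proposition~\ref{prop:k3redutionforward} already establishes that a YES-instance of \(k\)-\textsc{CombDMR} forces \(G_c\) to be \(k\)-colourable, the plan is to prove only the converse implication: given a proper \(k\)-colouring \(\chi : V_c \to [k]\) of \(G_c\), I would construct an explicit graph realisation \((G,\Phi)\) of \(D\) with exactly \(n+k\) vertices, so that \(D\) becomes a YES-instance.

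The construction starts from the graph \(G_D\) associated with \(D\). Observe that, for \(i,j \in [n_g]\), we have \(D_{ij} = d_{G_g}(v_i,v_j)\), so by Proposition~\ref{prop:g_d_subgraph} the induced subgraph of \(G_D\) on \(\{v_1,\dots,v_{n_g}\}\) is precisely the gadget graph \(G_g\), while the special vertex \(v_n\) is isolated in \(G_D\) (all its off-diagonal entries equal \(2\) or \(3\)). I would then add \(k\) new vertices \(v_{n+1},\dots,v_{n+k}\), one per colour, and insert the edges \(\{v_{n+c}, v_n\}\) for every \(c \in [k]\), together with \(\{v_{n+c}, v_i\}\) for every old vertex \(v_i\) (\(i \in [n_c]\)) with \(\chi(v_i)=c\). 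Setting \(\Phi(i)=v_i\) for \(i \in [n]\) gives a candidate realisation on exactly \(n+k\) vertices (a colour unused by \(\chi\) simply contributes a pendant vertex attached to \(v_n\), which affects no distance).

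It then remains to verify that the shortest-path distances in \(G\) match \(D\), which splits into three checks: (a)~\(d_G(v_i,v_j)=d_{G_g}(v_i,v_j)\) for all gadget vertices \(i,j\in[n_g]\); (b)~\(d_G(v_n,v_i)=2\) for every old vertex \(i\in[n_c]\); and (c)~\(d_G(v_n,v_i)=3\) for every new gadget vertex \(i\in[n_g]\setminus[n_c]\). Checks (b) and (c) are short: \(v_n\) is adjacent only to the \(v_{n+c}\)'s, each of which is adjacent only to \(v_n\) and to old vertices, giving the path \(v_n \to v_{n+\chi(v_i)} \to v_i\) of length \(2\) for old \(v_i\); and since every new gadget vertex is adjacent in \(G_g\) to some old vertex, one further step realises distance \(3\), while no shorter path exists because no \(v_{n+c}\) is adjacent to a new gadget vertex.

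The main obstacle is check (a): I must ensure that the added vertices introduce no shortcuts among the gadget vertices. The key observation is that the colouring is proper, so any two old vertices sharing a colour \(c\) are non-adjacent in \(G_c\) and therefore already at distance \(2\) in \(G_g\); consequently a path \(v_i \to v_{n+c} \to v_j\) through a common new neighbour has length \(2\) and \emph{matches}, rather than undercuts, the existing distance. Since the only edges leaving the set \(\{v_{n+1},\dots,v_{n+k},v_n\}\) go to old vertices, any path that detours through this set between two old vertices \(v_a,v_b\) costs at least \(2\) (when \(\chi(v_a)=\chi(v_b)\), matching \(d_{G_g}(v_a,v_b)=2\)) or at least \(4\) via \(v_n\) (when the colours differ, exceeding the bound \(d_{G_g}(v_a,v_b)\le 3\) for old--old pairs). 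Hence every such detour can be rerouted within \(G_g\) without increasing its length, which yields \(d_G(v_i,v_j) \ge d_{G_g}(v_i,v_j)\); the reverse inequality is immediate because \(G_g\) is a subgraph of \(G\). Combining this converse with Proposition~\ref{prop:k3redutionforward} gives the claimed equivalence.
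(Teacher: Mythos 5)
Your proof is correct and follows essentially the same route as the paper: the same construction (one new vertex per colour, adjacent to \(v_n\) and to the old vertices of that colour), with properness of \(\chi\) ruling out shortcuts. The only difference is that you verify all distance classes explicitly --- including the gadget--gadget pairs via the rerouting argument --- where the paper compresses this into a brief ``it suffices to verify'' remark, so your write-up is, if anything, more complete.
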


\begin{proof}
    The forward direction is given by Proposition \ref{prop:k3redutionforward}.
    It remains to prove that if \(G_c\) is \(k\)-colourable then \(D\) is a YES-instance of \(k\)-\textsc{CombDMR}.
    Let \(G_c\) have a \(k\)-colouring \(\chi:V_c \to [k]\).
    We begin by constructing a graph realisation \((G=(V,E), \Phi)\) with \(V=\{v_1, \dots, v_{n+k}\}\) of the \(n \times n\) distance matrix \(D\).
    Let \(\Phi(i) = v_i\) for \(i \in [n]\).
    The edge set \(E\) of \(G\) is determined by the following requirements:
    \begin{itemize}
        \item The induced subgraph of \(G\) on the vertices \(\{v_1, \dots, v_{n_g}\}\) coincides with the gadget graph \(G_g\).
        \item \(v_n\) is not adjacent to any of the vertices of the gadget graph \(G_g\).
        \item For \(j \in [k]\) the neighbours of \(v_{n+j}\) are precisely the following: \(v_n\) and all vertices \(v_i\) in \(\{v_1, \dots, v_{n_c}\}\) whose colour is \(j\), that is, \(\chi(v_i) = j\).
    \end{itemize}

    We now show that \((G, \Phi)\) is indeed a graph realisation of \(D\).
    As each vertex \(v_i\) for \(i \in [n_g] \setminus [n_c]\) is adjacent to some \(v_j\) for \(j \in [n_c]\) in \(G\) and not adjacent to any vertex in \(\{v_{n}, \dots, v_{n+k}\}\), it suffices to verify the following equalities:
    \begin{align}
        d_G(v_n, v_i) &= 2 && \text{for } i \in [n_c],\label{eq:kcol1}\\
        d_G(v_i, v_j) &= D_{ij} && \text{for } i,j \in [n_c]. \label{eq:kcol2}
    \end{align}
    Let \(i \in [n_c]\). By construction, \(v_i\) is adjacent to \(v_{n+j}\) with \(j = \chi(v_i)\) and \(v_{n+j}\) is adjacent to \(v_n\), therefore \(d_G(v_n, v_i) \leq 2\) and \eqref{eq:kcol1} follows from the fact that \(v_n\) is not adjacent to \(v_i\).
    For \eqref{eq:kcol2}, we distinguish between two cases: if \(v_i\) and \(v_j\) are adjacent in \(G_c\) then \(D_{ij} = 3\) and \(d_{G_{g}}(v_i, v_j) = 3\) by construction.
    Moreover, \(\chi(v_i) \neq \chi(v_j)\) implies that \(v_i\) and \(v_j\) are not adjacent to the same vertex in \(\{v_{n+1}, \dots, v_{n+k}\}\) in \(G\).
    Therefore, there is no shortest path of length smaller than 3 between \(v_i\) and \(v_j\) in \(G\).
    If \(v_i\) and \(v_j\) are not adjacent in \(G_c\) then \(D_{ij} = 2\) and \(d_{G_{g}}(v_i, v_j) = 2\) by construction.
    We do not add an edge between \(v_i\) and \(v_j\) in \(G\), and therefore \(d_G(v_i, v_j) = 2 = D_{ij}\).
    Hence, \((G, \Phi)\) is a graph realisation of \(D\).
\end{proof}
\medskip

Note that \(k\)-\textsc{CombDMR} \(\in\) NP since any distance matrix of a finite graph can be computed in polynomial time. 
Therefore, Theorem \ref{thm:3colmain} implies the next theorem.

\medskip

\begin{theorem}
    \label{thm:maink}
    \(k\)-\textsc{CombDMR} is NP-complete for all \(k \in \N, k \geq 3\).
\end{theorem}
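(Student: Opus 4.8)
The plan is to establish both membership in NP and NP-hardness. For membership, I would argue that $k$-\textsc{CombDMR} is in NP because a certificate is simply a graph $G$ on at most $n+k$ vertices together with the injection $\Phi$; since $G$ has polynomially many vertices, its all-pairs shortest-path distance matrix can be computed in polynomial time (e.g. by Floyd--Warshall or by BFS from each vertex), and we can verify in polynomial time that $d(\Phi(i),\Phi(j)) = D_{ij}$ for all $i,j \in [n]$. This is exactly the remark preceding the statement, so this direction is immediate.

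For NP-hardness, the plan is to invoke the reduction already built in Algorithm \ref{alg:k3redution} together with its correctness established in Theorem \ref{thm:3colmain}. Specifically, I would fix any integer $k \geq 3$ and reduce from $k$-\textsc{colourability}, which is NP-complete for every fixed $k \geq 3$ by the classical results \cite{Karp1972,Lovasz1973,Stockmeyer1973}. Given an input graph $G_c$ to $k$-\textsc{colourability}, Algorithm \ref{alg:k3redution} constructs in polynomial time a distance matrix $D$ (the gadget graph $G_g$ has size polynomial in $|V_c|$, since we add at most two vertices per edge and one vertex per non-edge, and computing the shortest-path distances $d_{G_g}$ to fill in $D$ is polynomial). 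By Theorem \ref{thm:3colmain}, $G_c$ is $k$-colourable if and only if $D$ is a \textsc{YES}-instance of $k$-\textsc{CombDMR}. This is precisely a polynomial-time many-one reduction, so $k$-\textsc{CombDMR} is NP-hard.

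Combining the two directions yields that $k$-\textsc{CombDMR} is NP-complete for every fixed $k \geq 3$, as claimed. The main substantive work here is already discharged in Theorem \ref{thm:3colmain}; at the level of this final statement, the only genuine obstacle would be verifying that the reduction is honestly polynomial time, which amounts to checking that $n = n_g + 1$ is polynomially bounded in the size of $G_c$ and that each entry of $D$ is computable in polynomial time --- both of which follow directly from the construction in Algorithm \ref{alg:k3redution}. I would therefore present this theorem's proof as a short corollary-style argument, explicitly citing the NP-membership remark, the NP-completeness of $k$-\textsc{colourability}, and Theorem \ref{thm:3colmain}.
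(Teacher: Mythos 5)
Your proposal is correct and follows essentially the same route as the paper: NP membership via polynomial-time verification of a certificate graph, and NP-hardness as an immediate consequence of Theorem \ref{thm:3colmain} together with the NP-completeness of \(k\)-\textsc{colourability} and the polynomiality of the reduction in Algorithm \ref{alg:k3redution}. Your added remarks on the polynomial size of the gadget construction are sound but were left implicit in the paper.
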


\section{Tree Realisations} \label{sec:tree}
In this section, we restrict our considerations to graph realisations which are trees.
For a given \(n \times n\) distance matrix \(D\), we call a graph realisation \((G, \Phi)\) of \(D\) a {\emph{tree realisation}} of \(D\) if \(G\) is a tree.
In contrast to Proposition \ref{lem:drealisinggraphbasic} for the general graph realisation problem, it is no longer true that any distance matrix admits always a tree realisation with sufficiently many vertices. 
To see this, observe that the distance matrix of Figure~\ref{fig:graphrealisationscomp1} can only be represented by a graph that contains a triangle, and thus is not a tree. 
Therefore we have the following problem.
\medskip
\begin{problem}
    \label{prob:drptree}
    \textsc{Tree Combinatorial Distance Matrix Realisation Problem \\(TreeCombDMR)}\\
    \emph{Input:} An $n \times n$ matrix $D$ with non-negative integer values. \\    
    \emph{Question:} Does there exist a simple (unweighted) tree $T=(V,E)$ with $|V| \geq n$ and an injective mapping $\Phi: [n] \rightarrow V$ such that the shortest-path distance function $d$ in $T$ satisfies
    \begin{equation*}
        \label{eq:drp-new-tree}
        d(\Phi(i), \Phi(j)) = D_{ij}
    \end{equation*}
        for all \(i,j \in [n]\)?
\end{problem}
\medskip
This problem is simpler than the general \textsc{$k$-CombDMR} problem.
In fact, this problem can be solved in \(O(n^4)\) time by a result by Zarecki\u{\i} 1965 \cite{Zareckii1965}.
Before we state this result, we first introduce the notion of a minimal tree realisation of a distance matrix.
\medskip
\begin{definition}[Minimal Tree Realisation]
    \label{def:minimaltree}
    A tree realisation \((T=(V,E), \Phi)\) of an \(n \times n\) distance matrix \(D\) is a {\emph{minimal tree realisation}} of \(D\), if there does not exist a proper subtree \(T_0=(V_0,E_0)\) of \(T\) with \(\Phi([n]) \subseteq V_0\).
\end{definition}
\medskip
Our next aim is to give a characterisation of minimal tree realisations of distance matrices (Proposition \ref{prop:minimaltree} below).
For this we need the following lemma.
\medskip
\begin{lemma}
    \label{lem:subtree}
    Let \(D\) be an \(n \times n\) distance matrix and \((T=(V,E), \Phi)\) be a tree realisation of \(D\).
    Assume \(T_0=(V_0, E_0)\) is a subtree of \(T\) with \(\Phi([n]) \subseteq V_0\), then \((T_0, \Phi)\) is also a tree realisation of \(D\). 
\end{lemma}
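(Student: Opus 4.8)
The plan is to exploit the defining property of trees, namely that between any two vertices there is a \emph{unique} path. First I would observe that the hypotheses already hand us most of what is needed: $\Phi$ is injective as a map into $V$, and since $\Phi([n]) \subseteq V_0$, it remains injective when regarded as a map into $V_0$; moreover $|V_0| \geq |\Phi([n])| = n$, so $T_0$ satisfies the size requirement of Problem~\ref{prob:drptree}. Hence the only substantive thing to verify is that the shortest-path distances between the marked vertices are unchanged when we pass from $T$ to $T_0$.

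The key step is to show that $d_{T_0}(u,v) = d_T(u,v)$ for every pair $u,v \in V_0$. Since $T_0$ is a subtree, it is in particular connected, so there is a path $\pi_0$ from $u$ to $v$ lying entirely within $T_0$. This same $\pi_0$ is also a path in the ambient tree $T$. Because $T$ is a tree, the $u$--$v$ path in $T$ is unique, and therefore $\pi_0$ must coincide with it. Consequently $\pi_0$ is simultaneously the unique path in $T$ and a path in $T_0$, giving $d_{T_0}(u,v) = \ell(\pi_0) = d_T(u,v)$, where $\ell(\pi_0)$ denotes the length of $\pi_0$.

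Applying this equality with $u = \Phi(i)$ and $v = \Phi(j)$ for arbitrary $i,j \in [n]$ yields
\[
d_{T_0}(\Phi(i),\Phi(j)) = d_T(\Phi(i),\Phi(j)) = D_{ij},
\]
which is precisely the condition required for $(T_0,\Phi)$ to be a tree realisation of $D$. I do not expect a serious obstacle here; the one point deserving care is the claim that the unique $T$-path between two vertices of $V_0$ stays inside $V_0$, but this follows immediately from the connectedness of the subtree $T_0$ combined with uniqueness of paths in $T$. No counting arguments or estimates are needed, so the proof reduces entirely to this observation about paths in trees.
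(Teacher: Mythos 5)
Your proof is correct and takes essentially the same approach as the paper's: both arguments rest on the uniqueness of paths in a tree to conclude that $d_{T_0}(\Phi(i),\Phi(j)) = d_T(\Phi(i),\Phi(j))$ for all $i,j \in [n]$. The paper phrases the key step contrapositively (the unique $T$-path between marked vertices must lie in $T_0$, else $T_0$ would be disconnected), while you argue that the $T_0$-path, viewed in $T$, must coincide with the unique $T$-path; these are the same observation.
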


\begin{proof}
    Let \(\Phi(i) = v_i\) for \(i \in [n]\), and \(d_{T_0}\) and \(d_T\) be the distance functions of \(T_0\) and \(T\), respectively.
    It is sufficient to show that \[d_{T_0}(v_i, v_j) = d_{T}(v_i,v_j) \quad \forall i,j \in [n].\]
    Since \(T\) is a tree, there is a unique path of length \(d_{T}(v_i,v_j)\) between any two vertices \(v_i, v_j\) in \(T\).
    This path must also exist in \(T_0\), as otherwise \(T_0\) is disconnected, a contradiction.
\end{proof}
\medskip
As an immediate consequence of Lemma \ref{lem:subtree}, note that any tree realisable distance matrix \(D\) has also a minimal tree realisation.
\medskip
\begin{proposition}
    \label{prop:minimaltree}
    Let \(D\) be an \(n \times n\) distance matrix and \((T=(V,E), \Phi)\) be a tree realisation of \(D\).
    Let \(L \subseteq V\) be the set of leaves of \(T\). 
    Then \((T, \Phi)\) is a minimal tree realisation of \(D\) if and only if \(L \subseteq \Phi([n])\).
\end{proposition}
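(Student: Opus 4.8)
The plan is to prove the two implications of the biconditional in Proposition~\ref{prop:minimaltree} separately, with the reverse implication carrying the bulk of the work. The forward direction I would handle by contraposition. Assume $L \not\subseteq \Phi([n])$ and choose a leaf $\ell \in L \setminus \Phi([n])$. Deleting $\ell$ together with its unique incident edge yields a graph $T_0$ which is again a tree (removing a leaf from a tree preserves both connectivity and acyclicity), and since $\ell$ is not an image of $\Phi$ we still have $\Phi([n]) \subseteq V_0$. By Lemma~\ref{lem:subtree}, $(T_0, \Phi)$ is itself a tree realisation of $D$; but $T_0$ is a proper subtree of $T$ containing $\Phi([n])$, contradicting the minimality condition of Definition~\ref{def:minimaltree}. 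This establishes that minimality forces $L \subseteq \Phi([n])$.

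For the reverse direction, assume $L \subseteq \Phi([n])$ and suppose, for contradiction, that some proper subtree $T_0 = (V_0, E_0)$ satisfies $\Phi([n]) \subseteq V_0$, so in particular $L \subseteq V_0$. The crux is the purely graph-theoretic claim that the \emph{only} subtree of $T$ containing every leaf of $T$ is $T$ itself; equivalently, every edge of $T$ lies on the path between some pair of leaves. To prove this I would fix an arbitrary edge $e \in E$ and note that $T - e$ decomposes into two nonempty subtrees $A$ and $B$. Each of $A$ and $B$ must contain a leaf of $T$: a single-vertex component is a degree-one vertex of $T$ and hence a leaf of $T$, while a component with at least two vertices has at least two leaves of its own, at least one of which is not the endpoint of $e$ and therefore retains degree $1$ in $T$. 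Thus $e$ separates two leaves of $T$ and so lies in the minimal subtree spanning $L$. Since this holds for every edge, the minimal connected subtree containing $L$ is all of $T$; any connected subtree containing $L$ must contain it, forcing $T_0 = T$ and contradicting properness. Hence no proper subtree contains $\Phi([n])$ and $(T,\Phi)$ is minimal.

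I expect the structural claim in the reverse direction — that a subtree capturing all leaves must recover the whole tree — to be the only genuine obstacle; the delicate point is the leaf-counting inside each component of $T - e$, where one must ensure the leaf produced is a leaf of $T$ and not merely a leaf of the component (handled by the case split on component size above). The forward direction and the surrounding passage to a realisation are then routine applications of Lemma~\ref{lem:subtree}.
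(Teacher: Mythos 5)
Your proof is correct, and its overall architecture matches the paper's: the forward direction is identical (delete a leaf outside $\Phi([n])$ together with its unique incident edge, apply Lemma~\ref{lem:subtree} to see the result is still a tree realisation, and contradict Definition~\ref{def:minimaltree}). The difference lies in the reverse direction. The paper's proof there is terse: it simply \emph{asserts} that a proper subtree $T_0$ of $T$ must omit some leaf of $T$, and derives the contradiction with $L \subseteq \Phi([n]) \subseteq V_0$ in one line. You instead prove the equivalent structural fact in full — that any subtree of $T$ containing all of $L$ is $T$ itself — via the edge-cut argument: every edge $e$ separates $T$ into two components, each of which contains a leaf of $T$ (with the correct case split ensuring the leaf of the component is genuinely a leaf of $T$, not just of the component), so $e$ lies on the unique path between two leaves and hence in any connected subgraph containing $L$. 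This is precisely the justification the paper leaves implicit, so your write-up is, if anything, more self-contained than the published proof; what the paper's version buys in exchange is brevity, at the cost of relying on the reader to supply the leaf-omission claim.
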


\begin{proof}
    Assume that \((T, \Phi)\) is a minimal tree realisation of \(D\) and that there is a leaf \(x \in L\) with \(x \not \in \Phi([n])\). 
    Let \(e \in E\) be the unique edge incident to \(x\).
    Then the proper subtree \(T_0=(V_0, E_0)\) with \(V_0 = V \setminus \{x\}\) and \(E_0 = E \setminus \{e\}\), together with \(\Phi: [n] \to V_0\) is also a tree realisation of \(D\) by Lemma \ref{lem:subtree}.
    This is a contradiction to \((T, \Phi)\) being minimal.

    Now let \((T, \Phi)\) be a tree realisation of \(D\) with \(L \subseteq \Phi([n])\).
    Assume that there exists a proper subtree \(T_0=(V_0, E_0)\) of \(T\) with \(\Phi([n]) \subseteq V_0\) which is also a tree realisation of \(D\).
    Then there exists at least one leaf \(x \in L\) with \(x \not \in V_0\).
    But this is a contradiction to the assumption that \(\Phi([n]) \subseteq V_0\).
\end{proof}
\medskip
By a result of Smolenskii 1962 \cite{Smolenskii1962}, it follows from our characterisation of minimal tree realisations that tree realisable distance matrices have a unique minimal tree realisation.
The following theorem by Zarecki\u{\i} 1965 \cite{Zareckii1965} provides a set of necessary and sufficient conditions for a distance matrix to have a tree realisation.
\medskip
\begin{theorem}[See \cite{Zareckii1965}]
    \label{thm:zareckii1965}
    Let \(D\) be an \(n \times n\) matrix.
    Then \(D\) is a distance matrix and there exists a unique minimal tree realisation \((T=(V,E), \Phi)\) of \(D\) if and only if
    \begin{itemize}
        \item [(a)] For all \(i,j \in [n]\): \(D_{ij} \in \Z, D_{ij} = D_{ji} > 0\) for all \(i \neq j\), \(D_{ii} = 0\).
        \item [(b)] For all \(i,j,k \in [n]\): \(D_{ij} + D_{jk} - D_{ik}\) is even.
        \item [(c)] For all \(i,j,k,l \in [n]\): At least two of \(D_{ij} + D_{kl}, D_{ik} + D_{jl}, D_{il} + D_{jk}\) are equal and greater than or equal to the third.
    \end{itemize}
\end{theorem}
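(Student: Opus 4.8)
The plan is to prove the stated equivalence in both directions, with the key observation that condition~(c) is precisely the classical \emph{four-point condition} characterising tree metrics, while condition~(b) is the parity/integrality constraint that upgrades a real-weighted tree realisation to a combinatorial (unweighted) one. The uniqueness assertion I would obtain for free from the earlier material: once a tree realisation is shown to exist, Lemma~\ref{lem:subtree} yields a minimal one, Proposition~\ref{prop:minimaltree} pins down its leaf set as a subset of \(\Phi([n])\), and the result of Smolenskii cited above guarantees that this minimal realisation is unique.

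For the forward direction, assume \((T=(V,E),\Phi)\) is a tree realisation of \(D\) and write \(v_i=\Phi(i)\). Condition~(a) is immediate, since \(D\) is the integer-valued shortest-path distance function of a connected graph. For~(b) I would use the \emph{median} of any three vertices \(v_i,v_j,v_k\): in a tree the three pairwise geodesics meet in a unique vertex \(m\), and then \(D_{ij}+D_{jk}-D_{ik}=2\,d_T(v_j,m)\), which is even because \(d_T\) is integer valued. For~(c) I would analyse the subtree spanned by four vertices \(v_i,v_j,v_k,v_l\): up to relabelling it has a topology separating \(\{i,j\}\) from \(\{k,l\}\) by a (possibly trivial) internal path of length \(\ell\ge 0\), which gives \(D_{ik}+D_{jl}=D_{il}+D_{jk}=D_{ij}+D_{kl}+2\ell\); hence two of the three pairings are equal and at least as large as the third, which is exactly~(c).

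For the backward direction I would argue by induction on \(n\), constructing the tree one vertex at a time. The base cases \(n\le 2\) are a single subdivided path of length \(D_{12}\). For the inductive step, the restriction \(D'\) of \(D\) to \([n-1]\) inherits (a),(b),(c), so by induction it has a (unique minimal, combinatorial) tree realisation \(T'\); note also that (a) together with the degenerate instances of the four-point condition~(c) (taking \(l=k\)) forces the triangle inequality, so \(D\) is genuinely a distance matrix. To attach the new vertex \(v_n\), I would use the Gromov products \((i\mid j)_n=\frac12(D_{ni}+D_{nj}-D_{ij})\), each of which is a non-negative integer by~(b); the four-point condition guarantees that the point of \(T'\) at which \(v_n\) should be hung, together with the integer length of the pendant path, is well defined independently of the reference pair \(i,j\). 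Subdividing edges as needed keeps everything combinatorial, and one then verifies \(d_G(v_i,v_n)=D_{in}\) for all \(i\in[n-1]\).

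I expect the backbone difficulty to lie in the inductive attachment step of the backward direction: showing that the attachment point and pendant length determined by one reference pair agree with those determined by every other pair. This global consistency is exactly where condition~(c) does its work, and the argument must also confirm that the attachment falls at an honest vertex at integer distance, which is where~(b) is essential. The forward direction and the bookkeeping for uniqueness are comparatively routine once the median and four-leaf structure of trees is in hand.
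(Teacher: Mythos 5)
First, a point of orientation: the paper contains \emph{no proof} of this theorem. It is quoted as a known result of Zarecki\u{\i} \cite{Zareckii1965} (with the weighted analogues attributed to Simões-Pereira, Buneman and Semple--Steel) and is used only as a black box to obtain an \(O(n^4)\) decision procedure for \textsc{TreeCombDMR}. So there is no in-paper argument to compare yours against; your attempt is a reconstruction of the cited classical theorem and must stand on its own. Its forward direction does stand: condition (a) is immediate, the median identity \(D_{ij}+D_{jk}-D_{ik}=2\,d_T(v_j,m)\) gives (b), and the quartet computation \(D_{ik}+D_{jl}=D_{il}+D_{jk}=D_{ij}+D_{kl}+2\ell\) for the split \(\{i,j\}\mid\{k,l\}\) gives (c). Your two framing observations are also correct and match how the paper organises the surrounding material: the degenerate case \(l=k\) of (c) forces the triangle inequality, and uniqueness of the minimal realisation comes for free from Lemma~\ref{lem:subtree}, Proposition~\ref{prop:minimaltree} and the cited result of Smolenskii.

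The backward direction, however, is where the entire content of the theorem lives, and there you have written a plan, not a proof. Everything is reduced to the claim that the attachment point of \(v_n\) in \(T'\) and the pendant length are well defined independently of the reference pair, and that claim is precisely what you defer (``this is exactly where condition (c) does its work''). To close it one must actually set \(h=\min_{\{i,j\}\subseteq[n-1]}\tfrac12(D_{ni}+D_{nj}-D_{ij})\), pick a minimising pair \(\{a,b\}\), place \(p\) on the \(v_a\)--\(v_b\) path at distance \(D_{na}-h\) from \(v_a\) (an integer by (b), hence an honest vertex of the unweighted tree \(T'\), so no subdivision of \(T'\) is ever needed), and then verify \(d_{T'}(p,v_i)+h=D_{ni}\) for every other \(i\in[n-1]\); that verification is a case analysis on where the median of \(v_a,v_b,v_i\) sits relative to \(p\), and it is the step that invokes (c) for the quadruple \(\{a,b,i,n\}\) together with the minimality of \(h\). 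This is not a wrong approach --- it is the standard one, and it does go through --- but until that case analysis is carried out, the proposal has a genuine gap at the single step on which the whole theorem turns.
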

\medskip
Note for clarity, that condition (c) of Theorem \ref{thm:zareckii1965} is equivalent to
\[D_{ij} + D_{kl} \leq \max(D_{ik} + D_{jl}, D_{il} + D_{jk}),\]
for all \(i,j,k,l \in [n]\).

A similar result was discovered independently in the weighted graph context by Pereira 1969 \cite{SimoesPereira1969}, Buneman 1971 \cite{Buneman1971} and Buneman 1974 \cite{ Buneman1974} (see Semple and Steel 2003 \cite[Theorem 7.2.6]{Semple2003}).
Theorem \ref{thm:zareckii1965} gives a natural polynomial time algorithm with complexity \(O(n^4)\) to determine whether a given \(n \times n\) distance matrix has a tree realisation, thus solving \textsc{TreeCombDMR} with time complexity \(O(n^4)\).
However, this result does not provide such a tree realisation whenever it exists.
Zarecki\u{\i} 1965 \cite{Zareckii1965} gives another polynomial time algorithm to construct such a tree realisation of a distance matrix with time complexity \(O(n^4)\), but we will not discuss this algorithm in this paper.
Instead, we discuss an \(O(n^2)\) algorithm which solves \textsc{TreeCombDMR} and constructs a tree realisation of a distance matrix if it exists.
To do this, we utilise a connection with the minimum weighted tree realisation problem, as first described by Hakimi and Yau 1965 \cite{Hakimi1965DistanceMO}.
\medskip
\begin{problem}
    \label{prob:wt}
    \textsc{Minimum Weighted Tree Realisation Problem}\\
    \emph{Input:} A $n \times n$ matrix $D$ with non-negative real values. \\    
    \emph{Goal:} Produce a simple weighted tree $T=(V,E, w)$, \(V \geq n\), with minimum $\sum_{e \in E} w(e)$, where $w: E \rightarrow \R^+$,
     and an injective mapping $\Phi: [n] \rightarrow V$ such that the shortest-path distance function $d$ in $T$ satisfies
    \begin{equation*}
        \label{eq:wtr}
        d(\Phi(i), \Phi(j)) = D_{ij}
    \end{equation*}
        for all \(i,j \in [n]\).
        If no such tree exists, then output NO.
\end{problem}
\medskip
The minimum weighted tree realisation problem is known to be solvable in \(O(n^2)\) time by the algorithm of Culberson and Rudnicki 1989 \cite{Culberson1989} (which is similar to the algorithm of Batagelj et al. 1990 \cite{Batagelj1990}).
Furthermore, we have a simple criterion for a weighted tree realisation to be a minimum, given by the following theorem by Hakimi and Yau 1965 \cite[Theorems 3 and 4]{Hakimi1965DistanceMO}.
\medskip
\begin{theorem}[See \cite{Hakimi1965DistanceMO}]
    \label{thm:hy1965}
    If an \(n \times n\) distance matrix \(D\) has a weighted tree realisation \((T,\Phi, w)\) with no vertices in \(V \setminus \Phi([n])\) of degree less than or equal to 2, then \(T\) is a unique minimum weighted tree realisation of \(D\).
\end{theorem}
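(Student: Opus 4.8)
The plan is to establish both claims—minimality of the total weight and uniqueness—by reducing an arbitrary weighted tree realisation to a canonical form and then invoking the uniqueness of minimal tree realisations. Call a weighted tree realisation \emph{reduced} if no vertex of $V\setminus\Phi([n])$ has degree $\le 2$; the hypothesis asserts precisely that $(T,\Phi,w)$ is reduced. I would first note that a reduced realisation is in particular minimal in the sense of Definition~\ref{def:minimaltree}: each leaf has degree $1$ and so, being forbidden as a Steiner vertex, must lie in $\Phi([n])$, whence the weighted analogue of Proposition~\ref{prop:minimaltree} (with the same proof, using the weighted version of Lemma~\ref{lem:subtree}) applies.

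Next I would prove $\sum_{e\in E(T')} w'(e)\ge\sum_{e\in E(T)} w(e)$ for every weighted tree realisation $(T',\Phi,w')$ of $D$, by rewriting $T'$ into a reduced realisation without increasing its weight via two local moves. If $T'$ has a leaf $x\notin\Phi([n])$, delete $x$ and its incident edge; the weighted analogue of Lemma~\ref{lem:subtree} keeps the result a realisation of $D$, and since $w'>0$ the total weight strictly drops. If $T'$ has a degree-$2$ vertex $y\notin\Phi([n])$, suppress it by merging its two incident edges into a single edge of weight their sum; this preserves every pairwise distance and the total weight. Each move lowers the vertex count, so the process terminates in a reduced realisation $T'_{\mathrm{red}}$ of total weight at most that of $T'$.

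The crux is the uniqueness of reduced realisations: any two reduced weighted tree realisations of the same $D$ are isomorphic as weighted trees. Granting this, $T'_{\mathrm{red}}$ is isomorphic to $T$, so $\sum_{e} w(e)\le\sum_{e} w'(e)$ and $T$ attains the minimum weight; moreover any realisation of the same minimum weight cannot admit a leaf-deletion move (such a move strictly lowers the weight), so it is already minimal, and suppressing its degree-$2$ Steiner vertices returns $T$—hence $T$ is the unique minimiser up to that suppression. To obtain the uniqueness of the reduced realisation I would reconstruct $T$ from $D$ alone: the topology is forced by the four-point condition of Theorem~\ref{thm:zareckii1965}(c) (equivalently by its system of splits), while each edge length is recovered from the Gromov-product quantities $\tfrac12(D_{ia}+D_{ib}-D_{ab})$, which measure the distance of a terminal $v_i$ to the median of a triple; alternatively one may simply cite Smolenskii~\cite{Smolenskii1962}, as the surrounding text already records.

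I expect this reconstruction to be the main obstacle, since it requires recovering the branch points and their mutual distances, not merely the terminal-to-terminal distances stored in $D$. A clean route is induction on $n$: identify a terminal $v_i$ that is a leaf of the reduced tree, use the median formula to pin down its attachment point and pendant-edge weight, delete the resulting pendant path, and apply the inductive hypothesis—while carefully bookkeeping the possible creation of a degree-$2$ vertex at the attachment point, so that the comparison is always made between genuinely reduced trees.
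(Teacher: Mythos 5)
The first thing to note is that the paper offers no proof of Theorem \ref{thm:hy1965} at all: it is imported verbatim from Hakimi and Yau 1965 (\cite{Hakimi1965DistanceMO}, Theorems 3 and 4) and used as a black box to justify the canonical transformation. So there is no in-paper argument to compare you against; your proposal has to stand as a self-contained proof of the cited result. On its own terms, your architecture is the standard one and its first half is correct: deleting a Steiner leaf keeps the tree a realisation and strictly lowers the total weight (weights are positive), suppressing a degree-2 Steiner vertex preserves all terminal-to-terminal distances and the total weight, and each move lowers the vertex count, so every weighted tree realisation reduces to a reduced one of no larger weight. Your insistence that uniqueness can only hold ``up to suppression'' is also the right reading of the statement, since subdividing an edge of \(T\) with a degree-2 Steiner vertex produces a distinct realisation of identical total weight.

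The genuine shortfall is the step you flag yourself: that any two reduced weighted tree realisations of the same \(D\) are isomorphic as weighted trees fixing the terminals. This is not an auxiliary lemma --- it is essentially the entire content of Hakimi--Yau's uniqueness theorem, and your proposal defers it rather than proves it. The induction you outline (detect which terminals are leaves of the reduced tree, pin down pendant-edge lengths and attachment points via the Gromov products \(\tfrac12(D_{ia}+D_{ib}-D_{ab})\), delete and recurse) is the right plan and can be completed, but as written it is a plan: the ``bookkeeping'' you mention --- whether the attachment point is an existing terminal, an existing branch point, or a forced new Steiner vertex, and whether deletion degrades a branch point to degree 2 --- is precisely where all the work lies. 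The fallback of citing Smolenskii \cite{Smolenskii1962} is also not automatic: the paper invokes that result only for uniqueness of minimal \emph{combinatorial} (unit-weight) tree realisations, whereas you need uniqueness for trees with arbitrary positive real edge weights, so its applicability would have to be checked rather than assumed. In short: sound structure, complete and correct reduction argument, but the kernel of the theorem is left as a sketch plus a citation of uncertain scope.
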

\medskip
Given an arbitrary weighted tree realisation \((T=(V,E, w), \Phi)\) of \(D\), we define the {\emph{canonical transformation}} of this tree to be as follows:
\begin{itemize}
    \item Remove all vertices \(v \in V \setminus \Phi([n])\) of degree 1, along with their incident edges.
    \item Successively, replace each vertex \(v \in V \setminus \Phi([n])\) of degree 2, along with its incident edges, by a single edge connecting its two neighbors, with weight equal to the sum of the weights of the two incident edges.
\end{itemize}
Theorem \ref{thm:hy1965} guarantees that the canonical transformation of any weighted tree realisation of a distance matrix \(D\) is the unique minimum.
Our aim is to show that a given \(n \times n\) distance matrix \(D\) is a YES-instance of \textsc{TreeCombDMR} if and only if the unique minimum weighted tree realisation of \(D\) exists and has integer valued edge weights.
Knowing this, we can simply apply the algorithm of Culberson and Rudnicki 1989 \cite{Culberson1989} to obtain the unique minimum weighted tree realisation of the input distance matrix and then check whether the edge weights are all integers.

\medskip
\begin{proposition}
    \label{prop:treeequaltr}
    Let \(D\) be an integer valued \(n \times n\) distance matrix.
    There exists a unique minimal tree realisation \((T=(V,E), \Phi)\) of \(D\)
    if and only if there exists a unique minimum weighted tree realisation \((T'=(V',E',w), \Phi)\) of \(D\) whose edge weights are all in \(\N\).
\end{proposition}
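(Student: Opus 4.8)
The plan is to prove the two directions by explicit constructions that pass back and forth between an unweighted tree with unit-length edges and the canonical weighted tree, using Proposition~\ref{prop:minimaltree} to control the leaves on the unweighted side and Theorem~\ref{thm:hy1965} to control the degree-$2$ interior vertices (and hence uniqueness and minimality) on the weighted side.

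For the forward direction, suppose $(T=(V,E),\Phi)$ is a minimal tree realisation of $D$. I would assign weight $1$ to every edge of $T$, obtaining a weighted tree realisation of $D$ with integer edge weights, and then apply the canonical transformation described just before Theorem~\ref{thm:hy1965}. Because $(T,\Phi)$ is minimal, Proposition~\ref{prop:minimaltree} guarantees that every leaf of $T$ lies in $\Phi([n])$, so the leaf-deletion step of the canonical transformation is vacuous and only the contraction of degree-$2$ interior vertices occurs. Each such contraction replaces two edges by a single edge whose weight is the sum of two integers, so integrality is preserved. The resulting tree $(T',\Phi)$ has no interior vertex of degree $\le 2$, so by Theorem~\ref{thm:hy1965} it is the unique minimum weighted tree realisation of $D$, and all its edge weights lie in $\N$.

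For the reverse direction, suppose $(T'=(V',E',w),\Phi)$ is the unique minimum weighted tree realisation of $D$ and all its weights lie in $\N$. By Theorem~\ref{thm:hy1965}, equivalently by its canonical form, $T'$ has no interior vertex of degree $\le 2$; in particular every leaf of $T'$ belongs to $\Phi([n])$. I would then subdivide each edge $e\in E'$ into a path of length $w(e)\ge 1$ by inserting $w(e)-1$ new interior vertices, exactly as in the construction in the proof of Proposition~\ref{prop:graphrealisationmetric}. The resulting unweighted tree $(T,\Phi)$ realises $D$, since replacing a weighted edge by a path of equal length leaves all shortest-path distances unchanged. The inserted vertices all have degree $2$, so the leaves of $T$ are precisely the leaves of $T'$, all lying in $\Phi([n])$; hence $(T,\Phi)$ is a minimal tree realisation by Proposition~\ref{prop:minimaltree}. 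As $D$ is now exhibited as tree-realisable, Smolenskii's uniqueness result, cited above, ensures that this minimal tree realisation is unique.

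The routine verifications are that the canonical transformation preserves the realised distance matrix and that edge subdivision preserves shortest-path distances; both are immediate. The point that needs care is the interplay of the two structural characterisations: the condition ``every leaf lies in $\Phi([n])$'' from Proposition~\ref{prop:minimaltree} on the unweighted side must be matched with the ``no interior vertex of degree $\le 2$'' condition of Theorem~\ref{thm:hy1965} on the weighted side, and one must check that these two conditions are exactly what makes the subdivision and contraction operations mutually inverse while keeping all weights integral.
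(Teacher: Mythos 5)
Your proof is correct and follows essentially the same route as the paper: assigning weight $1$ to every edge and applying the canonical transformation together with Theorem~\ref{thm:hy1965} for the forward direction, and subdividing weighted edges into elementary paths followed by Smolenskii's uniqueness result for the reverse. The only differences are cosmetic refinements: you additionally check via Proposition~\ref{prop:minimaltree} that the subdivided tree is itself minimal (and that the unique minimum has no interior vertices of degree $\le 2$), whereas the paper simply invokes Lemma~\ref{lem:subtree} to obtain some minimal tree realisation and then applies Smolenskii.
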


\begin{proof}
    Let \((T=(V,E), \Phi)\) be the unique minimal tree realisation of \(D\).
    Construct a weighted tree realisation \((T'=(V',E',w), \Phi)\) of \(D\) by assigning each edge \(e \in E\) a weight of~\(1\).
    This gives a weighted tree realisation of \(D\) with all edge weights in~\(\N\).
    We can then apply the canonical transformation to obtain the unique minimum weighted tree realisation of \(D\), as given by Theorem \ref{thm:hy1965}.
    As the edge weights in the initial weighted tree realisation were all in \(\N\), the edge weights in the unique minimum weighted tree realisation are also all in \(\N\) under this transformation.

    Now let \((T'=(V',E',w), \Phi)\) be the unique minimum weighted tree realisation of~\(D\) with all edge weights in \(\N\).
    Construct the simple tree realisation \((T=(V,E), \Phi)\) of~\(D\) by replacing each edge \(e \in E'\) with weight \(w(e)\) by an elementary path of length \(w(e)\) between its endpoints.
    This is a tree realisation of \(D\) as the shortest path distance function of \(T\) coincides with the distance matrix \(D\).
    As we have a tree realisation of~ \(D\), Lemma \ref{lem:subtree} implies that \(D\) also admits a minimal tree realisation.
    By Smolenskii 1962~\cite{Smolenskii1962}, this minimal tree realisation is unique.
\end{proof}
\medskip

If follows from Proposition \ref{prop:treeequaltr} that we can solve \textsc{TreeCombDMR} in \(O(n^2)\) time by the following algorithm.
\medskip
\begin{alg} [Solving \textsc{TreeCombDMR}]{\phantom{Bob}}
    \label{alg:cr1989Ours}
    \begin{itemize}
        \item[] {\bf{Input}}: An integer valued \(n \times n\) distance matrix \(D\).
        \item[] {\bf{Output}}: A tree realisation \((T=(V,E), \Phi)\) of \(D\) or a statement that no such tree realisation exists.
        \item[1.] Employ the algorithm of Culberson and Rudnicki 1989 \cite{Culberson1989} to obtain the unique minimum weighted tree realisation \((T'=(V',E',w), \Phi)\) of \(D\) with \(w: E' \to \R^+\) if it exists.
        Otherwise, return that no such tree realisation exists (by Proposition \ref{prop:treeequaltr}).
        \item[2.] If there exists an edge \(e \in E'\) with a non-integer weight \(w(e)\), return that no such tree realisation exists (by Proposition~\ref{prop:treeequaltr}).
        \item[3.] Construct the tree realisation \((T=(V,E), \Phi)\) of \(D\) by replacing each edge \(e \in E'\) with weight \(w(e) \in \N\) by an elementary path of length \(w(e)\) between its endpoints.
        \item[4.] Return the tree realisation \((T=(V,E), \Phi)\).
    \end{itemize}
\end{alg}
\medskip

\section{Conclusions} \label{sec:conclusions}

In this paper we introduced and studied the computational complexity of the problem
\textsc{$k$-Combinatorial Distance Matrix Realisation ($k$-CombDMR)}. 
Our main result is that \textsc{$k$-CombDMR} is NP-complete for every \(k \geq 3\).
We also provided polynomial time algorithms to solve \(0\)-\textsc{CombDMR}, \(1\)-\textsc{CombDMR} and \(2\)-\textsc{CombDMR}, which also produce a graph realisation of the distance matrix if it exists.
Furthermore, for tree realisations of a distance matrix, we showed that the problem \textsc{TreeCombDMR} is solvable in \(O(n^2)\) time and we provided an algorithm for constructing a tree realisation if it exists.

Possible future work includes (1) further constraints to the desired graph realisation, such as planarity, bi-partiteness, or other graph properties, and
(2) potential extensions of our results to directed graphs.
We could also (3) seek approximation algorithms for \textsc{$k$-CombDMR} or prove approximation hardness. 
Finally, we could (4) investigate the complexity of the weak version of \textsc{$k$-CombDMR} (as in Chung, Garrett and Graham 2001 \cite{Chung2001}), where the entries of the input distance matrix are simply lower bounds on the shortest path distances between vertices in the desired graph realisation.

\section{Acknowledgments}
We thank Magnus Bordewich for insightful comments regarding phylogenetics. 
Additionally, we extend our thanks to Tharsus Limited for the support they have provided to enable this research.








\bibliography{sn-bibliography}

\end{document}